\Crefname{figure}{Figure}{Figures}
\newtheorem{theorem}{Theorem}
\newtheorem{lemma}[theorem]{Lemma}
\theoremstyle{definition}  
\newtheorem{Definition}[theorem]{Definition}
\newtheorem{note}[theorem]{Remark} 
\newtheorem{remark}[theorem]{Remark}
\newcommand{\snMFE}{snMFE\xspace}
\newcommand{\symnMFE}{symmetry-naive MFE\xspace}
\newcommand{\SymnMFE}{Symmetry-naive MFE\xspace}
\newcommand{\PolySpi}{\ensuremath{\mathrm{Poly}(S,\pi)}\xspace}
\newcommand{\PolySpiPrime}{\ensuremath{\mathrm{Poly}(S,\pi')}\xspace}
\newcommand{\DGnosym}{\ensuremath{\overline{\Delta G}}}
\newcommand{\algorithmicbreak}{\textbf{break}}
\newcommand{\Break}{\State \algorithmicbreak}
\title{\Large An efficient algorithm to compute the minimum free energy\\ of interacting nucleic acid strands
}
\author{Ahmed Shalaby\thanks{Hamilton Institute and Department of Computer Science, Maynooth University, Ireland. Research supported by Science Foundation Ireland (SFI) under grant numbers 20/FFP-P/8843 and 18/ERCS/5746 and the European Union's European Research Council (ERC, Active-DNA, no 772766); European Innovation Council  (EIC, DISCO, No 101115422). Views and opinions expressed are however those of the author(s) only and do not necessarily reflect those of the European Union, European Research Council,  European Innovation Council or Science Foundation Ireland. Neither the European Union nor the granting authority can be held responsible for them.} 
\and Damien Woods\footnotemark[1]}
\date{}
\begin{document}
	\maketitle
	\pagenumbering{arabic}\pagestyle{plain}
	
\begin{abstract} \baselineskip=11pt 
	The information-encoding molecules RNA and DNA form a combinatorially large set of secondary structures through nucleic acid base pairing.  Thermodynamic prediction algorithms predict favoured, or minimum free energy (MFE), secondary structures, and can even assign an equilibrium probability to any particular structure via the partition function---a Boltzmann-weighted sum of the free energies of the exponentially large set of secondary structures.  Prediction is NP-hard in the presence pseudoknots---base pairings that violate a restricted planarity condition.  However, unpseudoknotted structures are amenable to dynamic programming-style problem decomposition:  for a single DNA/RNA strand there are polynomial time algorithms for MFE and partition function.  For multiple strands, the problem is significantly more complicated due to extra entropic penalties.  Dirks et al [SICOMP Review; 2007] showed that for multiple ($\mathcal{O}(1)$) strands, with $N$ bases, there is a polynomial time in $N$ partition function algorithm, however their technique did not generalise to  MFE which they left open. 
	
	We give the first polynomial time ($\mathcal{O}(N^4)$) algorithm   for unpseudoknotted multiple ($\mathcal{O}(1)$) strand MFE, answering the open problem from Dirks et al.  The challenge in computing MFE lies in considering the rotational symmetry of secondary structures, a global feature not immediately amenable to dynamic programming algorithms that rely on local subproblem decomposition.  Our proof has two main technical contributions:  First, a polynomial upper bound on the number of symmetric secondary structures that need to be considered when computing the rotational symmetry penalty. Second, that bound is leveraged by a backtracking algorithm to find the true MFE in an exponential space of contenders. 
	
	Our MFE algorithm has the same asymptotic run time as Dirks et al's partition function algorithm,  suggesting a reasonably efficient handling of the global problem of rotational symmetry, although ours has higher space complexity.  Finally, our algorithm also seems reasonably tight in terms of number of strands since Codon, Hajiaghayi and Thachuk [DNA27, 2021] have shown  that unpseudoknotted MFE is NP-hard for $\mathcal{O}(N)$ strands. 
\end{abstract}

\newcommand{\base}[1]{\ensuremath{\mathrm{#1}}\xspace}
\newcommand{\baseA}{\base{A}}
\newcommand{\baseT}{\base{T}}
\newcommand{\baseG}{\base{G}}
\newcommand{\baseC}{\base{C}}
\newcommand{\baseU}{\base{U}}

\section{Introduction}

The  {\em primary structure} of a DNA strand is simply a word   over the alphabet  $\{\baseA, \baseC, \baseG, \baseT\}$, or  $\{\baseA, \baseC, \baseG, \baseU\}$ for RNA.  
Bases may bond in pairs, \baseA binds to \baseT and \baseC binds to \baseG, and a set of such pairings for a strand is called a {\em secondary structure} as shown in {\cref{fig:sec struct}(a)}; typically each strand has exponentially many possible secondary structures.\footnote{A secondary structure, along with a set of experimental conditions,  induces one or more 3D structures called tertiary structures---a complication we will not be concerned with in this paper since, unlike proteins, it is fortunate that DNA/RNA interactions are sufficiently chemically simple that that somewhat elementary secondary structure model is sufficient for useful structural prediction.} 
Mainly, what practitioners care about are probabilities of a given secondary structure or class of secondary structures. 
For that, each secondary structure~$s$ has an associated, typically negative, real valued {\em free energy} $\Delta G(s)$, where more negative is deemed more favourable.  
Thus the most favourable is the secondary structure, or structures, with minimum free energy (MFE). 
More generally, the Boltzman distribution is  a probability distribution on secondary structures $s$ at chemical equilibrium:  
the probability of~$s$ is  $p(s) = \frac{1}{Z} \mathrm{e}^{- \Delta G(s)/k_\mathrm{B}T}  $ where $Z$ is a normalisation factor called the partition function: 
\begin{equation}\label{eq:pf}
	Z  = \sum_{s\in\Omega} \mathrm{e}^{- \Delta G(s)/k_\mathrm{B}T} 
\end{equation}
that is, an exponentially weighted sum of the free energies over the set~$\Omega$ of all secondary structures, 
where~$k_\mathrm{B}$ is Boltzmann's constant and $T$ is temperature in Kelvin. 

\begin{table}[t]
	\centering
	\begin{tabular}{ p{5cm}||p{5cm}|p{5cm}  }
		
		Input type& MFE & Partition function\\ 
		
		\hline\hline
		
		Single strand   &   $\mathcal{O}(N^4)$ \ \   \cite{zukeroptimal,zukerrna, nussinov1978algorithms, nussinov1980fast,waterman1986rapid}  &   $\mathcal{O}(N^4)$ \ \  \cite{mccaskill1990equilibrium}     \\  \hline
		Multiple strands, bounded,\newline i.e.~$c= \mathcal{O}(1)$ strands &  $\mathcal{O}(N^4 (c-1)!)$ \ \  {\bf [Theorem~\ref{thm:main}]}   &   $\mathcal{O}(N^4 (c-1)!)$  \ \ \  \cite{dirks2007thermodynamic}\footref{ft:N3}    \\ \hline
		Multiple strands, unbounded,\newline i.e.~$\mathcal{O}(N)$ strands &  APX-hard \ \   \cite{condon2021predicting}  &  Open problem   \\ \hline
	\end{tabular}\vspace{-1.5ex}
	\caption{\label{table}\small
		Some algorithmic results for MFE and partition function for unpseudoknoted\footref{ft:pseudoknot} nucleic acid systems.
		$N$ is the total number of bases of all strand(s) in the system (i.e.~sum of strand lengths).
		Results are shown for input being a single strand, 
		multiple strands bounded by a constant or unbounded/growing with~$N$. 
		Note that in the literature the polynomial is sometimes written to the power 3 (e.g.~$\mathcal{O}(N^3 ...)$), but this is for a restricted ``relaxation'' of the model.\footref{ft:N3}  
	}
\end{table}

Decades ago, the deep relationship between secondary structures and dynamic programming algorithms was established~\cite{zukeroptimal,zukerrna, nussinov1978algorithms, nussinov1980fast,waterman1986rapid, mccaskill1990equilibrium}.  
If a secondary structure can be drawn as a polymer graph  without edge crossings it is called unpseudoknotted  (\cref{fig:sec struct}(c)).
The earliest polynomial time algorithms were for single-stranded  unpseudoknotted secondary structures, with the absence of crossings allowing for planar decompositions of secondary structures that are suited  to dynamic programming techniques.\footnote{Exclusion of pseudoknots is usually founded on both modelling and algorithmic considerations. Energy models for pseudoknots are difficult to formulate due to the increased significance of geometric issues and tertiary interactions \cite{dirks2007thermodynamic}. If pseudoknots are permitted, it is known that the MFE prediction is NP-hard even for a single strand \cite{akutsu2000dynamic,lyngso2000pseudoknots,lyngso2000rna}.The first NP-hardness results \cite{akutsu2000dynamic,lyngso2000pseudoknots} used a simple energy model called the stacking model where only consecutive base pairs forming a stack contribute to the free energy of a secondary structure. These hardness results with relatively simple energy models, make it seems unlikely that the MFE prediction problem will be easier in the case of more complicated energy models~\cite{condon2021predicting}. But, dynamic programming algorithms are still possible for restricted classes of pseudoknots, for both MFE prediction \cite{rivas1999dynamic, uemura1999tree, chen2009n, jabbari2018knotty, reeder2004design} and partition function \cite{dirks2003partition, dirks2004algorithm}. \label{ft:pseudoknot}}  
For a single RNA/DNA strand, both MFE and partition function are computable in $\mathcal{O}(N^4)$ time (\cref{table}),  using the standard  energy model\footnote{This model is variably called the nearest neighbour model, the Turner model, and loop energy model. 
Versions of the model have been implemented in software suites such as NUPACK~\cite{dirks2007thermodynamic,dirks2004algorithm,fornace2020unified}, ViennaRNA~\cite{viennaRNA} and mfold~\cite{mfold}, for both RNA and DNA~\cite{santalucia1998unified,santa}.} that will be formally defined in \cref{sec:mfe}. 

Work in DNA computing~\cite{algoSST,squareRoot,cargoSorting,SIMDDNA,celltimer2017fernshulman,Chatterjee2017,seelig2006enzyme,zhang2011dynamic}, and nucleic acid nanotechnology more generally~\cite{geary2014single,woo2011programmable}, involves building molecular systems and structures with, to date, hundreds, and soon, thousands, of interacting strands, so there is a need for better algorithms for these multi-stranded `inverse design problems'~\cite{churkin2018design,nuad}.  
And, of course, biologists need to understand molecular structure in order to understand and predict molecular interactions. 
However, when there are multiple interacting strands, the situation becomes significantly more complicated than the single-stranded case for two reasons:  
First, for a secondary structure to be unpseudoknotted, it implies there should be {\em at least one} permutation of the strands without crossings on the polymer graph~\cite{dirks2007thermodynamic} (\cref{fig:sec struct}). 
Second, if strand types are repeated then 
so-called {\em rotational symmetries} (\cref{fig:sym}) arise that need to be accounted for in the model to match the underlying statistical mechanics\footnote{This fact from statistical mechanics is discussed in some papers~\cite{dirks2007thermodynamic,hofacker2012symmetric}, although we've not found its full derivation in the modern nucleic-acid algorithmic literature. We leave a first-principles derivation for future work.\label{ft:statmech}}, otherwise structures may be over- or undercounted, leading to incorrect probabilities in the Boltzmann distribution, in other words: incorrect predicted free energy of a secondary structure.


For multiple strands, albeit a constant number $c = \mathcal{O}(1)$, 
Dirks,  Bois, Schaeffer, Winfree and  
Pierce~\cite{dirks2007thermodynamic} gave a polynomial time partition function algorithm running in time $\mathcal{O}(N^4 (c-1)!)$.\footnote{We note that Dirks et al~\cite{dirks2007thermodynamic}, and others in field~\cite{zukeroptimal,zukerrna,condon2021predicting,waterman1986rapid,mccaskill1990equilibrium,boehmer2024rna}, often state the run time with 3  instead of 4 in the exponent (i.e. $\mathcal{O}(N^3)$ for single stranded and $\mathcal{O}(N^3 (c-1)!)$ for multistranded). This reduction comes from changing the standard energy model by putting some restrictions on the size of interior loops (\cref{fig:sec struct}), or by enforcing certain mild conditions on the energy parameters for the interior loops  \cite{lyngso1999fast,hofacker2012symmetric}, we do not assume these additional assumptions in this work.\label{ft:N3} Note that our time/space complexity could benefit from such model changes for example, via reduction of the upper bound in \cref{lem:polyub}}. 
The first problem goes away by simply assuming $c$ is a constant;
but if the number of strands is non-constant, in particular $c = \mathcal{O}(N)$, 
Codon, Hajiaghayi and Thachuk~\cite{condon2021predicting} showed MFE is NP-hard, and even APX-hard.\footnote{This hardness result holds whether or not rotational symmetries are accounted for in the energy model.} 
For the second, rotational symmetry, problem, in order to compute partition function, Dirks et al~\cite{dirks2007thermodynamic} found an algebraic link between the overcounting and the rotational symmetry correction problems, which allowed both to be solved simultaneously, aided by the exponential nature of the partition function.  
Surprisingly, that trick does not work for MFE:
Since MFE prediction is minimization-based, 
there is no secondary structure overcounting problem in MFE prediction---repeated secondary structures will not change the outcome of minimization, unlike the partition function which is summation-based.
Hence, the absence of the overcounting problem makes MFE prediction harder to solve, and was left open by Dirks et al~\cite{dirks2007thermodynamic}.
For the special case of two strands, Hofacker, Reidys, and Stadler~\cite{hofacker2012symmetric} gave an $\mathcal{O}(N^6)$  algorithm. In this paper, we propose an efficient solution to the $\mathcal{O}(1)$ strand MFE problem, the first that runs in polynomial time.

\subsection{Statement of main result}
Our main result is the following theorem, whose proof is in \cref{sec:analysis}: 

\begin{restatable}{theorem}{main} 
	\label{thm:main}
	There is an $\mathcal{O}(N^4(c-1)!)$ time and $\mathcal{O}(N^4)$ space algorithm for the 
	Minimum Free Energy unpseudoknotted secondary structure prediction problem, including rotational symmetry, 
	for a set of $c = \mathcal{O}(1)$ DNA or RNA strands of  total length  $N$ bases. 
\end{restatable}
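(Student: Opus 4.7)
The plan is to reduce to a fixed circular ordering, handle the symmetry-naive problem by standard nested-interval dynamic programming, and then recover the missing rotational symmetry correction via a bounded-enumeration backtracking step over the DP trace.

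First, any unpseudoknotted multi-strand secondary structure admits at least one crossing-free circular ordering of its strands, so the MFE is the minimum over the $(c-1)!$ distinct circular orderings $\pi$ of the MFE restricted to that ordering; handling each $\pi$ independently and taking the minimum immediately yields the $(c-1)!$ factor. It then suffices to produce, for a fixed $\pi$, an $\mathcal{O}(N^4)$ time, $\mathcal{O}(N^4)$ space algorithm. Writing $\Delta G(s) = \DGnosym(s) + k_{\mathrm{B}}T \ln R(s)$, where $\DGnosym(s)$ is the loop-decomposable part and $R(s)$ is the rotational order of $s$, I would first compute $\snMFE_{\pi} = \min_s \DGnosym(s)$ by a standard four-index nested-interval DP on the linearized polymer for $\pi$. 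If some optimizer of $\snMFE_{\pi}$ is asymmetric, $R(s)=1$, it is already the true MFE and we stop.

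The crux of the theorem is the case where every optimizer of $\snMFE_{\pi}$ is rotationally symmetric, so that the symmetry penalty $k_{\mathrm{B}}T \ln R$ may push the true optimum onto a structure with strictly larger $\DGnosym$ but smaller or vanishing correction. Since $R(s)$ must divide the number of repeated strand copies, $R(s) \le c$, and hence the correction is bounded by the constant $k_{\mathrm{B}}T \ln c$. Any contender $s$ for the true MFE therefore satisfies $\DGnosym(s) \le \snMFE_{\pi} + k_{\mathrm{B}}T \ln c$, and among these only the \emph{symmetric} structures require correction. The key structural lemma I would prove is that the number of rotationally symmetric structures lying in this constant-width $\DGnosym$ window is polynomial in $N$: for each admissible $R$ dividing $c$, symmetry forces the structure to be determined by a rotation-consistent fundamental domain of size $N/R$ together with the compatibility of cross-domain base pairs, and the constant-width energy constraint further restricts how many such domains survive. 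This is the step I expect to be the hardest, because the bound must be both tight enough to give a polynomial and constructive enough to drive an enumeration.

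Finally, I would implement a backtracking traversal of the filled DP table that, starting from the top-level cells, expands only those decompositions whose cumulative $\DGnosym$ stays within $[\snMFE_{\pi}, \snMFE_{\pi} + k_{\mathrm{B}}T \ln c]$ and whose partial structure is consistent with at least one target rotational order $R \mid c$. By the structural lemma, only polynomially many leaves are reached; at each leaf the true energy $\DGnosym(s) + k_{\mathrm{B}}T \ln R(s)$ is evaluated, and the overall minimum is returned (together with the best asymmetric structure encountered, whose $\DGnosym$ equals its $\Delta G$). Combined with the $\mathcal{O}(N^4)$ time and space of the underlying DP, this gives the $\mathcal{O}(N^4)$ per-ordering bound and hence the claimed $\mathcal{O}(N^4(c-1)!)$ time and $\mathcal{O}(N^4)$ space.
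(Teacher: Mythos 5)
Your overall architecture matches the paper's: loop over the $(c-1)!$ circular orderings, compute the symmetry-naive MFE by dynamic programming, stop if an asymmetric optimizer is found, and otherwise backtrack through structures in the energy window $[\mathrm{\snMFE}, \mathrm{\snMFE} + k_\mathrm{B}T\log v(\pi)]$ in increasing order of $\DGnosym$. However, your key structural lemma --- that the number of rotationally symmetric structures lying in this constant-width window is polynomial in $N$ --- is not what the paper proves, and as stated it is almost certainly false. A symmetric structure is determined by a fundamental domain of $N/R$ bases, but that domain can itself be rearranged in exponentially many near-isoenergetic ways, so the count of symmetric structures in a fixed-width window above \snMFE can scale exponentially with $N$ (this is essentially the Wuchty et al.\ phenomenon for suboptimal structures, which the paper explicitly cites as the reason a naive window-enumeration strategy fails). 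Your proposed route --- counting rotation-consistent fundamental domains filtered by the energy constraint --- does not close this gap, because the energy constraint does not reduce the count to polynomial.

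What the paper actually proves is a statement about \emph{sufficiency}, not cardinality: the level-ordered backtracking may halt after examining at most $\mathcal{U} = \mathcal{O}(N^2)$ symmetric structures, even though exponentially many may exist in the window. The mechanism is a pigeonhole over equivalence classes of symmetric structures: there are only $\mathcal{O}(N)$ distinct admissible symmetric backbone cuts (``pizza cuts'', \cref{lem:ub}) and $\mathcal{O}(N^2)$ distinct central internal loops (\cref{lem:ub2}), so after scanning $\mathcal{U}$ symmetric structures two of them must share a cut (or a central internal loop). The sandwich lemmas (\cref{lem:sand,lem:sand2}) then splice a slice of one into the other to \emph{construct} an asymmetric structure whose true free energy lies between their $\DGnosym$ values, and this immediately determines the true MFE. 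Without this slice-swapping construction (or some substitute argument that lets the enumeration terminate early), your backtracking step has no polynomial bound, and the claimed $\mathcal{O}(N^4)$ per-ordering running time does not follow. This is the missing idea, and it is precisely the technical core of the theorem.
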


In \cref{sec:analysis} we give a time-space trade-off for our result, by showing a variation of the algorithm runs in $\mathcal{O}(N^4 \log N (c-1)!)$ time but $\mathcal{O}(N^3)$ space.

We use the standard~\cite{dirks2007thermodynamic} definition of free energy (\cref{eq:DGss}) of multistranded unpseudoknotted secondary structures, which includes rotational symmetry, see \cref{sec:mfe} for formal definitions.  
We first give an extensive overview of the proof and paper structure, followed by future work.

\begin{figure}[t]
	\centering\includegraphics[width=0.9\textwidth]{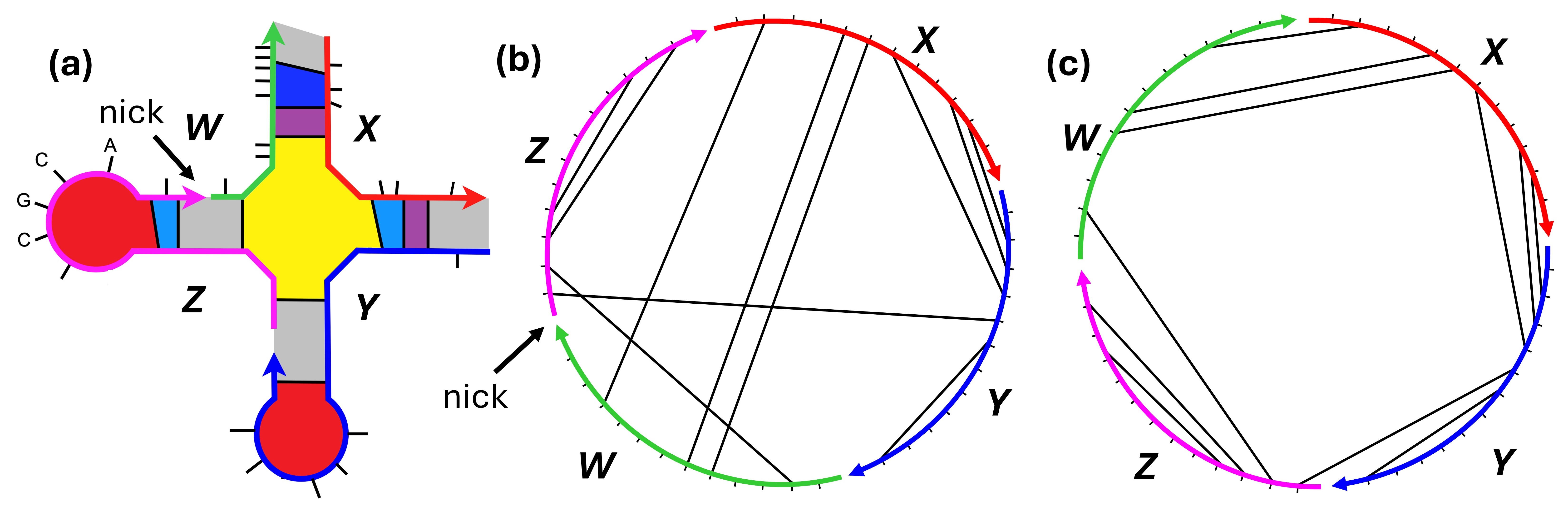}
	\caption{
		A  DNA (or RNA) secondary structure $S$ with $c=4$ strands and two of its $(c-1)!=6$ polymer graphs. 
		(a) One of the many possible secondary structures for four DNA strands $W,X,Y,Z$. 
		Short black lines represent DNA bases (a few are shown $\ldots \baseC, \baseG, \baseC, \baseA \ldots$), and long lines represent base pairs (drawing not to scale). 
		Loops are colour-coded as follows:  stack=purple, multiloop=yellow, hairpin=red, bulge=light blue, internal=dark blue, external=grey.    
		Black arrow: the small gap between two strands is called a {\em nick}. 
		(b)~Polymer graph for the strand ordering $\pi' = WZXY$, denoted \PolySpiPrime, showing base-pair crossings.
		(c)~By  reordering to $\pi = WXYZ$ we get another polymer graph \PolySpi for $S$, without crossings, hence  $S$ is unpseudoknotted. 
	}
	\label{fig:sec struct}
	
\end{figure}

\subsection{Proof overview and paper structure}\label{sec:intuition}

\subsubsection{The main challenge: handling rotational symmetry}

Typically, each DNA base pair that forms represent a decrease (improvement in favourability) in free energy---although not always. 
In a multi-stranded system, when several strands bind together the entropy of the overall system is decreased since there are now less states due to their being less free molecules.  
Thus the energy model for multistranded systems includes an entropic {\em association penalty} (typically positive) for every extra strand, beyond the first, bound into a multistranded molecular  complex~\cite{dirks2007thermodynamic}. 
However, statistical mechanics tells us to be careful about symmetry: with multiple identical strands in a complex it is possible that the complex is rotationally symmetric, 
intuitively  there are several complexes, identical up to rotation of their polymer graphs (\cref{fig:sym}).\footnote{Formally, we mean the permutation representing the complex is rotationally symmetric.} 
These so-called indistinguishable complexes, in turn imply that a (positive) penalty should be applied to account for the difference in entropy between a similar, but distinguishable, complex without  rotational symmetry~\cite{bormashenko2019entropy,atkins2023atkins,silbey2022physical,fornace2020unified}.\footref{ft:statmech} 
\cref{sec:mfe} gives definitions needed  to formalise these concepts, including: DNA,  
unpseudoknotted secondary structure, polymer graph,  
free energy including rotational symmetry (\cref{eq:DGss}) and MFE (\cref{eq:MFE}). In particular, \cref{sec:sym} gives a  group-theoretic definition of rotational symmetry, to help formalise some of the prior work.

\subsubsection{General approach to find the true MFE}
One obvious idea might be to find a dynamic programming algorithm that directly handles rotational symmetry. 
However, this approache suffers from rotational symmetry being a {\em global property} of an entire system state (secondary structure), whereas dynamic programming relies on piecing together subproblems that are individually unaware of the global context---or more precisely, may be used in multiple global contexts whether symmetric or not.  

Instead, our strategy is to first compute what we call the {\em \symnMFE} (\snMFE) that (incorrectly) assumes all strands are distinct and thus does not compute correct free energies for rotational symmetries. 
We use  Dirks et al's \snMFE algorithm~\cite{dirks2007thermodynamic},  that assumes all strands are distinct, but  augmented to return extra dynamic programming matrices (\cref{algo:1} in \cref{sec:AlgoMFE}). 
We use these extra matrices to compute the required symmetry correction to that \snMFE value using a backtracking algorithm, as  follows.

\subsubsection{Polynomial upper bound: intuition for \cref{sec:ub}} 
Our goal is to show that, after running our augmentation of the known algorithm for \snMFE, we have implicit access to a collection of secondary structures that are `not too far' from the true MFE---where by `not too far' we mean we have a polynomial bound on the number of structures to be  considered by another fast (``backtracking'') algorithm that finds the true MFE structure.

First, 
to see how we find this polynomial bound,  imagine  the augmented
\snMFE algorithm finds that the secondary structure with \snMFE is  rotationally {\em  asymmetric}, hence we are done, we know that the \snMFE value is in fact the true MFE. 
Otherwise, we have a {\em rotationally symmetric} secondary structure: ideally we would like to compute it's rotational symmetry degree $R$ (takes linear time in the size of the secondary structure) and then return  $\mathrm{\snMFE} + k_\mathrm{B} T \log R$ as the true MFE, but this approach is doomed to fail since there could also be structures with lower true MFE, i.e.~in the real interval $[\mathrm{\snMFE}, \ \mathrm{\snMFE} + k_\mathrm{B} T \log R) \subset \mathbb{R}$.

Leveraging the two properties of being (a) unpseudoknotted and (b) rotationally symmetric, in \cref{sec:ub} we define a class of cuts of a structure's polymer graph (\cref{fig:sec struct}) that we call {\em pizza cuts}, or, more formally, \emph{admissible symmetric backbone cuts} (\cref{def:admissible}). 
These cuts are radially symmetric, hence the name pizza cut---how one slices a pizza from disk-edge to centre.
In \cref{lem:ub,lem:ub2}, we show that there are at most a polynomial number of pizza cuts that symmetric structures may have.

Then, when we do a backtracking-based search (below), 
through the dynamic programming matrices from the structure(s) with \snMFE, to larger free energies:  
if we find two different symmetric pizzas, but with the same pizza cuts,  we make a new pizza, by swapping a slice from one with a slice from the other.  
We prove that the new pizza is (a) guaranteed to be asymmetric and (b)~has free energy sandwiched between the {\snMFE} values of  two symmetric structures  (\cref{lem:sand,lem:ub2}). 
Moreover, this new structure's free energy is the true MFE. 
Otherwise we either find a symmetric structure (we output its naive free energy as the true MFE), or we reach a contradiction (i.e.~which can not happen) by reaching the polynomial bound having exhausted the set of all \emph{admissible symmetric backbone cuts}. In all cases the true MFE and its structure are output.

\subsubsection{Backtracking to find the true MFE: intuition for \cref{sec:BT}} 

It remains to show how we will do the backtracking search mentioned above. 
In \cref{sec:BT} we analyse the  backtracking algorithm, which is given in \cref{algo:2} in \cref{app:backalgo}, and  
is a polynomial time  algorithm over the exponentially large set of structures `close' to the true MFE value. 
It scans all secondary structures within an energy level starting with the \symnMFE (\snMFE) energy level, it goes on to sequentially scan higher levels in low-to-high order. 
The scanning process at any energy level $\mathcal{E}$  guarantees that each secondary structure that belongs to $\mathcal{E}$ should be scanned exactly once. 

The backtracking algorithm will run until one of the following conditions occurs:
(1) It scans an asymmetric secondary structure $S$, or 
(2) it exceeds the polynomial upper bound $\mathcal{U}$ of the number of symmetric secondary structures (i.e. the number of distinct  pizza cuts) to be scanned, or 
(3) the backtracking will start scanning a new energy level $\mathcal{E}' > \mathcal{B}$, where $\mathcal{B}$ is the current best candidate for MFE (the starting value for $\mathcal{B}$ is  $\mathcal{B} = \mathrm{\snMFE} + k_\mathrm{B} \log v(\pi)$ where $v(\pi)$ is the highest degree of rotational symmetry, \cref{def:sym}). 
Then, based on the condition that will occur, the algorithm directly returns the true MFE, and a secondary structure which has the true MFE will also be constructed. 
The short proof of \cref{thm:main} in \cref{sec:analysis} ties these results together to give the final analysis of our main result.

\subsection{Future work}
Our algorithm runs in polynomial time $\mathcal{O}(N^4 (c-1)!)$ for the case of $c=\mathcal{O}(1)$ strands, the $(c-1)!$ term coming from the fact that our algorithm, as well as Dirks et al~\cite{dirks2007thermodynamic}, is assumed to be called from an outer  loop that explicitly tries all $(c-1)!$ cyclic strand permutations. 
Can we increase the number of strands and still have a polynomial time algorithm? We know ``not by much'', since the problem is NP-complete when $c=\mathcal{O}(N)$~\cite{condon2021predicting}. 
Interestingly, Boehmer, Berkemer, Will, and Ponty~\cite{boehmer2024rna} recently reduced the $c$-strand parameterized running time for computing \symnMFE (i.e.~ignoring rotational symmetry) and partition function from factorial, $\mathcal{O}(N^4 (c-1)!)$, to exponential, $\mathcal{O}(N^4  3^c)$.\footref{ft:N3}  
It is feasible that our MFE algorithm could be augmented via this result.

Our MFE algorithm exploits a polynomial upper bound, $\mathcal{U}$, on the number of so-called symmetric secondary structures, or distinct  pizza cuts.  That bound is linear in ``most'' cases (\cref{lem:even}), but quadratic in one special subcase (\cref{lem:ub2}) of 2-fold rotational symmetry with a central internal loop. Reducing  that special case to linear would subtract one from  our algorithm's running time exponent.

\cref{table} shows the open problem for partition function on multiple strands. Intuitively, it seems that  partition function should be at least as hard as MFE, however that intuition is tempered by the fact that Dirks et al's approach for partition function did not carry over to MFE, hence this is an open problem. Indeed, our paper brings extra techniques to handle multi-stranded MFE.   

More generally, the computational complexity of partition function for DNA/RNA strands is less well understood than MFE. 
For example, are there settings where partition function, or problems counting numbers of structures, are {\#}P-complete?  

\section{Definition of multi-stranded DNA systems and basic lemmas}\label{sec:mfe}

Intuitively, a single DNA strand $s$ is a sequence of nucleotide bases connected by covalent bonds which together make up the backbone of $s$, with the left end of the sequence corresponding to the $5'$ end of $s$ and the right end corresponding to the $3'$ end. When drawing $s$ we label  the $3'$ end with an arrow which also shows the strand directionality, see \cref{fig:sec struct}. Hydrogen bonds can form between Watson-Crick base pairs, namely C–G and A–T.

Formally, A DNA strand $s$ is a word over the alphabet of DNA {\em bases} $\{\mathrm{A},\mathrm{T},\mathrm{G},\mathrm{C}\}$, indexed from 1 to $|s|$, where $|s|$ denotes the length of $s$.
A base pair is a tuple $(i, j)$ such that $i<j$. 
For any $c$ strands, we will assign to each of them a unique distinct identifier in $\{1, . . . ,c\}$~\cite{dirks2007thermodynamic}. Each base is specified by a strand identifier and a position on that strand, $i_s$ denotes the base of index $i$ of strand $s$.

\subsection{Connected unpseudoknotted secondary structures and  polymer graphs}

\begin{Definition}[Secondary structure $S$]
	For any set of $c$ DNA strands, a secondary structure $S$ is a set of base pairs such that each base appears in at most one pair, i.e.~if $(i_n, j_m)\in S$ and $(k_q, l_r)\in S$ then $i_n,j_m,k_q,l_r$ are all distinct.
\end{Definition}
The {\em graph representation of a  secondary structure} $S$
is the graph $G=(V,E)$, where $V$ is the set of bases of each strand $s \in \{1, . . . ,c\}$, and $E = E_v \cup E_b$, where $E_v$ is the set of \emph{covalent backbone bonds} connecting base $i_n$ with base $(i+1)_n$ for all bases $i = 1,2, ..., |n|-1$ on all strands $n \in \{1, . . . ,c\}$, and $E_b = S$ is the set of base pairs in $S$. $E_v$ and $E_b$ are disjoint.

The set of circular permutations, $\Pi$, of $c$ strands has $(c-1)!$ distinct circular permutations~\cite{brualdi1977introductory} (e.g., for the three strands $\{A, B, C\}$, $\Pi = \{A B C, ACB\}$), 
e.g., the orderings $ABC$, $BCA$, and $CAB$ are the same on a circle. 
Next, we define a  polymer graph for each $\pi$, see also  \cref{fig:sec struct}.

\begin{Definition}[Polymer graph]
	For any secondary structure $S$, and any ordering $\pi$ of its $c$ strands, the polymer graph representation of $S$,  denoted  \PolySpi, is a graph representation of $S$, embedded in the unit disk from $\mathbb{R}^2$, where the $c$ strands are placed in succession from their $5'$ to $3'$ ends around the circumference of the circle, and the bases, $V$, are spaced evenly around the circle circumference, each element of $E_v$ is represented by an arc on the circumference between covalently-bonded bases, and each element of $E_b$ is represented by a chord between two different bases. 
\end{Definition}

\begin{Definition}[Unpseudoknotted secondary structure]
	A secondary structure $S$ is unpseudoknotted if there exists at least one circular permutation $\pi \in \Pi$ such that $\PolySpi$ is planar, otherwise $S$ is pseudoknotted. 
	An example is shown in \cref{fig:sec struct}.
\end{Definition}

\begin{remark}
	In the rest of the paper we use $N$ to denote the total number of bases of a secondary structure $S$.
	A secondary structure $S$ is connected if the graph representation of $S$ is a connected graph. In this work, we are only interested in connected unpseudoknotted secondary structures.
\end{remark}

\begin{figure}[t]
	\centering\includegraphics[width=0.7\textwidth]{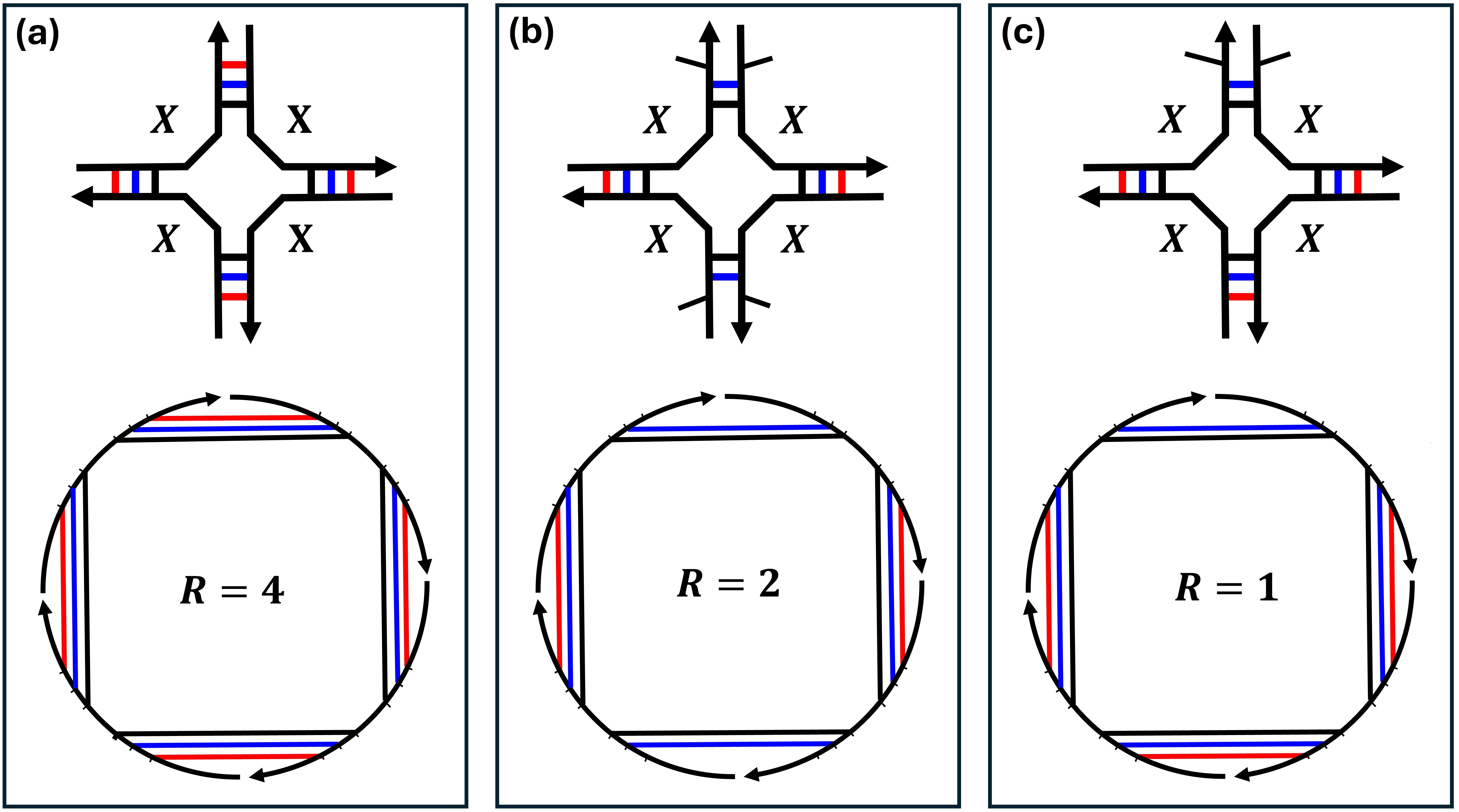}
	
	\caption{Three secondary structures  with their associated polymer graphs. In each case, there is a single complex with four identical (indistinguishable) strands of  strands of type $X$, but with different symmetry degree $R$. 
		(a)~Symmetry degree $R$ = 4 (rotation by $90^{\circ}$  gives the same secondary structure). 
		(b)~Symmetry degree  $R$ = 2 (rotation by $180^{\circ}$  gives the same secondary structure). 
		(c)~Symmetry degree  $R$ = 1 (asymmetric secondary structure).
	}\label{fig:sym}
\end{figure}

\subsection{Free energy of a secondary structure}
Any connected unpseudoknotted secondary structure $S$ can be decomposed into different loop types~\cite{tinoco,santa,mathews1999expanded}: namely hairpin loops, interior loops, exterior loops, stacks, bulges, and multiloops as shown in \cref{fig:sec struct}. 
As usual, let  $k_B$ be Boltzmann's constant and $T$ is the temperature in Kelvin (also a constant).\footnote{All results hold if we assume these are typical values from physics, or just 1 in appropriate units.} 
The free energy of $S$ is defined as the sum of three terms:\footnote{Throughout this paper $\log n = \log_\mathrm{e} n$.}
\begin{equation}\label{eq:DGss}
	\Delta G(S) =  \sum_{l\in S} \Delta G(l) + (c-1)\Delta G^{\textrm{assoc}} + k_\mathrm{B} T \log R.
\end{equation}  
\begin{itemize}
	\item the first is itself the sum of the (well-defined, empirically-obtained) free energies $\Delta G(l)$ of $S$'s constituent loops~\cite{dirks2007thermodynamic}, where each loop energy is defined with respect to the free energy of the unpaired reference state.
	\item $\Delta G^{\textrm{assoc}} $ is the entropic association~\cite{dirks2007thermodynamic} penalty applied for each of the $c-1$ strands added to the first strand to form a complex of $c$ strands.
	\item $R$ is the rotational symmetry of the secondary structure $S$,  illustrated in \cref{fig:sym}, and to be formally defined  in \cref{sec:sym}.  In particular, since favourable free energies are usually negative, the term  $k_\mathrm{B} T \log R \geq 0$ corresponds to the reduction in the entropic contribution of $S$, as any secondary structure with an $R$-fold rotational symmetry has a corresponding $R$-fold reduction in its distinguishable conformational space~\cite{dirks2007thermodynamic} as shown in \cref{fig:sym}.\footnote{This is perhaps counter-intuitive. In a follow-up expanded version of this paper we will give a full statistical mechanics explanation, which treats the symmetry penalty to offset the fact that non-symmetrical structures are undercounted.} Dynamic programming algorithms to date for mulit-stranded MFE ignored this term for reasons we outline in \cref{sec:sym}.
\end{itemize}

For $c$ strands, we let $\Omega$ be the set  (usually called the {\em ensemble}) of all  connected unpseudoknotted secondary structures. 
For any circular permutation $\pi \in \Pi$ of the $c$ strands, let $\Omega(\pi) \subseteq \Omega$ be the subset of $\Omega$ such that each connected unpseudoknotted secondary structure $S\in \Omega(\pi)$ is representable as a crossing-free polymer graph with circular permutation $\pi$. 

\begin{remark}[$S$, or $\PolySpi$]\label{rm:PolySpi}
	Dirks et al.~\cite{dirks2007thermodynamic} showed, in their representation theorem (Theorem 2.1), that the sets $\Omega(\pi)$, for all $\pi \in \Pi$, form a partitioning of $\Omega$, which means that every connected unpseudoknotted secondary structure belongs to exactly one $\Omega(\pi)$ for some $\pi \in \Pi$. 
	Hence to avoid the cumbersome phrase 
	{\em $c$-strand connected unpseudoknotted secondary structure $S$ with strand ordering $\pi$ and polymer graph $\PolySpi$}
	we simply write $S$, or $\PolySpi$.
\end{remark}

Predicting the minimum free energy means finding a minimum over the ensemble $\Omega$. 
The known strategy is to deal with each partition $\Omega(\pi)$ separately,  
then finding their minimum:  
\begin{equation}\label{eq:MFE}
	\textrm{MFE} = \min_{S \in \Omega} \Delta G(S)  =\min\limits_{\pi \in \Pi} \left\{ \min_{S \in \Omega(\pi)} \Delta G(S) \right\}  
\end{equation}

\subsection{Definition of multi-stranded rotational symmetry} \label{sec:sym}

Here, we formalise rotational symmetry. 
In the previous section we assigned each one of the $c$ strands a unique identifier, dealing with them as distinct strands even if two or more have the same sequence. 
But in most experimental settings, strands with the same sequences are \emph{indistinguishable}   in the sense that they behave identically with respect to relevant measurable quantities~\cite{dirks2007thermodynamic}. 
Mathematically, we say that {\em two strands are  indistinguishable} if they have the same sequence. 
Also, two \emph{secondary structures are indistinguishable} if there exists a permutation of the implied unique strand ordering (\cref{rm:PolySpi}), 
that maps indistinguishable strands onto each other while preserving all base pairs, otherwise, the two structures are distinct~\cite{dirks2007thermodynamic}.

For any $c$ strands, not necessarily distinct, they consist of $k \leq c$ \emph{strand types}, usually denoted by uppercase English letters $X,Y,\ldots$\footnote{In contrast with some of the literature. we exclude using $\{\mathrm{A},\mathrm{T},\mathrm{G},\mathrm{C}\}$ for strand types; to avoid any confusion between strand and base types.}
A {\em multi-stranded DNA system} 
$M = \{(t_1,n_1),(t_2,n_2), ..,(t_k,n_k)\}$,  
is a \emph{multiset} of $k$ strand types $t_1,  ..., t_k$ with repetition numbers 
$n_1,  ..., n_k \in \mathbb{N}$ such that $n_1+ ...+n_k = c$.\footnote{It is known~\cite{sawada2003fast} how to efficiently reduce the circular permutation space by getting rid of  circular permutations that are redundant due to indistinguishable strands, which is important to consider when computing the  partition function but needed for MFE.} 
For such a multiset $M$
we can think of each circular permutation $\pi$  as a string over strand types such that each strand type $t_i$ appears exactly $n_i$ times (e.g., $M = \{(X,6),(Z,3)\}$ one valid $\pi$ is $\pi = XZXXZXXZX$).

\begin{Definition}[Symmetry degree of a permutation] \label{def:sym}
	For any circular permutation $\pi$, we say $n \in \mathbb{N}$ is a symmetry degree of $\pi$ if $\pi = y^n$ for some $y$, a prefix of $\pi$. 
\end{Definition}

\noindent For example, $\{1,2,4\}$ are the symmetry degrees of $\pi = XZXZXZXZ$ since $\pi = (XZXZXZXZ)^1 = (XZXZ)^2 = (XZ)^4$.

For any circular permutation $\pi$, 
its maximum symmetry degree is denoted $v(\pi)$, and the corresponding repeating prefix $x$, such that $x^{v(\pi)} = \pi$, is the \emph{fundamental component} of $\pi$. 
It can be seen that $x$ is the smallest prefix that repeats over $\pi$. 
Indeed, $v(\pi)$ is the number of cyclic permutations that map each strand to a strand of the same type. 
Any repeating prefix of $\pi$ must be a multiple of its fundamental component, as proven in \cref{lem:factors} in \cref{sec:lemmasApp}.

\begin{remark}[Notation: $X_m^n$]
	For any circular permutation $\pi$, 
	its \emph{augmented version}  gives the full ordering information for each fundamental component. 
	For example, 
	if $\pi = XYXZ \, XYXZ$, then its fundamental component is $XYXZ$ and its augmented version is 
	$ X_1^1 Y_1^1 X_2^1 Z_1^1  \; X_1^2 Y_1^2 X_2^2 Z_1^2$, such that  $X_m^n$, means the $m$th strand of type $X$ in the $n$th fundamental component of $\pi$.   
\end{remark}

We can visualize any ordering $\pi$ by representing it as a \emph{regular $v(\pi)$-gon} with each of its $v(\pi)$ vertices   representing a fundamental component. 
Let $\rho = (1 \ 2 \ 3 \ ... \ v(\pi))$\footnote{Here we use algebraic cycle notation.  
	The order of $\rho$, denoted by $o(\rho)$, is the length of $\rho$ which is $v(\pi)$.} and consider the cyclic group\footnote{$G^\pi$ is isomorphic to $C_{v(\pi)}$, cyclic group of order $v(\pi)$.} $G^\pi$ generated by $\rho$. Intuitively, $G^\pi$ is the group of the all $v(\pi)$ rotational motions in plane of the regular $v(\pi)$-gon that give the same $v(\pi)$-gon. 
We can represent  $G^\pi$ as follows:  $G^\pi = \{\rho^0, \rho^1 ...,\rho^{v(\pi)-1}\}$, where $\rho^i$ represents rotation of the regular $v(\pi)$-gon by the angle of  
$i \times {\frac{360^{\circ}}{v(\pi)}}$, 
where  $|G^\pi| = v(\pi)$.

Now, we are ready to define the rotational symmetry of a secondary structure and a strand ordering $\pi$, intuitively, the number of rotations of its polymer graph that give the same polymer graph, as shown in \cref{fig:sym}.

\begin{Definition}[$R$-fold rotational symmetric structure]  \label{def:sym}
	A connected unpseudoknotted secondary structure $S$ and strand ordering $\pi$ 
	(and thus polymer graph $\PolySpi$) 
	is $R$-fold rotational symmetric, 
	or simply rotationally symmetric, 
	then for any base pair $(i,j)$ in the polymer graph $\PolySpi$
	the rotation of that base pair by multiples of $(360^\circ/R)$ is also in $\PolySpi$. 
	More formally:   
	$(i_{X_k^l},j_{Y_m^n}) \in \PolySpi$,   
	iff 
	$(i_{X_k^{a(l)}}, j_{Y_m^{a(n)}}) \in \PolySpi$ for all $a \in H \! \leq \! G^\pi$, where $H$ is the largest subgroup\footnote{$H \leq G$, notationally means $H$ is a subgroup of $G$. Every subgroup of cyclic group is also cyclic.} satisfying the condition, and if $|H| = R$.  
\end{Definition}

\begin{remark}
	In Def. \ref{def:sym},	we restricted $H$ to be the {\em largest} subgroup so as to be aligned with the entropic reduction penalty due to symmetry that appears in \cref{eq:DGss}, even if  from a geometric perspective any $R$-fold rotational symmetric secondary structure should be also $R'$-fold rotational symmetric if $R'$ divides $R$. 
\end{remark}

\section{A polynomial upper bound on a class of rotationally symmetric secondary structures}\label{sec:ub}

\begin{remark}\label{rm:noNicks}
	In this section, we assume a global indexing of all bases from $1$ to $N$, and use square brackets, $[i, i+1]$, to  denote the covalent bond connecting bases $i$ and $i +1$, given that both belong to the same strand. This notation also helps to differentiate covalent bonds $[i, i+1]$ from base pair $(j,k)$ (hydrogen bonds)  notation. 
\end{remark}

\begin{Definition}[$R$-symmetric backbone cut generated by a covalent bond]\label{def:cut}
	For  a connected unpseudoknotted secondary structure $S$ and strand ordering $\pi$, the {\em $R$-symmetric backbone cut generated by the covalent bond} $b=[i_{A_m^n}, (i+1)_{A_m^n}]$ is $\mathcal{C}_R^b = \{ [i_{A_m^{a(n)}}, ({i+1})_{A_{m}^{a(n)}}]: \textrm{ for all } a \in H \!\!\leq \!\! \!\ G^\pi \textrm{ such that } |H| = R \}$. We call $b$ a {\em symmetric backbone cut generator}. An example is shown in \cref{fig:sand}.
\end{Definition}

Only one covalent bond is needed to generate its corresponding $R$-symmetric backbone cut. Also, note that this definition excludes any cut through nicks by \cref{rm:noNicks}. The next lemma shows that the number of unique symmetric backbone cuts is linear in $N$.

\subsection{Linear upper bound on number of unique symmetric backbone cuts}

\begin{lemma}[Upper bound on unique symmetric backbone cuts]\label{lem:ub}
	For any connected unpseudoknotted secondary structure $S$ of $c=\mathcal{O}(1)$ strands with $N$ total bases, with a specific strand ordering $\pi$, the number of  unique symmetric backbone cuts is $\frac{N-c}{v(\pi)} \left[ \sigma(v(\pi))-v(\pi) \right] = \mathcal{O}(N)$, where $\sigma(v(\pi))$ is sum of divisors of $v(\pi)$. 
\end{lemma}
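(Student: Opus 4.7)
The plan is to use that symmetric backbone cuts are orbits of covalent bonds under subgroups of the cyclic rotation group $G^\pi$, and to count them by summing over divisors of $v(\pi)$.

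First I would count the covalent bonds: there are $N$ bases split into $c$ strands, and \cref{rm:noNicks} excludes the $c$ inter-strand nicks, so there are exactly $N-c$ covalent bonds. Next I parameterise cuts by subgroup size $R$. Because $G^\pi$ is cyclic of order $v(\pi)$, it contains exactly one subgroup $H_R$ of each order $R \mid v(\pi)$, and any cut $\mathcal{C}_R^b$ has cardinality dividing $R$, so cuts arising from subgroups of different orders cannot coincide. Restricting to non-trivial symmetric cuts ($R \geq 2$, which matches the formula since $R = 1$ corresponds under the bijection $R \leftrightarrow v(\pi)/R$ to the excluded divisor $d = v(\pi)$), the total count is $\sum_{R \mid v(\pi),\, R \geq 2}\#\{H_R\text{-orbits on covalent bonds}\}$.

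The key technical step is a freeness claim: for every non-trivial $H \leq G^\pi$ the action of $H$ on the $N-c$ covalent bonds is fixed-point free, so every orbit has size exactly $|H|$. To see this, I would invoke the structural fact, recorded after \cref{def:sym}, that $\pi = x^{v(\pi)}$ with $x$ the smallest repeating prefix. Each strand lies entirely within one fundamental component, so every covalent bond does too; any non-identity rotation $\rho^i \in G^\pi$ cyclically shifts fundamental components by $i \not\equiv 0 \pmod{v(\pi)}$ positions, and thus sends every covalent bond to a bond in a different fundamental component. Hence for each divisor $R \geq 2$ there are exactly $(N-c)/R$ distinct $R$-symmetric backbone cuts.

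Finally, summing over $R$ and applying the divisor bijection $R \leftrightarrow v(\pi)/R$,
\[
\sum_{R \mid v(\pi),\, R \geq 2} \frac{N-c}{R} \;=\; \frac{N-c}{v(\pi)}\!\!\sum_{d \mid v(\pi),\, d \neq v(\pi)}\!\! d \;=\; \frac{N-c}{v(\pi)}\bigl[\sigma(v(\pi))-v(\pi)\bigr].
\]
Since $c = \mathcal{O}(1)$, we have $v(\pi) \leq c$ and hence $\sigma(v(\pi))$ is a constant, giving the $\mathcal{O}(N)$ upper bound. The only non-routine step is verifying the freeness of the $H$-action on covalent bonds; after that, everything is combinatorial bookkeeping plus the standard divisor-sum identity $\sum_{R \mid n} 1/R = \sigma(n)/n$.
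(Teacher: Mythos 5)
Your proposal is correct and follows essentially the same route as the paper's proof: count the $N-c$ covalent bonds, observe that the $R$-symmetric cuts partition them into classes of size $R$ for each divisor $R\geq 2$ of $v(\pi)$, and evaluate the divisor sum via the pairing $d\,d'=v(\pi)$. Your explicit freeness argument (each bond lies in a single fundamental component, so non-identity rotations move every bond) is a slightly more careful justification of the orbit-size claim that the paper asserts implicitly, but it is the same underlying counting.
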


\begin{proof}
	In general, any covalent bond is a potential candidate for a symmetric backbone cut generator. For secondary structure $S$ with $N$ bases and $c$ strands, there are $N-c$ covalent bonds  in $S$ by excluding all nicks.  
	We need to compute the total number of all possible  symmetric backbone cuts for every possible symmetry degree $R$. 
	Given a specific $R$, the number of    $R$-symmetric backbone cuts is $\frac{N-c}{R}$, 
	since for any covalent bond $x$ in a cut $\mathcal{C}_R^b$, we have $\mathcal{C}_R^x = \mathcal{C}_R^b$ (all bonds in that cut generate the same cut). 
	And, hence for any two covalent bonds $x$ and $y$, either $\mathcal{C}_R^x = \mathcal{C}_R^y$ or they are disjoint.   
	
	Because of symmetry (see \cref{lem:div}),  $R>1$ and $R$ divides $v(\pi)$ (denoted $(R\neq1)|v(\pi)$ below). 
	Assume that $d_1, d_2, ..., v(\pi)$ are divisors of $v(\pi)$ such that $d_i \neq 1$, 
	and since divisors happen in pairs ($d_i d_i' = v(\pi)$), 
	then the total number of symmetric backbone cuts is 	\begin{align*}
		\sum\limits_{(R\neq1)|v(\pi)}\frac{N-c}{R} &= 
		(N-c) \sum\limits_{(R\neq1)|v(\pi)}\frac{1}{R} \\
		&= (N-c) \left[ \frac{1}{d_1} + \frac{1}{d_2} + ... + \frac{1}{v(\pi)} \right] 
		\\
		&= (N-c) \left[ \frac{d'_1}{d_1d'_1} + \frac{d'_2}{d_2d'_2} + ... + \frac{1}{v(\pi)} \right] \\
		&= (N-c) \left[ \frac{d'_1 + d'_2 + .... + 1}{v(\pi)}\right]
		\\ 
		&=\frac{N-c}{v(\pi)} \left[ \sigma(v(\pi))-v(\pi) \right]
	\end{align*}
	which is $\mathcal{O}(N)$ since $|\pi|=c=\mathcal{O}(1)$ (i.e. number of strands $c=\mathcal{O}(1)$).
\end{proof}

If $S$ is a connected unpseudoknotted secondary structure with ordering $\pi$, you can go from any base $i$ to $j$ in two different paths around the circumference of $\PolySpi$ (clockwise or anticlockwise). We define the \emph{length} function $l[i,j]$ to be the length of the shorter path, including both $i$ and $j$ as follows: 
\begin{equation}
	l[i,j] = \min \{|i-j|+1,N-|i-j|+1\}
\end{equation} 

Also,  $\llbracket i,j \rrbracket$  is used to denote that shorter {\em segment} of length $l[i,j]$, where the direction from base $i$ to base $j$ is the same as the system strands' direction.    

\subsection{How to slice a pizza (secondary structure)}

{\bf We} want to slice any $R$-fold rotational symmetric secondary structure $S$, {\bf like pizza}, to the centre of its $\PolySpi$, without intersecting any of its base pairs. First, we formalise (\cref{def:admissible}) a special type of backbone cut, called an \emph{admissible  $R$-symmetric backbone cut}. Then, we will prove (\cref{lem:cutexist}) its existence for $S$. 

\begin{Definition}[Admissible $R$-symmetric backbone cut] \label{def:admissible}
	For any connected unpseudoknotted secondary structure $S$ with strand ordering $\pi$, the {\em $R$-symmetric backbone cut} $\mathcal{C}_R^b$ generated by $b$ is {\em admissible}, 
	if for all covalent bonds $x \in \mathcal{C}_R^b$,  
	$x$ is not ``enclosed'' by any base pair $(i,j) \in \PolySpi$, more formally:  $x \nsubseteq \llbracket i,j \rrbracket$.   An example is shown in \cref{fig:sand}.
\end{Definition}

\begin{restatable}{lemma}{cutexist} 
	\label{lem:cutexist}
	For any $R$-fold rotationally symmetric secondary structure $S$, there exists at least one admissible $R$-symmetric backbone cut of $S$.
\end{restatable}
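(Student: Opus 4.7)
The plan is to leverage the $R$-fold rotational symmetry of $\PolySpi$ on the unit disk, together with the connectedness and unpseudoknottedness of $S$, to locate an admissible backbone cut. Let $O$ denote the centre of the disk and $\rho$ the rotation of $\PolySpi$ by $2\pi/R$; since $R>1$, $\rho$ is a nontrivial automorphism of $\PolySpi$ fixing $O$.

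First, I would show that no base pair chord can strictly enclose $O$ in its smaller cap. If some chord $(i,j)$ did, then applying $\rho,\rho^2,\dots,\rho^{R-1}$ would produce $R$ distinct chords each strictly enclosing $O$; by unpseudoknottedness these chords are pairwise non-crossing, so their small caps (all containing $O$) are totally ordered by inclusion, and the cyclic action of $\rho$ combined with $\rho^R=\mathrm{id}$ forces $\mathrm{smallcap}(i,j)\subsetneq\mathrm{smallcap}(i,j)$, a contradiction. The only residual possibility, a ``diameter'' chord pairing $i$ with $i+N/2$ fixed by $\rho$ in the $R=2$ case, is excluded by Watson--Crick complementarity: such a pair would force a base to complement a base of the same strand type (under the strand indistinguishability implied by $R=2$ symmetry), which is impossible. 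Hence $O$ lies on the longer-arc side of every chord, and the unique connected region $F$ of the disk (after chord removal) containing $O$ is well-defined and $\rho$-invariant.

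The crux is then to show that the arc-segments of $\partial F$ (pieces of the circle between consecutive chord endpoints of pairs bounding $F$) contain at least one covalent bond. This step uses the connectedness of $S$ essentially. Suppose for contradiction that every arc-segment of $\partial F$ consists solely of nick-separated bases. Consider the graph representation of $S$. Each chord on $\partial F$ cuts off a sub-structure (the sector on its small-cap side) which is closed under the laminar nesting induced by unpseudoknottedness, so no pair-edges of $S$ cross between different such sectors. The only possible graph-edges between different sectors are bond-edges lying on arc-segments of $\partial F$, which by hypothesis do not exist. Hence the graph of $S$ splits into disjoint connected components, one per $\partial F$ chord together with its attached sector, contradicting the assumed connectedness of $S$. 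Therefore some arc-segment of $\partial F$ contains two consecutive bases of the same strand, giving a covalent bond $b$ lying on $\partial F$.

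Given such $b$, the orbit $\mathcal{C}_R^b = \{b,\rho(b),\dots,\rho^{R-1}(b)\}$ lies entirely on $\partial F$ by $\rho$-invariance of $F$. Each orbit element is therefore on the longer-arc side of every chord, so cannot be contained in $\llbracket i,j\rrbracket$ for any pair $(i,j)$. Thus $\mathcal{C}_R^b$ is admissible in the sense of \cref{def:admissible}, proving the lemma. The main obstacle is making the ``disconnection'' argument fully rigorous: one must carefully account for chord endpoints shared between adjacent $\partial F$-arcs and for how nested sub-structures could in principle provide alternative paths, and then formalise that the absence of any bond on an arc-segment of $\partial F$, together with the laminar nesting of chords, genuinely forces sector-disjointness of the graph of $S$.
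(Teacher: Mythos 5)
Your proof is correct, but it takes a genuinely different route from the paper's. The paper uses a short extremal argument: pick a base pair $(i,j)$ of maximal length $l[i,j]$; by connectedness at least one of the adjacent backbone gaps $[i-1,i]$, $[j,j+1]$ is a covalent bond (not a nick), and maximality together with the absence of pseudoknots immediately shows that this bond is enclosed by no pair; $R$-fold symmetry then transfers admissibility to its whole orbit. You instead construct the central face $F$ of the chord arrangement containing the disk centre, prove it is invariant under the order-$R$ rotation, and use connectedness to force a covalent bond onto $\partial F$. The two arguments locate essentially the same bonds --- the maximal-length pairs are among the chords bounding $F$ --- but yours is heavier: it needs the preliminary step ruling out chords whose minor cap contains the centre (including the $R=2$ diameter case, which the paper isolates in \cref{lem:nobase}), and its hardest step, the disconnection argument you flag yourself, requires laminar-family bookkeeping that the paper's maximality trick avoids entirely; that step does go through, since any edge leaving a sector is either a base pair crossing the bounding chord (a pseudoknot) or a backbone gap lying on $\partial F$, which by hypothesis is a nick, and at least one base always lies outside any single sector because $\llbracket m,n\rrbracket$ is the shorter segment. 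What your approach buys is a structural picture: $F$ is precisely the region that becomes the central loop $\bigcirc^S$ of \cref{lem:centralloop}, so your construction anticipates the later machinery, and your orbit argument is purely combinatorial rather than relying on the Euclidean fact that the centre never lies in the minor cap of a non-diameter chord. One small point of rigour: the claimed self-inclusion contradiction at the end of your first step is most cleanly finished by noting that the $R$ minor caps are pairwise nested (all containing $O$ and pairwise non-crossing) yet have equal arc length (being rotations of one another), hence coincide, contradicting their distinctness.
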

\begin{proof}
	From $\PolySpi$, select  a base pair $(i,j)$ that has maximal length $l[i,j]$: 
	at least one of $[i-1,i]$ and $[j,j+1]$ must be a covalent bond, otherwise if both were nick $S$ would be disconnected (a contradiction). 
	We claim that this covalent bond, which we denote $[a,a+1]$, 
	is an admissible $R$-symmetric backbone cut generator, otherwise there exists a base pair $(m,n) \in \PolySpi$ such that $[a,a+1] \subseteq \llbracket m,n \rrbracket$, 
	giving a contradiction by either: 
	(a) 
	$\llbracket m,n \rrbracket$ must contain $\llbracket i,j \rrbracket$ which contradicts the maximality of $l( i,j )$, 
	or (b) $( m,n )$ and $( i,j )$ intersect forming a pseudoknot. 
	All covalent bonds in $\mathcal{C}_R^{[a,a+1]}$  have the same situation because of $R$-fold symmetry of $S$, 
	which implies that $\mathcal{C}_R^{[a,a+1]}$ is an admissible $R$-symmetric backbone cut of $S$.
\end{proof}

Note that any $R$-fold rotationally symmetric secondary structure $S$ can have more than one admissible R-symmetric cut. Before defining what we mean by a pizza slice (formally, symmetric slice in \cref{def:symmetric slice}),  the following lemma is used to  ensure such a slice is  connected.

\begin{restatable}[Pizza slicing lemma] {lemma}{connected} 
	\label{lem:connected}
	For any $R\geq 2$ and any $R$-fold rotational symmetric secondary structure $S$, let $G$ be the graph obtained from $\PolySpi$ by removing the covalent bonds of admissible $R$-symmetric backbone cut $\mathcal{C}_R^b$ generated by any covalent bond $b$, such that $G=(V(\PolySpi), E(\PolySpi)\setminus\mathcal{C}_R^b)$, then $G$ is disconnected and consists exactly of $R$ connected isomorphic components.
\end{restatable}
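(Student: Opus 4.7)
The plan is to remove the $R$ cut bonds, obtain $R$ backbone arcs, show via admissibility that no base pair crosses arcs, and then use rotational symmetry together with a short counting argument to conclude $G$ has exactly $R$ isomorphic components. First, I would observe that $\mathcal{C}_R^b$ has exactly $R$ distinct covalent bonds: this is the orbit of $b$ under the order-$R$ subgroup $H \leq G^\pi$, and since no nontrivial planar rotation of the disc fixes an edge on its boundary, the orbit has size $|H| = R$. Deleting these $R$ bonds partitions the circular positions $1,\ldots,N$ into $R$ maximal contiguous arcs $A_1,\ldots,A_R$.

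The crux, and the main obstacle, is showing that no base pair $(i,j) \in S$ has endpoints in distinct arcs. I would argue by contradiction: if $i \in A_k$ and $j \in A_{k'}$ with $k \neq k'$, then any walk along the backbone from $i$ to $j$ must traverse at least one cut bond, since the arcs are precisely the maximal contiguous runs bounded by cut bonds. In particular the shorter segment $\llbracket i,j \rrbracket$ contains some $x \in \mathcal{C}_R^b$, i.e.\ $x \subseteq \llbracket i,j \rrbracket$, contradicting the admissibility of $\mathcal{C}_R^b$ (\cref{def:admissible}). The only delicate sub-case is when the two circular directions between $i$ and $j$ have equal length, making $\llbracket i,j \rrbracket$ non-unique; but since $R \geq 2$ each direction then crosses at least one cut bond, so either choice still contains an element of $\mathcal{C}_R^b$ and the contradiction persists.

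Given the above, every connected component of $G$ is wholly contained in a single arc. By the $R$-fold rotational symmetry of $S$, a generator of $H$ acts as a planar rotation of $\PolySpi$ by $2\pi/R$ that fixes $S$ (and hence $\mathcal{C}_R^b$) setwise and cyclically permutes $A_1 \to A_2 \to \cdots \to A_R \to A_1$. This rotation therefore induces a bijection on the components of $G$ that matches components inside each arc with those inside the next, so every arc contains the same number $k$ of components and $G$ has $Rk$ components in total. Since $\PolySpi$ is connected and removing $R$ edges from a connected graph yields at most $R+1$ components, $Rk \leq R+1$ forces $k=1$. Hence $G$ has exactly $R$ connected components, each consisting of one arc together with its internal base pairs, pairwise isomorphic via the iterated rotation, and $G$ is disconnected because $R \geq 2$.
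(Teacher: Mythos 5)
Your proposal is correct and follows essentially the same route as the paper's proof: cut the backbone into $R$ arcs (the paper's ``pizza slices''), use admissibility to rule out base pairs between distinct arcs, use the rotation to force the component count to be a multiple of $R$, and then use the fact that deleting $R$ edges from a connected graph leaves at most $R+1$ components to conclude there are exactly $R$. Your handling of the cross-arc base pair (both circular directions meet a cut bond, including the equal-length tie case) is a slightly cleaner packaging of the paper's argument, which instead bounds slice lengths via its \cref{lem:nobase}, but the substance is the same.
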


\begin{proof}  
	\cref{lem:cutexist} ensures the existence of at least one admissible $R$-symmetric backbone cut of~$S$, assume it is generated by $b=[i_{A_m^n}, (i+1)_{A_m^n}]$, then by \cref{def:cut}:  $\mathcal{C}_R^b = \{ [i_{A_m^{a(n)}}, ({i+1})_{A_{m}^{a(n)}}]: \forall a \in H \!\!\leq \! G^\pi, |H| = R \}$. For any two (recall, $R\geq 2$) ``consecutive'' covalent bonds  in $\mathcal{C}_R^b$ (formally:   $b_k=[i_{A_m^{a^k(n)}}, ({i+1})_{A_{m}^{a^k(n)}}]$ and  $b_{k+1}=[i_{A_m^{a^{k+1}(n)}}, ({i+1})_{A_{m}^{a^{k+1}(n)}}]$),
	we construct the ``pizza slice'' $G_k$ to be the subgraph of $\PolySpi$ induced by all vertices (or bases) that belong to $I_k = \llbracket ({i+1})_{A_{m}^{a^k(n)}},{i}_{A_{m}^{a^{k+1}(n)}} \rrbracket$. Intuitively, $G_k$ is the slice we get after cutting $\PolySpi$ at $b_k$ and $b_{k+1}$. At the end we will have a sequence of subgraphs $\mathcal{G} = \{G_1,G_2, ... , G_R\}$. Because of symmetry, all subgraphs in $\mathcal{G}$ are isomorphic. We claim that each subgraph in $\mathcal{G}$ is exactly one connected component, and $G = \bigcup_{k=1}^R G_k$. 
	
	For any $G_k$, first we will show that $G_k$ is disconnected from any other $G_{l}$ with $l \neq k$, which follows from two observations:  
	From \cref{lem:nobase} we know that the length of $I_k < \frac{N}{R}$ and $I_1, \ldots , I_R$ are disjoint segments by construction (via symmetry $R$), hence $G_k$ and  $G_l$ are not connected by a covalent bond. 
	To see that  there is no base pair $(x,y)$ connecting $G_k$ and $G_l$:  
	assume for the sake of contradiction that such a base pair $(x,y)$ exists in $\PolySpi$, then one of the two covalent bonds $b_k$ or $b_{k+1}$   $\subseteq \llbracket x,y \rrbracket$ (by the definition of $b_k, b_{k+1}$ above), contradicting the fact that $\mathcal{C}_R^b$ is an admissible backbone cut. Also, it follows directly that $G = \bigcup_{k=1}^R G_k$ from the definition of inducing subgraphs. 
	
	We next wish to show that each slice $G_k$ is connected. 
	First, there exists $d \geq 1$ such that  $G$ consists of $dR$ connected components, 
	because for each $k$, $G_k \in \mathcal{G}$ is isomorphic  
	(so if $G_k$ has $d\geq 1$ components then all $G_l \in \mathcal{G}$ do also). Next we  will show $d=1$. 
	Since $G$ is the graph obtained from the connected graph $\PolySpi$ by removing $|\mathcal{C}_R^b| = R$ covalent bonds, so there are only two cases: 
	(a) 
	if each of these $R$ covalent bonds is a cut edge, or bridge~\cite{west2001introduction}, 
	$G$ will consist of $R+1$ components, 
	contradicting the fact that number of its components must be $dR$ for some $d\geq 1$, so $d = 1$, implying that each subgraph in $\mathcal{G}$ consists of exactly one component. 
	(b) The only other case is that $G$ has $R$ components (since we've already shown that the $R$ slices are not connected to each other), giving the lemma statement. 
\end{proof}

\begin{Definition}[Symmetric slice]\label{def:symmetric slice}
	From the construction in \cref{lem:connected}, each of the $R$ 
	isomorphic subgraphs (components) in $ \mathcal{G}$ is called a {\em symmetric slice} of $\PolySpi$,  denoted  by $\rhd^{S}$. Also, the loop free energy of a symmetric slice is: 
	\begin{equation}
		\Delta G(\rhd^{S}) = \sum_{l\in \rhd^{S}} \Delta G(l)
	\end{equation}
\end{Definition}

For any $R$-fold symmetric secondary structure $S$, the following lemma shows the existence of a unique loop in the center of $\PolySpi$  surrounded by the outer base pairs of its symmetric slices. We call it the {\em central loop} of $S$, and denote it by $\bigcirc^S$. This central loop plays a crucial role in validating our slicing and swapping strategy  (\cref{lem:sand,lem:sand2}) and determining the exact upper bound (\cref{lem:polyub}) of symmetric secondary structures need to be backtracked.

\begin{restatable}{lemma}{centralloop} 
	\label{lem:centralloop}
	For any $R$-fold symmetric secondary structure $S$, there exists a single loop, that we call the central loop $\bigcirc^S$, that is not contained in any of the $R$ symmetric slices. 
	If $R>2$ then $\bigcirc^S$ is a multiloop, if $R=2$ then $\bigcirc^S$ is either a multiloop, stack, or an internal loop.   
\end{restatable}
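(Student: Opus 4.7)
I plan to identify $\bigcirc^S$ with the unique face of the planar polymer graph $\PolySpi$ whose boundary contains all $R$ covalent bonds of the admissible cut $\mathcal{C}_R^b$, then verify existence, uniqueness, and the loop type.

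\textbf{Existence, uniqueness, non-containment.} Each cut edge $b_k$ is a covalent bond on the disk circumference, bordering exactly one inner face of $\PolySpi$ (together with the outer face outside the disk). By $R$-fold symmetry of $S$ and of $\mathcal{C}_R^b$, the $R$ inner faces adjacent to the cut edges form a single orbit under the rotation group generated by $\rho$. I would pin that orbit to size $1$ by Euler-characteristic bookkeeping: by \cref{lem:connected} the graph $\PolySpi \setminus \mathcal{C}_R^b$ has $R$ components, so $V-(E-R)+F'=1+R$ forces $F'=F-1$; tracing the $R$ removals one at a time, the first cut-edge removal is a non-bridge that merges two faces (an inner face $F^*$ with the outer face), and the remaining $R-1$ removals are bridges (their common cycle has been broken) that each split off one slice without merging further faces. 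Set $\bigcirc^S := F^*$; this face is not contained in any slice since its boundary uses edges removed in $\mathcal{C}_R^b$. For uniqueness, any other loop $L$ of $S$ is closed by some base pair $(i,j)$, whose boundary edges all lie in $\llbracket i,j\rrbracket$; admissibility (\cref{def:admissible}) prevents cut edges from lying in $\llbracket i,j\rrbracket$, so the induced subgraph on $\llbracket i,j\rrbracket$ stays intact inside a single slice and carries $L$ with it.

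\textbf{Type classification.} The cycle $\partial\bigcirc^S$ is $R$-fold rotationally symmetric and alternates between the $R$ cut edges and $R$ congruent ``center-facing'' boundary pieces, one from each slice. Each such piece contributes the same number $k$ of outermost base pairs of its slice that face the centre, so $\partial\bigcirc^S$ carries $Rk$ base pairs in total. Connectedness of $S$ with $c\geq R\geq 2$ strands forces at least one inter-strand base pair, and by symmetry each slice then contains at least one base pair, whose outermost representative is a chord on $\partial\bigcirc^S$, giving $k\geq 1$. For $R\geq 3$ the loop has $\geq R\geq 3$ closing base pairs and is therefore a multiloop. For $R=2$: if $k\geq 2$, again a multiloop; if $k=1$, there are exactly two base pairs related by $180^\circ$ rotation, and the two arcs of $\partial\bigcirc^S$ joining them are swapped by $\rho$ and so carry equal numbers of unpaired bases, producing either a stack (both arcs empty) or an internal loop (both nonempty), and ruling out the asymmetric bulge.

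\textbf{Main obstacle.} The delicate step is the face-coincidence argument in paragraph~2: a priori symmetry allows the cut-edge inner faces to form either a singleton orbit or a full orbit of size $R$, and it is the combination of Euler's formula with the component count from \cref{lem:connected} that rules out the latter. A secondary subtlety at $R=2$ is the possibility of a base pair fixed by $\rho$ (a chord through the disk centre) that might appear to compete with $F^*$ for the role of ``central'' structure, but such a pair is simply absorbed as a pair lying on $\partial\bigcirc^S$ and does not affect the type classification.
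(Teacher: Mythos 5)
Your proposal is correct and reaches the same object as the paper, but the existence--uniqueness argument travels a genuinely different route. The paper constructs the central loop explicitly: for each slice it forms the nesting set $\mathcal{N}_k$ of outermost base pairs, builds the alternating backbone/base-pair path $P_k$, and concatenates $b.P_1.b_2.P_2.\cdots b_R.P_R$; uniqueness and non-containment are then read off the construction, and the loop type follows from counting the $R|\mathcal{N}_k|$ bordering pairs, with the exterior-loop case excluded by connectedness and bulge/hairpin by symmetry. You instead treat loops as inner faces of the planar embedding and let \cref{lem:connected} do the work through Euler's formula: removing $R$ edges while gaining $R-1$ components forces exactly one non-bridge among the cut edges, hence $F'=F-1$, which pins the $R$ inner faces adjacent to the cut edges to a single face $F^*$; uniqueness is then immediate because no other inner face borders a cut edge, so every other loop survives intact inside one slice. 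This buys a cleaner uniqueness statement and automatically excludes the exterior-loop possibility (an exposed nick on the central face would make every cut edge a bridge, contradicting the component count), at the cost of being less explicit about the loop's boundary, which the paper's $P_k$ construction hands directly to the energy bookkeeping in \cref{lem:sand}. Your type classification is essentially the paper's: both count one orbit of outermost pairs per slice, get at least $R$ bordering pairs from connectedness plus slice isomorphism, and use the $180^\circ$ swap of the two arcs to exclude a bulge when $R=2$. Two phrasings are slightly loose but harmless: the orbit of the faces $F_k$ could a priori have any size dividing $R$ (not just $1$ or $R$), though the Euler count forces size $1$ regardless; and your uniqueness sentence via ``closed by some base pair'' glosses over exterior loops, which your face argument nonetheless covers. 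The diameter base pair you flag at $R=2$ cannot occur (it would join the two slices, contradicting \cref{lem:connected}, and is also ruled out by \cref{lem:nobase}), so that worry is moot.
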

\begin{proof}
	Using our construction in~\cref{lem:connected}, 
	let   
	$G_k \in \mathcal{G}$ be any symmetric slice, and 
	assume a local indexing from $1$ to $\frac{N}{R}$ for the bases in the segment $I_k$. Let $\mathcal{N}_k = \{(m,n) \in G_k: \neg\exists (m',n') \in G_k \textrm{ such that } m'<m<n<n'\}$. Intuitively,  $\mathcal{N}_k$ contains all outer base pairs of $G_k$, we call  $\mathcal{N}_k$ the \emph{nesting set} of $G_k$ as it determines how $G_k$ will geometrically look like (when staring at it from the centre of the pizza). 
	
	Intuitively, we construct a path $P_k$ as the concatenation of a segment of covalent bonds from $G_k$, and then a base pair in $\mathcal{N}_k$, then more covalent bonds, then a base pair, and so on.  Formally, let $d = |\mathcal{N}_k|$, then  $P_k = \llbracket 1,m_1 \rrbracket. (m_1,n_1). \llbracket n_1,m_2 \rrbracket.  (m_2,n_2) \ldots (m_c,n_c). \llbracket n_c,\frac{N}{R} \rrbracket$. Note that $\llbracket 1,m_1 \rrbracket$ and $\llbracket n_c,\frac{N}{R} \rrbracket$ may be substrands of length zero. Using this construction, $P_k$ must be a path, otherwise $G_k$ will be a disconnected graph contradicting~\cref{lem:connected}.
	
	Now, we will construct this central loop $\bigcirc^S$ as follows: for simplicity of notation, let $\mathcal{C}_R^b = \{ b, b_2, ..., b_R\}$, then 
	$\bigcirc^S = b. P_1. b_2. P_2. ... b_R. P_R$. If $R > 2$, $\bigcirc^S$ must be a multiloop, since a multiloop is defined by being bordered by $>2$ base pairs (hydrogen bonds). 
	If $R = 2$, $\bigcirc^S$ will be a multiloop if and only if $|\mathcal{N}_k| \geq 2$ (this implies there are $R |N_k| = 2|N_k|  > 2 $ base pairs bordering the central loop), 
	otherwise $\bigcirc^S$ will be an internal loop or stack. $\bigcirc^S$ can not be external loop otherwise this will contradict the connectedness of $S$, nor can $\bigcirc^S$  be a bulge nor hairpin loop as either of these  will contradict the  symmetry of $S$. 
\end{proof}

Because of the crucial importance of multiloops for our strategy, we highlight the multiloop energy model that is used in the standard dynamic programming algorithms. The free energy of a multiloop has the following linear form~\cite{dirks2007thermodynamic}:
\begin{equation} \label{eq:multi}
	\Delta G^ \textrm{multi}  = \Delta G_\textrm{init}^\textrm{multi} + b \Delta G_\textrm{bp}^\textrm{multi} + n\Delta G_\textrm{nt}^\textrm{multi}.
\end{equation}
Where, $\Delta G_\textrm{init}^\textrm{multi}$ is called the penalty for   formation of the multiloop, $\Delta G_\textrm{bp}^\textrm{multi}$ is called the penalty for each of its $b$ base pairs that border the interior of the multiloop, and $\Delta G_\textrm{nt}^\textrm{multi}$ is called the penalty for each of the $n$ free bases inside the multiloop. For any $R$-fold symmetric secondary structure~$S$, 
$b \Delta G_\textrm{bp}^\textrm{multi}$ and $n\Delta G_\textrm{nt}^\textrm{multi}$ are shared equally between the $R$ symmetric slices of $\PolySpi$, hence $R$ divides both $b$ and $n$. So, we denote  $\Delta G^ \textrm{multi}  = \Delta G_\textrm{init}^\textrm{multi} + R(\Delta G_{\rhd^{S}}^\textrm{multi})$, where $\Delta G_{\rhd^{S}}^\textrm{multi}$ is the energy contribution of each symmetric slice of $\PolySpi$ to the multiloop free energy, such that $\Delta G_{\rhd^{S}}^\textrm{multi} = \frac{b}{R} \Delta G_\textrm{bp}^\textrm{multi} + \frac{n}{R} \Delta G_\textrm{nt}^\textrm{multi}$.

\begin{note}
	For any connected unpseudoknotted secondary structure $S$ of $c$ strands, we use $\overline{\Delta G}(S)$ to denote the \snMFE of $S$, in other words $\overline{\Delta G}(S) = \sum_{l\in S} \Delta G(l)
	+  (c-1)\Delta G^{\textrm{assoc}}$. It is clear that $\overline{\Delta G}(S) \leq \Delta G(S)$, as the symmetry correction $k_\mathrm{B} T \log R \geq 0$. 
\end{note}

\begin{figure}[t]
	\centering\includegraphics[width=0.7\textwidth]{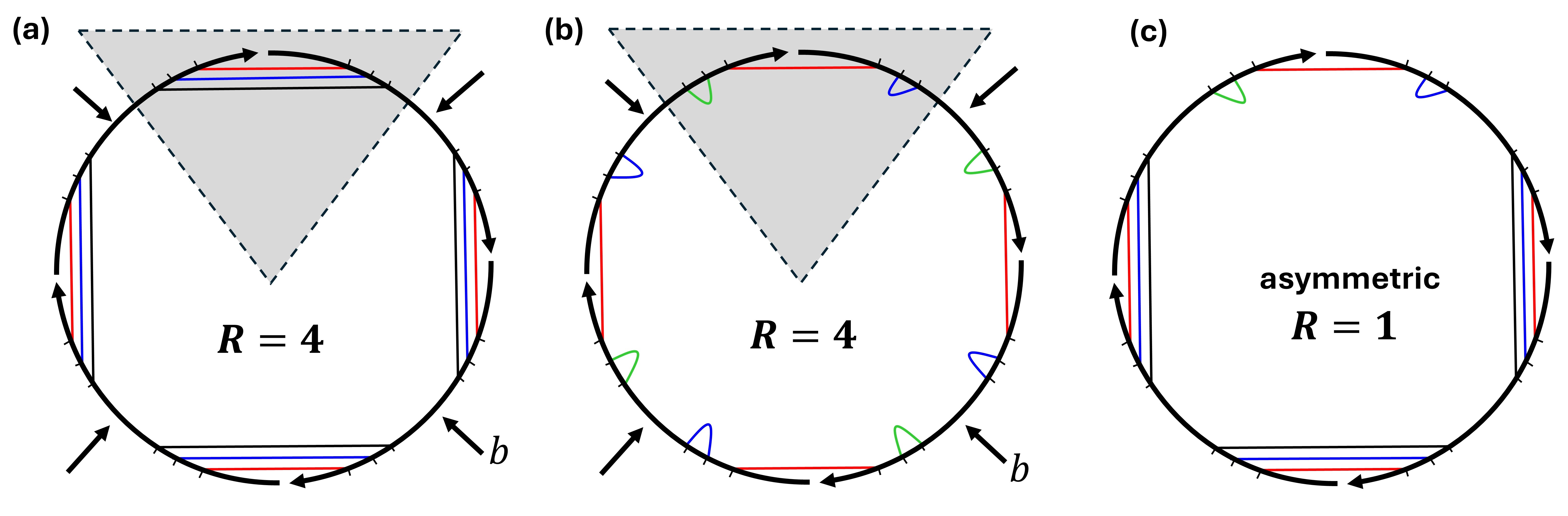}
	\caption{Slicing and swapping strategy for constructing new asymmetric structure by combining two symmetric structures with the same symmetric backbone cut. 
		(a)~4-fold symmetric secondary structure $S_i$, with admissible $4$-symmetric backbone cut $\mathcal{C}_R^b$.
		Black arrows: indicate the four covalent bonds forming $\mathcal{C}_R^b$ generated by the covalent bond $b$. 
		(b)~4-fold symmetric secondary structure $S_j$, sharing the same cut $\mathcal{C}_R^b$ as $S_i$.
		Black arrows: indicate the four covalent bonds forming $\mathcal{C}_R^b$. 
		(c)~Asymmetric secondary structure $S_k$ that is constructed by replacing the grey shaded `slice' from $S_i$ by its corresponding slice from $S_j$, using the proof of  \cref{lem:sand}. 
	}\label{fig:sand}
\end{figure}

Intuitively, the following lemma lets us take two symmetric pizzas with the same admissible symmetric cut for which we merely now their \snMFE (we don't know their true MFE), and swap a slice from one into the other to get a new asymmetric pizza whose true MFE lies between their {\snMFE}s. The key intuition is that we transforming symmetric secondary structures into an  asymmetric one. 

\begin{lemma}\label{lem:sand}[Free-energy sandwich theorem for two $R$-fold rotational symmetric  structures]
	For any two distinct $(R \geq 3)$-fold rotationally symmetric secondary structures, $S_i$ and $S_j$, of $c$ strands, such that $R \geq 3$ and $\DGnosym(S_i) \leq \DGnosym(S_j)$ and $S_i$ and $S_j$ have the same R-admissible backbone cut $\mathcal{C}_R^b$, then there exists at least one asymmetric secondary structure $S_k$, such that $\DGnosym(S_i) \leq \Delta G(S_k) \leq \DGnosym(S_j)$.  Furthermore, the statement holds if $R=2$ and at least one of the central loops  $\bigcirc^{S_i},  \bigcirc^{S_j}$  is a multiloop. 
\end{lemma}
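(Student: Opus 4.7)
The plan is to realise the slice-and-swap construction foreshadowed by \cref{fig:sand}. First, I apply \cref{lem:connected} to both $S_i$ and $S_j$ with the shared cut $\mathcal{C}_R^b$, obtaining $R$ pairwise isomorphic symmetric slices $\rhd^{S_i}_1,\ldots,\rhd^{S_i}_R$ (respectively $\rhd^{S_j}_1,\ldots,\rhd^{S_j}_R$) together with a central loop $\bigcirc^{S_i}$ (respectively $\bigcirc^{S_j}$) identified by \cref{lem:centralloop}. Because both structures share the cut, the slices sit on matching sets of boundary bases, and I can form $S_k$ by replacing $\rhd^{S_i}_1$ in $S_i$ with $\rhd^{S_j}_1$. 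Admissibility of $\mathcal{C}_R^b$ guarantees $S_k$ is a connected unpseudoknotted secondary structure, since every base pair of $S_i$ or $S_j$ lies strictly inside one slice or borders the central loop, so the swap introduces neither disconnections nor crossings. Moreover $S_k$ is asymmetric: as $S_i\neq S_j$ are both $R$-fold symmetric with the same cut, the slice types $\rhd^{S_i}$ and $\rhd^{S_j}$ must differ as labelled subgraphs (else $S_i=S_j$), so $S_k$ has $R-1$ slices of one type and a unique slice of another. Any nontrivial rotation in the group $G^\pi$ of order dividing $R$ acts without fixed points on the $R$ slice positions, and would therefore have to map the unique $\rhd^{S_j}_1$-slice onto a position holding a $\rhd^{S_i}$-slice, contradicting preservation of base pairs.

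For the energy bound in the case $R\geq 3$, \cref{lem:centralloop} forces $\bigcirc^{S_i}$ and $\bigcirc^{S_j}$ to be multiloops, and $\bigcirc^{S_k}$ is also a multiloop since it has $(R-1)|\mathcal{N}^{S_i}|+|\mathcal{N}^{S_j}|\geq R\geq 3$ bordering base pairs. Using the multiloop decomposition of \cref{eq:multi}, I define the per-slice contribution $c(\rhd^{S}):=\Delta G(\rhd^S)+\Delta G_{\rhd^S}^{\textrm{multi}}$ so that
\begin{equation*}
\DGnosym(S_i)=R\,c(\rhd^{S_i})+\Delta G_{\textrm{init}}^{\textrm{multi}}+(c-1)\Delta G^{\textrm{assoc}},
\end{equation*}
with the analogous identity for $S_j$, while
\begin{equation*}
\DGnosym(S_k)=(R-1)\,c(\rhd^{S_i})+c(\rhd^{S_j})+\Delta G_{\textrm{init}}^{\textrm{multi}}+(c-1)\Delta G^{\textrm{assoc}}.
\end{equation*}
The hypothesis $\DGnosym(S_i)\leq\DGnosym(S_j)$ rearranges to $c(\rhd^{S_i})\leq c(\rhd^{S_j})$, so $\DGnosym(S_k)$ is a convex combination lying in $[\DGnosym(S_i),\DGnosym(S_j)]$. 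Because $S_k$ is asymmetric, $\Delta G(S_k)=\DGnosym(S_k)$, delivering the sandwich.

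The main obstacle is the subcase $R=2$, where at least one central loop is required to be a multiloop. The construction of $S_k$ is unchanged and, if say $\bigcirc^{S_i}$ is a multiloop, then $\bigcirc^{S_k}$ has $|\mathcal{N}^{S_i}|+|\mathcal{N}^{S_j}|\geq 2+1=3$ bordering base pairs and is itself a multiloop. The subtlety is that if $\bigcirc^{S_j}$ is a stack or an internal loop rather than a multiloop, then the $S_j$-slice is charged with stack/internal-loop parameters inside $S_j$ but with multiloop parameters inside $S_k$, so the symmetric per-slice decomposition used above breaks. I expect to close this gap by a short case split on the type of $\bigcirc^{S_j}$, absorbing the parameter discrepancy into an adjusted per-slice quantity and again invoking $\DGnosym(S_i)\leq\DGnosym(S_j)$ to sandwich $\Delta G(S_k)$; this is precisely the accounting that the companion \cref{lem:sand2} is set up to handle.
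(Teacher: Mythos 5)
Your construction and energy accounting are essentially the paper's own proof: slice both structures along the shared admissible cut via \cref{lem:connected}, swap one slice, use the linearity of the multiloop model (\cref{eq:multi}) to write $\DGnosym(S_i)=R\,c(\rhd^{S_i})+\Delta G_{\textrm{init}}^{\textrm{multi}}+(c-1)\Delta G^{\textrm{assoc}}$ and place $\DGnosym(S_k)=(R-1)c(\rhd^{S_i})+c(\rhd^{S_j})+\dots$ between the two, then use asymmetry of $S_k$ to drop the $k_\mathrm{B}T\log R$ term. The paper splits into the cases $\DGnosym(S_i)=\DGnosym(S_j)$ and $<$, whereas you phrase it once as a convex combination; that is only cosmetic. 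Your argument for asymmetry of $S_k$ (a fixed-point-free action on slice positions would have to carry the unique $\rhd^{S_j}$-slice onto an $\rhd^{S_i}$-slice) is in fact more explicit than the paper's one-line assertion that ``$S_k$ is asymmetric since $S_i\neq S_j$.''

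The one place you stop short is the $R=2$ subcase in which exactly one of $\bigcirc^{S_i},\bigcirc^{S_j}$ is a multiloop: you correctly observe that the per-slice decomposition breaks because the non-multiloop central loop is charged with interior-loop parameters in $S_j$ but the swapped structure $S_k$ acquires a multiloop centre, and you only promise a future case split rather than carrying it out. That is a genuine incompleteness of your write-up as it stands. You should be aware, however, that the paper's own proof does not close this subcase either: it simply declares ``if $R=2$, by hypothesis, $\bigcirc^{S_i}$ and $\bigcirc^{S_j}$ are multiloops,'' silently strengthening the stated hypothesis from ``at least one'' to ``both.'' So the mixed multiloop/interior-loop situation is handled by neither argument; you have at least identified precisely where the difficulty lies (and it is not resolved by \cref{lem:sand2}, whose hypothesis requires both central loops to be the \emph{same} interior loop). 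To make your proof complete for the lemma as literally stated, you would need either to carry out the promised case split, or to argue that the mixed case never needs to be invoked by the backtracking algorithm.
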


\begin{proof}
	If $R \geq 3$,  from \cref{lem:centralloop},  there exist two unique central multiloops for $S_i$ and $S_j$, denoted  by $\bigcirc^{S_i}$ and $\bigcirc^{S_j}$. Also, if $R=2$, by hypothesis, $\bigcirc^{S_i}$ and $\bigcirc^{S_j}$ are multiloops. 
	We prove the claim with two cases:
	\begin{enumerate}
		\item[Case 1.]  $\DGnosym(S_i) = \DGnosym(S_j)$: 
		\begin{gather}
			\sum_{l\in S_i} \Delta G(l)
			+  (c-1)\Delta G^{\textrm{assoc}} = \sum_{l\in S_j} \Delta G(l)
			+  (c-1)\Delta G^{\textrm{assoc}} \label{eq:sandwich1}
			\\
			\sum_{l\in S_i} \Delta G(l)
			= \sum_{l\in S_j} \Delta G(l)
			\\
			R(\Delta G(\rhd^{S_i}))+ \Delta G(\bigcirc^{S_i})
			= R(\Delta G(\rhd^{S_j}))+ \Delta G(\bigcirc^{S_j})
			\\
			R(\Delta G(\rhd^{S_i})) + R \Delta G_{\rhd^{S_i}}^\textrm{multi} + \Delta G^\textrm{multi}_\textrm{init}
			= 	R(\Delta G(\rhd^{S_j})) + R \Delta G_{\rhd^{S_j}}^\textrm{multi} +  \Delta G^\textrm{multi}_\textrm{init}
			\\
			R(\Delta G(\rhd^{S_i}) + \Delta G_{\rhd^{S_i}}^\textrm{multi} ) = R(\Delta G(\rhd^{S_j}) + \Delta G_{\rhd^{S_j}}^\textrm{multi})
			\\
			\Delta G(\rhd^{S_i}) + \Delta G_{\rhd^{S_i}}^\textrm{multi}  = \Delta G(\rhd^{S_j}) + \Delta G_{\rhd^{S_j}}^\textrm{multi} \label{eq:sandwich2}
		\end{gather}
		We define a new secondary structure $S_k$ using our slicing and swapping strategy, shown in \cref{fig:sand}, by removing one slice from $S_i$, and adding its corresponding slice from~$S_j$. 
		\cref{lem:connected} guarantees that the new secondary structure  $S_k$ is connected and unpseudoknotted. Furthermore,  $S_k$ is asymmetric since $S_i \neq S_j$. 
		$S_k$ being asymmetric means that it's rotational symmetry $R_k=1$, $k_B T \log R_k = 0$ hence we can write: 
		\begin{gather*}
			\Delta G(S_k) = (R-1)\left( \Delta G(\rhd^{S_i}) + \Delta G_{\rhd^{S_i}}^\textrm{multi}) \right) + 
			G^\textrm{multi}_\textrm{init} + (c-1)\Delta G^{\textrm{assoc}} + \left(	\Delta G(\rhd^{S_j}) +\Delta G_{\rhd^{S_j}}^\textrm{multi} \right)
			\\
			\Delta G(S_k) = R\left(\Delta G(\rhd^{S_i}) +  \Delta G_{\rhd^{S_i}}^\textrm{multi} \right) + \Delta G^\textrm{multi}_\textrm{init} + (c-1)\Delta G^{\textrm{assoc}}
		\end{gather*}
		with the final step uses \cref{eq:sandwich2}. 
		Hence  $\DGnosym(S_i) = \Delta G(S_k) = \DGnosym(S_j)$. 
		
		\item[Case 2.]  $\DGnosym(S_i) < \DGnosym(S_j)$:  Following the same algebraic manipulation as \cref{eq:sandwich1} to \cref{eq:sandwich2}, but replaceing $=$ with $<$, we get the following:
		\begin{gather*}
			\Delta G(\rhd^{S_i}) + \Delta G_{\rhd^{S_i}}^\textrm{multi}  < \Delta G(\rhd^{S_j}) + \Delta G_{\rhd^{S_j}}^\textrm{multi}
		\end{gather*}
		As before, we define a new connected asymmetric secondary structure $S_k$, using the same slicing and swapping strategy, resulting in: $\DGnosym(S_i) < \Delta G(S_k) < \DGnosym(S_j)$. 
	\end{enumerate}
\end{proof}

\cref{lem:sand} states that, if two symmetric secondary structures, having the same admissible R-symmetric backbone cut, belong to the same energy level based on \symnMFE algorithm, ignoring symmetry entropic correction, this implies the existence of at least one asymmetric secondary structure that actually belong to the same energy level because symmetry  correction for asymmetric structures is zero. If the two secondary structures belong to two different energy levels, then there exist at least one asymmetric secondary structure that actually belong to an energy level strictly lies between the other two energy levels.

\paragraph{Intuition for the case of $R=2$ and the central loop is not a multiloop.} 
When $R=2$, and the central loop is not a multiloop, the proof of \cref{lem:sand} breaks.   
From \cref{lem:centralloop}, when $R=2$ the central loop is either a multiloop, internal loop or stack loop. 
The multiloop case has been handled already (\cref{lem:sand}), and the stack loop case can be subsumed into the  internal loop case, since stacks are considered a special type of internal loop in the standard energy model~\cite{dirks2007thermodynamic}. 
Instead of depending on having the same admissible 2-symmetric backbone cut, we depend on sharing the same central internal loop itself, this more restricted hypothesis   implies having the same admissible 2-symmetric backbone cut too, allowing us to prove \cref{lem:sand2} using a similar strategy to  \cref{lem:sand}.

\begin{lemma}\label{lem:sand2}[Free-energy sandwich theorem for two $2$-fold rotational symmetric structures]
	For any two distinct  $2$-fold rotationally symmetric secondary structures, $S_i$ and $S_j$, of $c$ strands, such that $\DGnosym(S_i) \leq \DGnosym(S_j)$ and both have the same central internal loop $\bigcirc^{S_i} = \bigcirc^{S_j}$, then there exists at least one asymmetric secondary structure $S_k$, such that $\DGnosym(S_i) \leq \Delta G(S_k) \leq \DGnosym(S_j)$.  
\end{lemma}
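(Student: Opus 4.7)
The plan is to mirror the slicing-and-swapping argument of Lemma \ref{lem:sand}, leveraging the stronger hypothesis $\bigcirc^{S_i} = \bigcirc^{S_j}$ to bypass the multiloop-specific algebra used there. First, I would observe that the shared central internal loop yields a common admissible $2$-symmetric backbone cut $\mathcal{C}_2^b$ for both structures: the two closing base pairs of the central internal loop sit at identical base positions in $S_i$ and $S_j$, and the covalent bonds immediately adjacent to one such closing pair generate an admissible $2$-symmetric cut by the same reasoning as in Lemma \ref{lem:cutexist}, so this cut is simultaneously admissible for $S_i$ and for $S_j$.

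Next, since $R=2$ I would decompose
\[
\DGnosym(S) = 2\,\Delta G(\rhd^{S}) + \Delta G(\bigcirc^{S}) + (c-1)\Delta G^{\textrm{assoc}}
\]
for each $S \in \{S_i, S_j\}$. The energy of an internal loop is fully determined by its length and closing base pairs, so $\bigcirc^{S_i} = \bigcirc^{S_j}$ gives $\Delta G(\bigcirc^{S_i}) = \Delta G(\bigcirc^{S_j})$, and the hypothesis $\DGnosym(S_i) \leq \DGnosym(S_j)$ reduces cleanly to $\Delta G(\rhd^{S_i}) \leq \Delta G(\rhd^{S_j})$. I would then construct $S_k$ by cutting along $\mathcal{C}_2^b$, removing one of the two slices of $S_i$, and grafting in the corresponding slice of $S_j$. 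Lemma \ref{lem:connected} ensures $S_k$ is connected and unpseudoknotted, and its free energy is
\[
\Delta G(S_k) = \Delta G(\rhd^{S_i}) + \Delta G(\rhd^{S_j}) + \Delta G(\bigcirc^{S_i}) + (c-1)\Delta G^{\textrm{assoc}} + k_\mathrm{B} T \log R_k,
\]
which, once $R_k = 1$ is established, equals $\tfrac{1}{2}(\DGnosym(S_i) + \DGnosym(S_j))$ and therefore lies in $[\DGnosym(S_i),\DGnosym(S_j)]$ as required (tight on both sides when the two \snMFE values coincide, strictly interior otherwise).

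The main obstacle is proving that $S_k$ is genuinely asymmetric ($R_k = 1$). Because $S_i \neq S_j$ while they share the central loop, the two structures must differ on their slices as subgraphs (modulo the $180^\circ$ rotation that relates the pair of slices within each of $S_i$ and $S_j$). Hence the two slices composing $S_k$ are not $180^\circ$ rotations of one another, which precludes $R_k = 2$. Any rotational symmetry of $S_k$ must preserve its central internal loop, but an internal loop has exactly two closing base pairs and so admits at most $2$-fold rotational symmetry; thus $R_k \in \{1,2\}$, forcing $R_k = 1$ and completing the sandwich.
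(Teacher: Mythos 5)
Your proposal is correct and follows essentially the same route as the paper's proof: derive a common admissible $2$-symmetric backbone cut from the shared central internal loop, cancel $\Delta G(\bigcirc^{S_i}) = \Delta G(\bigcirc^{S_j})$ to reduce the hypothesis to $\Delta G(\rhd^{S_i}) \leq \Delta G(\rhd^{S_j})$, and swap one slice to build $S_k$ with $\Delta G(S_k)$ equal to the average of the two $\DGnosym$ values. Your closing argument that $R_k=1$ (the slices cannot be $180^\circ$ rotations of each other since $S_i \neq S_j$, and an internal central loop precludes $R_k>2$) is a welcome addition that the paper's proof asserts without justification.
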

\begin{proof}
	Since $\bigcirc^{S_i} = \bigcirc^{S_j}$,  then both have the same admissible 2-symmetric backbone cut as any covalent bond $b$ that belong to any side of the internal loop can be its generator.
	\begin{gather*}
		\DGnosym(S_i) \leq \DGnosym(S_j)
		\\
		\sum_{l\in S_i} \Delta G(l)
		+  (c-1)\Delta G^{\textrm{assoc}} \leq \sum_{l\in S_j} \Delta G(l)
		+  (c-1)\Delta G^{\textrm{assoc}}
		\\
		\sum_{l\in S_i} \Delta G(l)
		\leq \sum_{l\in S_j} \Delta G(l)
		\\
		2(\Delta G(\rhd^{S_i}))+ \Delta G(\bigcirc^{S_i})
		\leq 2(\Delta G(\rhd^{S_j}))+ \Delta G(\bigcirc^{S_j})
		\\
		\Delta G(\rhd^{S_i})
		\leq \Delta G(\rhd^{S_j})
	\end{gather*}	
	We define a new secondary structure $S_k$ using our slicing and swapping strategy, shown in \cref{fig:sand}, by removing one slice  (half in this case) from $S_i$ , and adding its corresponding slice from~$S_j$. 
	Note that $S_k$ will have also the same central internal loop $\bigcirc^{S_k}= \bigcirc^{S_i}$. \cref{lem:connected} guarantees that $S_k$ is connected and unpseudoknotted. Since $S_k$ is asymmetric: 
	\begin{gather*}
		\Delta G(S_k) = \Delta G(\rhd^{S_i}) +\Delta G(\rhd^{S_j}) + \Delta G(\bigcirc^{S_k}) + (c-1)\Delta G^{\textrm{assoc}}
	\end{gather*}
	
	Which implies that $\DGnosym(S_i) \leq \Delta G(S_k) \leq \DGnosym(S_j)$. 
\end{proof}

We now have two sandwich theorems that we can use to construct an asymmetric structure: \cref{lem:sand,lem:sand2}. In \cref{sec:BT} we give a backtracking algorithm to search for suitable $S_i$ and $S_j$, with the goal of  applying either one of these two sandwich theorems to $S_i$ and $S_j$. 
To get an overall polynomial bound for the backtracking algorithm, we wish to bound, given $S_i$, how many secondary structures to scan before finding a suitable $S_j$.    
\cref{lem:ub} gives this upper bound on this number  when applying \cref{lem:sand}. 
Next, \cref{lem:ub2} gives this upper bound  when applying  \cref{lem:sand2}. 

Unfortunately, the bound in \cref{lem:ub2} is larger than \cref{lem:ub}, since the energy model is more complex for internal loops than multiloops~\cite{dirks2003partition}. 

\begin{lemma}[Upper bound on number of  central internal loops]\label{lem:ub2}
	For any set of $c$ strands with specific ordering $\pi$, for any set $\mathcal{T}$ of $2$-fold rotational symmetric secondary structures $(R=2)$, 
	such that each has a distinct internal central loop, the  cardinality of $\mathcal{T}$, $ |\mathcal{T}| \leq \sum_{s\in y} (\lVert \baseA \rVert_s \lVert \baseT \rVert_s+ \lVert \baseG \rVert_s \lVert \baseC \rVert_s ) \leq N^2/16$, 
	where $y$ is a fundamental component of $\pi$, such that $\pi = y^2$, and $\parallel\! \! B \!\! \parallel_s$  
	denotes the number of bases in strand $s$ of type $B$ for all $B\in\{\mathrm{A},\mathrm{T},\mathrm{G},\mathrm{C}\}$.
\end{lemma}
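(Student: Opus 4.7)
The plan is to inject $\mathcal{T}$ into the set of intra-strand complementary base pairs inside the fundamental component $y$ by extracting a canonical such pair from each structure's central internal loop, then count those pairs and finish with AM-GM.

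First, I will show that for each $S \in \mathcal{T}$ the two border base pairs of the central internal loop $\bigcirc^S$ must each span the two halves of the polymer graph. They are $180^\circ$ rotations of each other by $2$-fold symmetry. If both were \emph{intra-half} (each pair confined to one half), then every non-border base pair would also be intra-half: unpseudoknotted-ness forces each further base pair to lie inside one of the border pairs' enclosed arcs, and each such arc lies in a single half. Since covalent bonds stay within a single strand (hence a single half) and nicks carry no connectivity, $S$ would be disconnected, contradicting $S \in \mathcal{T}$. So both border pairs are inter-half; writing the two first-half endpoints as $p < q$, the only $2$-fold-symmetric and unpseudoknotted labelling is pair-$1 = (p, q + N/2)$ and pair-$2 = (q, p + N/2)$ in a nested configuration, with walls $\{p+1,\ldots,q-1\}$ and $\{p+N/2+1,\ldots,q+N/2-1\}$. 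Because the border pairs determine $\bigcirc^S$ and distinct structures in $\mathcal{T}$ have distinct central loops, the map $S \mapsto (p,q)$ is injective.

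Next, I will show $p$ and $q$ lie on the same strand of $y$. Suppose $p$ is on strand $s$ and $q$ on strand $s' \neq s$ (both in the first copy of $y$), and without loss of generality $s$ precedes $s'$ in $y$. The two border pairs then merge the strand instances into two disjoint components
\[
  A = \{s \text{ first copy}\}\cup\{s' \text{ second copy}\},\qquad
  B = \{s' \text{ first copy}\}\cup\{s \text{ second copy}\}.
\]
Any additional base pair of $S$ must lie inside pair-$2$'s enclosure $\{q+1,\ldots,p+N/2-1\}$ or outside pair-$1$'s enclosure $\{q+N/2+1,\ldots,p-1\}$ (wrapping), else it crosses a border and pseudoknots $S$. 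Tracking strand positions, the inside-pair-$2$ enclosure only meets parts of $s'$ (first copy) and $s$ (second copy) from $B$ together with strand instances disjoint from $A$; symmetrically, the outside-pair-$1$ enclosure only meets $A$-strands and strands disjoint from $B$. Hence no additional pair can bridge $A$ and $B$, forcing $S$ disconnected---a contradiction. So $p,q$ share a strand $s \in y$ and carry complementary bases, yielding
\[
  |\mathcal{T}| \;\leq\; \sum_{s \in y}\bigl(\lVert\baseA\rVert_s \lVert\baseT\rVert_s + \lVert\baseG\rVert_s \lVert\baseC\rVert_s\bigr).
\]

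For the $N^2/16$ estimate, writing $u_s = \lVert\baseA\rVert_s + \lVert\baseT\rVert_s$, $v_s = \lVert\baseG\rVert_s + \lVert\baseC\rVert_s$ and $n_s = u_s + v_s$, AM-GM gives $\lVert\baseA\rVert_s\lVert\baseT\rVert_s + \lVert\baseG\rVert_s\lVert\baseC\rVert_s \leq (u_s^2 + v_s^2)/4 \leq (u_s + v_s)^2/4 = n_s^2/4$. Summing over $s \in y$ and using $\sum_{s \in y} n_s = N/2$ together with $\sum_s n_s^2 \leq (\sum_s n_s)^2$ gives $\sum_{s\in y} n_s^2/4 \leq (N/2)^2/4 = N^2/16$. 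The hardest step will be the connectivity analysis for the intra-strand claim: I have to catalogue, in a way robust to an arbitrary number of strand types in $y$, exactly which strand instances occupy each of the two enclosed regions, and rule out every admissible (unpseudoknotted) base pair that might bridge $A$ and $B$.
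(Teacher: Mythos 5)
Your proposal is correct and follows essentially the same route as the paper: reduce each central internal loop to a single determining base pair joining corresponding strand copies $X_m^1, X_m^2$, identify it with an intra-strand complementary pair within the fundamental component $y$, count those, and finish with the same arithmetic inequalities to get $N^2/16$. The only difference is that you spell out the inter-half and same-strand connectivity arguments in detail, where the paper simply asserts that any other configuration would be disconnected due to nicks.
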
 

\begin{proof}
	Let  $\mathcal{T}$ be any set of of $2$-fold rotational symmetric secondary structures.  
	Since each $S \in \mathcal{T}$ has a distinct internal central loop, we  focus only on giving an upper bound on the maximum number of distinct internal central loops. 
	Since $R=2$, the two base pairs forming any  central internal loop must be within two strands of the same type $X$  of the same order $m$ within their fundamental components (for example the strands are $X_m^1$ and $X_m^2$), otherwise we  have a disconnected secondary structure due to the existence of nicks. 
	
	Only one of the two base pairs of any internal central loop needs to be specified explicitly, since, by symmetry, the other base pair is   automatically determined. 
	So, by considering all base pairs (including many that are irrelevant), the number of all distinct central internal loops is  $\leq \sum_{s\in y} (\parallel\! \! \baseA \!\! \parallel_s \parallel\! \! \baseT \!\! \parallel_s+ \parallel\! \! \baseG \!\! \parallel_s\parallel\! \! \baseC \!\! \parallel_s)$. 
	Hence, $ |\mathcal{T}| \leq \sum_{s\in y} (\parallel\! \! \baseA \!\! \parallel_s \parallel\! \! \baseT \!\! \parallel_s+ \parallel\! \! \baseG \!\! \parallel_s\parallel\! \! \baseC \!\! \parallel_s)$.
	Using the two following number theoretic facts:      
	\begin{itemize} \item
		If we have two indistinguishable strands, of the same type, $X$, of length $n$, the maximum intra-base pairs between them happens when the sequence of $X$ is a word over $\{\baseA,\baseT\}$ or $\{\baseG,\baseC\}$ such that $\parallel\! \! A \!\! \parallel_X = \lfloor \frac{n}{2}\rfloor$ and $\parallel\! \! T \!\! \parallel_X = \lceil \frac{n}{2}\rceil$ or vice versa, and the same for $\{\baseG,\baseT\}$.    
	\end{itemize}
	\begin{itemize} \item
		For any integer $n>0$, if $n = n_1 + n_2+ ... + n_k$, such that $n_i\geq 0$ for all $i \in \{1,2, \ldots k\}$, then $n^2 \geq n_1^2 + n_2^2+ ... + n_k^2$.
	\end{itemize}
	We get the following: 
	\begin{align*}
		\sum_{s\in y} (\parallel\! \! \baseA \!\! \parallel_s \parallel\! \! \baseT \!\! \parallel_s+ \parallel\! \! \baseG \!\! \parallel_s\parallel\! \! \baseC \!\! \parallel_s) & \leq  
		\left\lceil\dfrac{|s_1|}{2}\right\rceil
		\left\lfloor\dfrac{|s_1|}{2}\right\rfloor + 
		\left\lceil\dfrac{|s_2|}{2}\right\rceil
		\left\lfloor\dfrac{|s_2|}{2}\right\rfloor + \ldots + 
		\left\lceil\dfrac{|s_{c/2}|}{2}\right\rceil
		\left\lfloor\dfrac{|s_{c/2}|}{2}\right\rfloor \\
		& \leq 
		\left(\frac{|s_1|}{2}\right)^2 + \left(\frac{|s_2|}{2}\right)^2 + \ldots + \left(\frac{|s_{c/2}|}{2}\right)^2 \\
		&= 
		\frac{|s_1|^2 + |s_2|^2 + \ldots + |s_{c/2}|^2}{4} \\
		& \leq 
		\frac{(N/2)^2}{4} 
		=
		\frac{N^2}{16} 
	\end{align*}
	where $s_i, i \in \{1,\ldots,c\}$, are strand types. 
	Hence:  	 $ |\mathcal{T}| \leq \sum_{s\in y} (\parallel\! \! \baseA \!\! \parallel_s \parallel\! \! \baseT \!\! \parallel_s+ \parallel\! \! \baseG \!\! \parallel_s\parallel\! \! \baseC \!\! \parallel_s ) \leq N^2/16$. 
\end{proof}

\subsection{Polynomial upper bound on number of symmetric secondary structures (for future backtracking)}
\begin{lemma}\label{lem:polyub}
	
	Given an ordering $\pi$ of $c$ strands, 
	for any set $\mathcal{T}$ of distinct symmetric secondary structures such that 
	\begin{enumerate}
		\item  for any two $(R>2)$-fold symmetric secondary structures $S_i, S_j \in \mathcal{T}$, where $S_i$ and $S_j$ have different admissible R-symmetric backbone cuts (we mean all possible cuts are different), 	
		and 
		\item 
		for any two 2-fold symmetric secondary structures $S_i, S_j \in \mathcal{T}$, where $S_i$ and $S_j$ have different admissible R-symmetric backbone cuts (all possible cuts are different)
		or  different central internal loops, 
	\end{enumerate}
	then $|\mathcal{T}| \leq \mathcal{U}$, where $\mathcal{U} =  \frac{N-c}{v(\pi)} \left[ \sigma(v(\pi))-v(\pi) \right] + \frac{N^2}{16} = \mathcal{O}(N^2)$. 
\end{lemma}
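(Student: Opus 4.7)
The plan is to split $\mathcal{T}$ according to the central-loop type given by \cref{lem:centralloop}, and then apply \cref{lem:ub} and \cref{lem:ub2} to the resulting pieces. Specifically, I would partition $\mathcal{T}$ into three disjoint subsets: $\mathcal{T}_{>2}$ containing the structures with rotational symmetry degree $R > 2$; $\mathcal{T}_2^{\mathrm{m}}$ containing the $2$-fold structures whose central loop is a multiloop; and $\mathcal{T}_2^{\mathrm{i}}$ containing the $2$-fold structures whose central loop is a stack or an internal loop. By \cref{lem:centralloop}, every symmetric structure falls into exactly one of these three classes, so this partition is well-defined.

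Next, I would argue that distinct elements of $\mathcal{T}_{>2} \cup \mathcal{T}_2^{\mathrm{m}}$ are forced to have pairwise distinct admissible symmetric backbone cuts. Within $\mathcal{T}_{>2}$ this is exactly condition~1 of the lemma's hypothesis. Within $\mathcal{T}_2^{\mathrm{m}}$ the ``different central internal loops'' clause of condition~2 is vacuous (neither structure has a central internal loop), so two distinct elements must differ via their cuts. Finally, a structure in $\mathcal{T}_{>2}$ and one in $\mathcal{T}_2^{\mathrm{m}}$ automatically have distinct cuts because an $R$-symmetric cut contains exactly $R$ covalent bonds (\cref{def:cut}), so cuts of different rotational orders are set-theoretically distinct. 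Assigning each element of $\mathcal{T}_{>2} \cup \mathcal{T}_2^{\mathrm{m}}$ to any one of its admissible symmetric backbone cuts therefore yields an injection into the set of all admissible symmetric backbone cuts of orderings $\pi$, whose cardinality is in turn bounded by the count in \cref{lem:ub} (admissible cuts being a subset of all symmetric cuts), giving $|\mathcal{T}_{>2}| + |\mathcal{T}_2^{\mathrm{m}}| \leq \tfrac{N-c}{v(\pi)}\bigl[\sigma(v(\pi)) - v(\pi)\bigr]$.

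For $\mathcal{T}_2^{\mathrm{i}}$ I would reuse the small observation already invoked inside the proof of \cref{lem:sand2}: if two $2$-fold structures share the same central internal loop $\bigcirc$, then any covalent bond on the boundary of $\bigcirc$ generates the same $2$-admissible backbone cut for both structures, so they share their cut as well. Consequently, condition~2 of the hypothesis forces distinct elements of $\mathcal{T}_2^{\mathrm{i}}$ to have distinct central internal loops, and \cref{lem:ub2} caps this count by $N^2/16$. Summing the two bounds gives $|\mathcal{T}| \leq \tfrac{N-c}{v(\pi)}\bigl[\sigma(v(\pi)) - v(\pi)\bigr] + N^2/16 = \mathcal{U}$, and the $\mathcal{O}(N^2)$ asymptotic is then immediate from $c = \mathcal{O}(1)$ (so $\sigma(v(\pi)) = \mathcal{O}(1)$). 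I do not anticipate a serious obstacle; the only delicate points are confirming that the three-way partition is truly disjoint and exhaustive, and noting that the cardinality argument about $R$-symmetric cuts cleanly prevents any double-counting across different rotational orders.
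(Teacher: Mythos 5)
Your proof is correct and follows essentially the same route as the paper's: both bound $|\mathcal{T}|$ by charging each structure either to a unique symmetric backbone cut (\cref{lem:ub}) or to a unique central internal loop (\cref{lem:ub2}) and summing the two bounds into $\mathcal{U}=\mathcal{U}_1+\mathcal{U}_2$. The paper compresses this into a one-line pigeonhole argument, whereas you make the underlying three-way partition, the cross-order distinctness of cuts, and the injectivity checks explicit---a more careful rendering of the same idea.
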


\begin{proof}
	The proof is a trivial  implication of \cref{lem:ub,lem:ub2}.  
	(More formally: Let $\mathcal{U}= \mathcal{U}_1+\mathcal{U}_2$ where 
	$\mathcal{U}_1$ is the upper bound on the number of on unique symmetric backbone cuts (\cref{lem:ub}),
	and
	$\mathcal{U}_2$ is the upper bound on the number of unique central internal loops (\cref{lem:ub2}). 
	Assume for the sake of contradiction that $ |\mathcal{T}| > \mathcal{U}$. 
	From the pigeon hole principle, $|\mathcal{T}| > \mathcal{U}$ implies repeating at least one symmetric backbone cut or a central internal loop contradicting the hypothesis about structures of $\mathcal{T}$.)
\end{proof}

The (bad) quadratic bound in \cref{lem:ub2} is not that frequent: 
In particular, that bound only appears when $R=2$ and the central loop is an internal loop for both symmetric secondary structures (since $R=2$ this implies that the repetition number for every strand type is {\em even}, which in practice, say, for random or typical systems, may not be frequent).   
In particular the following lemma gives a {\em linear} bound when the repetition number of at least one strand type is odd.

\begin{lemma} \label{lem:even}
	For any $R$-fold rotationally symmetric  secondary structure $S$, with ordering $\pi$, such that $R$ is even, then the repetition number of each strand type must be even. Hence for any system of $c$ strands ($k$ strand types) such that the repetition number of some strand type is odd, then $\mathcal{U}$, where $\mathcal{U} =  \frac{N-c}{v(\pi)} \left[ \sigma(v(\pi))-v(\pi) \right] = \mathcal{O}(N)$. 
\end{lemma}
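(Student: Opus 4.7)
The plan is to prove the two assertions in sequence, relying on the divisibility fact $R \mid v(\pi)$ that was already established (\cref{lem:div}) and invoked in the proof of \cref{lem:ub}.

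For the first assertion, let $S$ be $R$-fold rotationally symmetric with ordering $\pi$, and let $x$ be the fundamental component of $\pi$, so that $\pi = x^{v(\pi)}$. If strand type $X$ appears $m_X$ times in $x$, then its repetition number in $\pi$ is exactly $n_X = m_X \cdot v(\pi)$. By \cref{lem:div}, $R$ divides $v(\pi)$, and therefore $R$ divides $n_X$. So the repetition number of every strand type is a multiple of $R$. When $R$ is even, $n_X$ is a multiple of an even integer and thus itself even, proving the first claim.

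The second assertion follows by contrapositive. If the repetition number of some strand type is odd, then no $R$-fold rotationally symmetric secondary structure of this system can have $R$ even. In particular, $R=2$ cannot occur in any symmetric structure of the system, so there are no $2$-fold rotationally symmetric structures at all, and hence none whose central loop is an internal loop. Thus the quadratic $\tfrac{N^2}{16}$ contribution from \cref{lem:ub2} is absent from the bound of \cref{lem:polyub}. In the decomposition $\mathcal{U} = \mathcal{U}_1 + \mathcal{U}_2$ used there, only the linear term $\mathcal{U}_1 = \tfrac{N-c}{v(\pi)}[\sigma(v(\pi)) - v(\pi)] = \mathcal{O}(N)$ survives.

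I do not anticipate any substantive obstacle: once $R \mid v(\pi)$ is in hand, the rest is a one-line number-theoretic observation combined with the fundamental-component bookkeeping already developed in \cref{sec:mfe}. The only point to articulate carefully is that the first claim is being used as a universal statement over all symmetry degrees $R$ that could possibly arise in the given system, so that ruling out even $R$ really does eliminate the entire $R = 2$ case handled by \cref{lem:ub2}, and not merely some subset of it.
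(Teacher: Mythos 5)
Your proof is correct and follows essentially the same route as the paper: both arguments rest on \cref{lem:div} ($R \mid v(\pi)$) together with the fact that every repetition number is a multiple of $v(\pi)$, and both derive the linear bound by noting that an odd repetition number rules out $R=2$ and hence the quadratic term of \cref{lem:ub2}. Your phrasing via $n_X = m_X\, v(\pi)$ is a slightly more direct (and marginally stronger, since it gives $R \mid n_X$) rendering of the paper's observation that $\pi = y^2$ forces even repetition numbers.
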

\begin{proof}
	Suppose $S$ is a $R$-fold rotational symmetric such that $R$ is even. From \cref{lem:div}, $R$ divides $v(\pi)$, which implies $v(\pi)$ is even too. Since $\pi = x^{v(\pi)}$ then $\pi = y^2$ such that $y= x^{v(\pi)/2}$, and this is valid only if the repetition number of each strand type is even. The linearity of $\mathcal{U}$ follows directly if repetition number of at least one strand type is odd.
\end{proof}


\section{Backtracking to find the true MFE}\label{sec:BT}

In this section, we give a backtracking procedure, 
\cref{algo:2} in \cref{app:backalgo}, 
to give our main result (\cref{thm:main}). 
First,  we run our augmentation of the known \symnMFE (\snMFE) algorithm---\cref{algo:1} in \cref{sec:AlgoMFE}, which returns some matrices which are input to the backtracking algorithm, \cref{algo:2}. 
Our multistranded backtracking algorithm builds on the single-stranded backtracking algorithm of Wuchty et al.~\cite{wuchty1999complete},
which in turn follows Waterman and Byers~\cite{waterman1985dynamic}, 
 although we make several technical modifications. 
In particular, distinctions with that previous work~\cite{wuchty1999complete,waterman1985dynamic} include: 
\begin{enumerate}
	\item We generalise backtracking from single-stranded to multistranded, which has a slightly different MFE algorithm, consistent with Dirks et al and Fornace et al~\cite{dirks2007thermodynamic,fornace2020unified} (i.e.~as implemented by the NUPACK software); in particular to ensure  the connectedness of secondary structures (a non-issue for~\cite{wuchty1999complete,waterman1985dynamic}). 
	\item We make major changes to the core of the backtracking algorithm  to ensure generation of all secondary structures, 
	at each of a specified number of energy levels, in energy level order 
	(which is different from the Wuchty et al's~\cite{wuchty1999complete} approach of backtracking all sub-optimal secondary structures that lie between the \snMFE and any arbitrary upper limit above it). 
	\item We extend the refinement cases of  Wuchty et al~\cite{wuchty1999complete} to handle our auxiliary matrices in such a way that yields  a good running time. 
\end{enumerate}

\subsection{Partially and fully specified structures}

\begin{Definition}[Partially and fully specified structure $\mathcal{S}$]
	A partially specified structure $\mathcal{S} = (\delta,\mathcal{P},E_{L_\mathcal{S}})$, where $\delta$ is a stack of disjoint segments of one or more DNA strands $\{[i,j]^t. [k,l]^{t'} \ldots\}$ where $[i,j]^t$ is the top of the stack, such that $i$ and $j$ are the end bases of the segment $[i,j]$, $t \in \{\square,b,m\}$ is the type of each segment, 
	such that $t = \square$ means the existence of a base pair between $i$ and $j$, is as yet undetermined, 
	$t =b$ means there is a base pair between $i$ and $j$, and 
	$t = m$ means that entire segment $[i,j]$ is part of a multiloop. 
	$\mathcal{P}$ is a set of base pairs formed in $\mathcal{S}$, and $E_{L_\mathcal{S}}$ is the energy of all loops that are 
	`completely formed' in $\mathcal{S}$. If $\delta = \phi$, we call $\mathcal{S}$ a fully specified structure.
\end{Definition}

A fully specified structure is a connected unpseudoknotted secondary structure. For any segment $[i,j]^t$,  label $t$ is assigned according to how a segment is generated through \emph{refinement} from another segment, formalized in \cref{sec:backhigh}, 
label $t$ is needed in switching the backtracking between the appropriate matrices of the \snMFE algorithm.
We will denote the minimum free energy attainable from segment $[i,j]^t$, by $E([i,j]^t)$, which we get directly from the appropriate matrix $M$, $M^b$, $M^m$, $M^\text{b:int}$, $M^\text{b:mul}$, or $M^\text{m:2}$, that are returned  by the \snMFE algorithm (\cref{algo:1}), based on $t$, full details are in \cref{sec:AlgoMFE}. 
The domain of the  function $E$ is extended to include the set of all partially specified structures, in addition to the set of all segments,   
$\mathcal{S} = (\delta,\mathcal{P},E_{L_\mathcal{S}})$ so that 
$E$ gives  the minimum free energy attainable from $\mathcal{S}$, 
respecting the refinement rules formalized in \cref{sec:backhigh}, as follows:
\begin{equation}\label{eq:ES}
	E(\mathcal{S}) = E_{L_{\mathcal{S}}} 
	+ \sum \limits_{[m,n]^t \in \delta} E([m,n]^t)
\end{equation}

Any partially specified structure $\mathcal{S} = (\delta,\mathcal{P},E_{L_\mathcal{S}})$ represents a set of all structures that have the base pairs $\mathcal{P}$ in common:  we can think of $\mathcal{S}$ as the root of the tree of these structures, all intermediate nodes of this tree will be partially specified structures, and its leaves will be fully specified structures, and $E(\mathcal{S})$ is the minimum free energy attainable from this tree where all its nodes, structures, are further \emph{refinements} of $\mathcal{S}$.   

\begin{Definition}[Refinement of a partially specified structure]
	A structure $\mathcal{S}' = (\delta',\mathcal{P}',E_{L_{\mathcal{S}'}})$ is called a refinement of the partially specified structure $\mathcal{S} = (\delta,\mathcal{P},E_{L_\mathcal{S}})$ if $\mathcal{P} \subseteq \mathcal{P}'$, and for each segment $[i',j']^{t'} \in \delta'$ there exist a segment $[i,j]^t \in \delta$ such that $[i',j']^{t'} \subseteq [i,j]^t$.
\end{Definition}

\subsection{Analysis of the backtracking algorithm refinement rules}\label{sec:backhigh}

The backtracking algorithm  starts with $\mathcal{S} = ([1,N]^\square, \phi,0)$, which represents the whole system of strands with a specific strand order, $\pi$, without any base pair formed, $\mathcal{P} = \phi$, hence no loops are formed too, $E_{L_\mathcal{S}} = 0$. $\mathcal{S}$ is the parent node of the tree of any possible structure. 
Now, we will outline the main refinement procedure of the generic partial structure $\mathcal{S} = ([i,j]^t.\delta, \mathcal{P}, E_{L_{\mathcal{S}}})$ that has been chosen, at the beginning of each iteration of the backtracking algorithm ({\cref{algo:2}}), 
from some array $\mathcal{R}_{z}$, the array of partially specified structures associated with each secondary structure $S_{z} \in \{S_1, \ldots, S_\mathcal{U}\}$, which is the secondary structure number $z$, of the worst case $\mathcal{U}$ secondary structures we need to scan (\cref{lem:polyub}), such that $S_z$ is completely scanned during the backtracking. 
The segment $I = [i,j]^t$, the top of the
segments stack of $\mathcal{S}$, will be popped  and  refined based on the type of  label $t$ resulting in a new refined structure $\mathcal{S}'$. Matrices $M$, $M^b$, $M^m$, and the new auxiliary matrices $M^\text{b:int}$, $M^\text{b:mul}$, and $M^\text{m:2}$, returned by  \cref{algo:1}, will be used to compute the minimum free energy $E(\mathcal{S}')$ attainable from  the refined partially specified structure~$\mathcal{S}'$.

Given that the algorithm  scans, or backtracks, all secondary structures in energy level $\mathcal{E}$, and $\mathcal{B}$ is the best candidate for the true MFE at the moment, then the \emph{acceptance criteria} of any refined partially specified structure $\mathcal{S}'$ is: 
\begin{equation}
	E(\mathcal{S}') \leq    \mathcal{B}
\end{equation}

This acceptance criteria is checked after each refinement case, and if it is satisfied, $\mathcal{S}'$ will be added on the array of partially specified structures $\mathcal{R}_u$, for some $u \in \{1, \ldots, \mathcal{U} \}$, for further refinements in future, where $\mathcal{R}_u$ is the secondary structure currently being  scanned. 
Note that, because of the strict sequential scanning of the backtracking algorithm (\cref{remark:newS}), 
the acceptance criteria implicitly implies that $\mathcal{E} \leq E(\mathcal{S}')$. Also, 
the acceptance criteria guarantees the connectedness of at least one potential fully specified structure which is a child of $\mathcal{S}'$  
(in  \cref{algo:2}, setting $E(\mathcal{S}') = +\infty$ implies a disconnected or invalid structure). 

There are $3$ cases based on the type of $t$ of the segment $I = [i,j]^t$ that has been popped (as mentioned above) from the stack $\delta$:

\begin{enumerate}\item
	$t = \square$ (recall, $\square$ means: the existence of a base pair between $i$ and $j$ is undetermined): 
	
	In this case we backtrack in matrix $M$.  
	
	$i$ and $j$ are the end bases of $I$, and the possible refinements, based on Eq.~(1) in \cref{fig:mfe}, are: 
	
	\begin{itemize}
		\item Subcase: If the base $j$, at the $3'$ end of the segment $[i,j]$, is unpaired, that will result in the new partial structure (i.e.~excluding $j$ and moving to $j-1$):
		
		$\mathcal{S}' = ([i,j-1]^\square.\delta, \mathcal{P}, E_{L_{\mathcal{S}}})$ such that $E(\mathcal{S}') = M_{i,j-1} + E_{L_{\mathcal{S}}} 
		+ \sum \limits_{[m,n]^t \in \delta} E([m,n]^t)$.
		
		\item 	Subcase: If the base $j$ forms a base pair with base $d \in [i,j-1]$,  we need to scan all such  $d \in [i,j-1]$, 
		and for each we have the new partial structure:
		
		$\mathcal{S}' = ([i,d-1]^\square.[d,j]^b.\delta, \mathcal{P}, E_{L_{\mathcal{S}}})$ such that $E(\mathcal{S}') = M_{i,d-1} + M^b_{d,j}  +E_{L_{\mathcal{S}}} 
		+ \sum \limits_{[m,n]^t \in \delta} E([m,n]^t)$. 
		
		Note that we did not add the base pair $(d,j)$ to $\mathcal{P}$ at this step, but we shall do when refining the interval
		$[d,j]^b$ enclosed by $(d,j)$~\cite{wuchty1999complete}. 	
	\end{itemize}
	
	Then the acceptance criteria will be checked after each of the two sub-cases above. The backtracking algorithm have to  check up to $\mathcal{O}(N)$ refined structures (because of $d$ spans $\leq N$ bases in subcase 2), and hence save up to  $\mathcal{O}(N)$ refined structures to $\mathcal{R}_u$ in the worst case.   
	
	\item
	Case $t = b$ (recall: a base pair is formed between the end bases $i$ and $j$ of $[i,j]^b$ segment):
	
	In this case we backtrack in matrix $M^b$. 	
	Now, assume the segment $[i,j]^b$ is popped from the stack $\delta$, based on Eq.~(2) in \cref{fig:mfe}, there are four subcases:
	
	\begin{itemize}
		
		\item \textbf{Hairpin loop} formation: If  $(i,j)$ is closing a hairpin loop (\cref{fig:mfe}(b)), this will result in the new partial structure:
		
		$\mathcal{S}' = (\delta, \mathcal{P} \cup \{(i,j)\}, E_{L_{\mathcal{S}}} + \Delta G_{i,j}^\text{hairpin} )$ such that $E(\mathcal{S}') = \Delta G_{i,j}^\text{hairpin} + E_{L_{\mathcal{S}}} 
		+ \sum \limits_{[m,n]^t \in \delta} E([m,n]^t)$.

		\item \textbf{Interior loop} formation: We need to scan all possible base pairs $(d,e)$ that bind to form an interior loop along with $(i,j)$ (\cref{fig:mfe}(b)). 
		Scanning all pairs in a straightforward way would lead to checking up to $\mathcal{O}(N^2)$ refined structures, 
		which would end up with a poor final worse-case time complexity of the backtracking algorithm. 
		Instead, we scan these base pairs in a different way, that keeps the number of refined structures that we need to check at each iteration to $\mathcal{O}(N)$. We achieve this by introducing a new auxiliary matrix, called $M^\text{b:int}$, in the \snMFE algorithm,  \cref{algo:1}. 		
		
		Modifying the generic form of segment $I$ is essential, in this case, the new segment generic form will be $I = [i,j]^b_{\text{int}:k}$, such that $k \in [i+1,j-1]$, which means that any base $d \in[k,j-1]$ is unpaired, hence not included in the second base pair formation to complete the interior loop with $(i,j)$.
		When a new segment $[i,j]^b$ is just generated as a refinement from another segment, 
		$[i,j]^b$ will be interpreted inside this case as $ [i,j]^b_{\text{int}:j} $, which means any base $d \in [i+1,j-1]$ can be part of the second base pair closing the current interior loop. 
		Given  $I = [i,j]^b_{\text{int}:k}$, there are two cases:
		
		1) If the base $k-1$ is also unpaired, this will
		result in the new partial structure: 
		
		$\mathcal{S}' = ([i,j]^b_{\text{int}:k-1}.\delta, \mathcal{P}, E_{L_{\mathcal{S}}})$ such that $E(\mathcal{S}') = M_{i,j,k-2}^\text{b:int} + E_{L_{\mathcal{S}}} 
		+ \sum \limits_{[m,n]^t \in \delta} E([m,n]^t)$.
		
		2) If the base $k-1$ is paired with another base $d \in [i+1,k-2]$ closing the interior loop, this will result in the new partial structure: 
		
		$\mathcal{S}' = ([d,k-1]^b.\delta, \mathcal{P} \cup \{(i,j)\}, E_{L_{\mathcal{S}}} + \Delta G_{i,d,k-1,j}^\text{interior})$ such that $E(\mathcal{S}') = M^b_{d,k-1} + \Delta G_{i,d,k-1,j}^\text{interior}   +E_{L_{\mathcal{S}}} 
		+ \sum \limits_{[m,n]^t \in \delta} E([m,n]^t)$. 
		
		\item \textbf{Multiloop} formation: In this case we also need to scan all possible pairs $(d,e)$ that will used to form a multi-loop to the $3'$ end (\cref{fig:mfe}(b)), so we will follow the same strategy as the case of interior loop formation, by introducing another new auxiliary matrix, called $M^\text{b:mul}$, in the symmetry agnostic MFE algorithm,  \cref{algo:1}. 
		The generic form of $I$ in this case will be updated to $I = [i,j]^b_{\text{mul}:k}$ such that $k \in [i+1,j-1]$, which means that any base $d \in[k,j-1]$ is unpaired, hence not included in the forming any base pair inside this multiloop along with $(i,j)$.
		When a new segment $[i,j]^b$ is just generated as a refinement from another segment, 
		$[i,j]^b$ will be interpreted inside this case as $ [i,j]^b_{\text{mul}:j} $, which means any base $d \in [i+1,j-1]$ can be part of base pair formation inside this multiloop. 
		Given  $I = [i,j]^b_{\text{mul}:k}$, there are two cases:
		
		1) If $k-1$ is also unpaired, this will
		result in the new partial structure:

		$\mathcal{S}' = ([i,j]^b_{\text{mul}:k-1}.\delta, \mathcal{P}, E_{L_{\mathcal{S}}})$ such that $E(\mathcal{S}') = M_{i,j,k-2}^\text{b:mul} + E_{L_{\mathcal{S}}} + \sum \limits_{[m,n]^t \in \delta} E([m,n]^t)$. 
		
		2) If the base $k-1$ is paired with another base $d \in [i+1,k-2]$, this will result in the new partial structure: 
		
		$\mathcal{S}' = ([i+1,d-1]^m.[d,k-1]^b.\delta, \mathcal{P} \cup \{(i,j)\}, \Delta G_\text{init}^\text{multi} + 2\Delta G_\text{bp}^\text{multi} + (j-k) \Delta G_\text{nt}^\text{multi} + E_{L_{\mathcal{S}}})$ such that $E(\mathcal{S}') =  M^m_{i+1,d-1} + M^b_{d,k-1} + \Delta G_\text{init}^\text{multi} + 2\Delta G_\text{bp}^\text{multi} + (j-k)\Delta G_\text{nt}^\text{multi} +E_{L_{\mathcal{S}}} 
		+ \sum \limits_{[m,n]^t \in \delta} E([m,n]^t)$. 
		
		\item \textbf{Exterior loop} formation: All bases $z \in [i,j]$ such that $[z,z+1]$ is a nick \cref{fig:sec struct}, transition between two strands, are scanned (\cref{fig:mfe}(b)), leading to the new partial structure: 
		
		$\mathcal{S}' = ([i+1,z]^\square.[z+1,j-1]^\square.\delta, \mathcal{P} \cup \{(i,j)\}, E_{L_{\mathcal{S}}})$ such that $E(\mathcal{S}') = M_{i+1,z} +  M_{z+1,j-1} +  E_{L_{\mathcal{S}}} + \sum \limits_{[m,n]^t \in \delta} E([m,n]^t)$.

	\end{itemize}
	
	Then the acceptance criteria will be checked after each sub-case. Now, with the aid of the introduced new auxiliary matrices $M^\text{b:int}$ and $M^\text{b:mul}$, the backtracking algorithm checks up to $\mathcal{O}(N)$ refined structures, and hence saves up to  $\mathcal{O}(N)$ refined structures to $\mathcal{R}_u$ in the worst case. Without the  new auxiliary matrices (in this case of $t=b$), the backtracking algorithm will check up to $\mathcal{O}(N^2)$ refined structures, and saves up to  $\mathcal{O}(N^2)$ refined structures to $\mathcal{R}_u$.

	\item
	Case $t = m$ 	(recall: the segment $[i,j]^m$ is a part of a multiloop): 
	
	We backtrack in matrix $M^m$. 
	Now, assume the segment $[i,j]^m$ is popped from the stack, based on Eq.~(3) in \cref{fig:mfe},  there are two subcases:
	
	\begin{itemize}
		\item  If there exists exactly one additional base pair $(d,e)$ defining the multiloop (\cref{fig:mfe}(c)), then we will scan for all possible pairs $(d,e)$ that could be used. Following the same strategy of scanning introduced before (i.e.~in case 2. $t=b$, interior loop or multiloop) to reduce time in interior and multiloop formation cases when $t = b$, there are two cases: 
		
		1) If $j$ is unpaired, this will
		result in the new partial structure: 
		
		$\mathcal{S}' = ([i,j-1]^m.\delta, \mathcal{P}, E_{L_{\mathcal{S}}} + \Delta G_\text{nt}^\text{multi})$ such that $E(\mathcal{S}') = M_{i,j-1}^m + \Delta G_\text{nt}^\text{multi} +E_{L_{\mathcal{S}}} + \sum \limits_{[m,n]^t \in \delta} E([m,n]^t)$. 	 	
		
		2) If the base $j$ is paired with another base $d \in [i,j-1]$ defining the multiloop, this will result in the new partial structure: 
		
		$\mathcal{S}' = ([d,j]^b.\delta, \mathcal{P}, E_{L_{\mathcal{S}}} + \Delta G_\text{bp}^\text{multi} + (d-i) \Delta G_\text{nt}^\text{multi})$ such that $E(\mathcal{S}') = M^b_{d,j} + \Delta G_\text{bp}^\text{multi}+ (d-i) \Delta G_\text{nt}^\text{multi} +E_{L_{\mathcal{S}}} 
		+ \sum \limits_{[m,n]^t \in \delta} E([m,n]^t)$.

		\item  If there exist more than one  base pair defining the multiloop, all possible pairs $(d,e)$ are scanned (\cref{fig:mfe}(c)). Following the same strategy of scanning, and using one of the new auxiliary matrices, called $M^\text{m:2}$, in the \snMFE algorithm, \cref{algo:1}.

		The generic form of $I$ in this case will be updated to $I = [i,j]^m_{\text{mul}:k}$ 	such that $k \in [i,j]$, which means that any base $d \in[k,j]$ is unpaired, hence not included in the forming of any base pair inside this multiloop.
		When a new segment $[i,j]^m$ is just generated as a refinement from another segment, 
		$[i,j]^m$ will be interpreted inside this case as $ [i,j]^m_{\text{mul}:j-1} $, which means any base $d \in [i,j]$ can be part of base pair formation inside this multiloop. 
		Given  $I = [i,j]^m_{\text{mul}:k}$, there are two cases:

		1) If the base $k-1$ is also unpaired, this will
		result in the new partial structure: 
		
		$\mathcal{S}' = ([i,j]^m_{\text{mul}:k-1}.\delta, \mathcal{P}, E_{L_{\mathcal{S}}})$ such that $E(\mathcal{S}') = M_{i,j,k-2}^\text{m:2} + E_{L_{\mathcal{S}}} + \sum \limits_{[m,n]^t \in \delta} E([m,n]^t)$. 
		
		2) If the base $k-1$ is paired with another base $d \in [i,k-2]$,  this will result in the new partial structure: 
		
		$\mathcal{S}' = ([i,d-1]^m.[d,k-1]^b.\delta, \mathcal{P},  \Delta G_\text{bp}^\text{multi} + (j-k + 1) \Delta G_\text{nt}^\text{multi} + E_{L_{\mathcal{S}}})$ such that $E(\mathcal{S}') =   M^m_{i,d-1} +  M^b_{d,k-1} + \Delta G_\text{bp}^\text{multi} + (j-k + 1)\Delta G_\text{nt}^\text{multi} +E_{L_{\mathcal{S}}} 
		+ \sum \limits_{[m,n]^t \in \delta} E([m,n]^t)$.

	\end{itemize}
	
	Then the acceptance criteria will be checked after each sub-case. Now, with the aid of the introduced new auxiliary matrices $M^\text{mul:2}$, the backtracking algorithm   checks up to $\mathcal{O}(N)$ refined structures, hence saves up to  $\mathcal{O}(N)$ refined structures to $\mathcal{R}_u$, in the worst case.
	
\end{enumerate}

\begin{remark}
	For any $\mathcal{S}$ and  $\mathcal{S}'$ such that $\mathcal{S}'$ is a refinement of $\mathcal{S}$ based on refinement rules described above, $E(\mathcal{S}) \leq E(\mathcal{S}')$. Also, note that the form of the new refined structure $\mathcal{S}'$ in each case of refinement cases is different, and hence leads to a different fully specified structure which guarantees that each secondary structure $S_u$ encountered during the backtracking is scanned exactly once. 
\end{remark} 

\begin{remark}\label{remark:spaceE}
	For all the cases above where  $\mathcal{S}'$ is refinement of $\mathcal{S}$,  the stack $\delta$ and the base pairs set $\mathcal{P}$ are parts (common) of each refined structure  $\mathcal{S}'$,  hence it is enough to save them once, which  takes $\mathcal{O}(N)$ space, and for each refined structure $\mathcal{S}'$, we need to save only the additional base pairs (one base pair in the worst case, hence $\mathcal{O}(1)$ space), or the new segments (two segments in the worst case, hence $\mathcal{O}(1)$ space) that are pushed on the top of $\delta$ based on the refinement case. In total saving the all $\mathcal{O}(N)$ refined structures that are generated from all cases requires $\mathcal{O}(N)$ space.

\end{remark}  

\begin{remark}\label{remark:timeE}
	For all the cases above where  $\mathcal{S}'$ is refinement of $\mathcal{S}$,  $\mathcal{H} = E_{L_{\mathcal{S}}} + \sum_{[m,n]^t \in \delta} E([m,n]^t)$ 
	is repeatedly used to compute $E(\mathcal{S}')$ again and again, hence it is enough to compute it once, which takes only $\mathcal{O}(N)$ time. 
	In total, if $\mathcal{H}$ is pre-computed once in this way, computing $E(\mathcal{S}')$ takes $\mathcal{O}(1)$ time. 
\end{remark}

After scanning the secondary structure $S_u$ completely, the partially specified structure $\mathcal{S}' \in \mathcal{R}_u$, such that $E(\mathcal{S}') =  \min\limits_{\mathcal{S} \in \mathcal{R}_u}\{ E(\mathcal{S})\}$ will be computed and saved. And a new partially specified structure $\mathcal{S}$ is chosen as follows:

\begin{equation} \label{eq:newS}
	\mathcal{S} = \min\limits_{z \in \left\{1,\ldots,u\right\}}\left\{\min\limits_{\mathcal{S}' \in \mathcal{R}_z}\left\{ E(\mathcal{S}')\right\}\right\} 
\end{equation}

Where $\{S_1,\ldots,S_u\}$ is the set of distinct secondary structures that are completely scanned until the moment, where $u<\mathcal{U}$. Then the minimum ($\min_{\mathcal{S}' \in \mathcal{R}_z}\left\{ E(\mathcal{S}')\right\}$) of the array $\mathcal{R}_{z'}$ where $\mathcal{S}$ is choose from, will be computed again and saved for future iterations. 

\begin{remark}\label{remark:newS}
	\cref{eq:newS} guarantees sequential scanning of the backtracking algorithm through energy levels without skipping any potential structure, due to free energy minimality. Also, note that for all $z \in \{1,\ldots,u\}$, the  $\min_{\mathcal{S}' \in \mathcal{R}_z}\left\{ E(\mathcal{S}')\right\}$ is already computed and saved, as in each iteration we choose only the minimum over the set of all minimum energies of each array, so we lose only one the minimum of some array $\mathcal{R}_{z'}$, so $\mathcal{R}_{z'}$ is the only one we need to compute its minimum again in $\mathcal{O}(N^2)$ time.        
\end{remark} 

The same refinement process starts again with that new selected partially specified structure $\mathcal{S}$. This backtracking procedure continues in this way until one of the three following conditions occurs first: 
\begin{enumerate}
	\item  The algorithm scans an {\em asymmetric} secondary structure $S_u$, then the true MFE $= \Delta G (S_u)$, as a direct consequence of $\Delta G (S_u) \leq \mathcal{B}$, where we recall that $\mathcal{B}$ was the best candidate value for true MFE, or 
	\item  The algorithm exceeds the upper bound $\mathcal{U}$ of the number of symmetric secondary structures to be scanned, then the true MFE $= \mathcal{E}$, the energy of the last scanned energy level which is also the \snMFE of that last completely scanned symmetric secondary structure $S_u$, as a direct consequence of \cref{lem:polyub,lem:sand,lem:sand2} (meaning we have two symmetric structures of  \snMFE equal to $\mathcal{E}$ with the same admissible cut, hence we can make a new pizza: an asymmetric secondary structure of true MFE $\mathcal{E}$), or
	\item  The algorithm starts scanning a new energy level $\mathcal{E}' > \mathcal{B}$, then the true MFE $= \mathcal{B}$, as $\mathcal{B}$ is the best candidate for the true MFE that we got from a previously scanned symmetric secondary structure.   
\end{enumerate}
Whichever of the three cases occurs, the true MFE is returned (and a secondary structure with that true MFE is constructed).

\subsection{Time and space complexity analysis of the backtracking algorithm}\label{sec:backtime}

The backtracking algorithm needs to scan up to $\mathcal{U} = \mathcal{O}(N^2)$ secondary structures in the worst case ({\cref{lem:polyub}}), so the total time complexity of the backtracking algorithms is $\mathcal{O}(\mathcal{U} \mathcal{W})$, where $\mathcal{W}$ is the time complexity of scanning only one secondary structure and setting up the scene for the next iteration by choosing the new partially specified structure required to scan the next secondary structure. 

\paragraph{Analysis for scanning only one secondary structure in the backtracking algorithm.}
To scan (construct) one secondary structure $S_u$, we need in the worst case  $N = \mathcal{O}(N)$ refinement steps, as each step either ignores a base or forms a base pair. We showed in our analysis, in \cref{sec:backhigh}, and based on $t \in \{\square,b,m\}$, that each step checks up to $\mathcal{O}(N)$ refined structures and saves up to $\mathcal{O}(N)$ refined structures to $\mathcal{R}_u$, the array of refined structures associated with $S_u$. In total, by \cref{remark:timeE}, scanning one secondary structure takes $\mathcal{O}(N^2)$, as $\mathcal{R}_u$ contains $\mathcal{O}(N^2)$ structure, hence computing the minimum of $\mathcal{R}_u$ takes $\mathcal{O}(N^2)$ time. 

The last step is to see what is the time complexity of choosing the new partially specified structure $\mathcal{S}$ for the next iteration based on \cref{eq:newS}, from \cref{remark:newS} this step takes $\mathcal{O}(N^2)$ time, as the minimum of all the $\mathcal{U} = \mathcal{O}(N^2)$ arrays is already computed and stored. 

So, in total scanning one secondary structure and setting up the scene for the next iteration by choosing the new partially specified structure $\mathcal{S}$ takes $\mathcal{O}(N^2)$ time.

\cref{remark:spaceE} shows that each array $\mathcal{R}_u$ requires $\mathcal{O}(N^2)$ space, hence in total the backtracking algorithm requires $\mathcal{O}(N^4)$ space to store the all $\mathcal{R}_u$ such that $u \in \{1, \ldots, \mathcal{U}\}$. This analysis leads to the following result.

\begin{lemma}\label{lem:BTtime}
	The running time  of the backtracking algorithm,   \cref{algo:2} and \cref{sec:backhigh},
	for a set of $c = \mathcal{O}(1)$ DNA or RNA strands of total length  $N$ bases, is $\mathcal{O}(N^4(c-1)!)$, and it requires $\mathcal{O}(N^4)$ space.  
\end{lemma}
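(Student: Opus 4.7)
The plan is to decompose the total running time into the product of (i) the worst-case number $\mathcal{U}$ of secondary structures that the algorithm will ever fully scan, (ii) the per-structure cost $\mathcal{W}$, and (iii) the outer $(c-1)!$ factor from iterating over circular permutations. By \cref{lem:polyub} we already have $\mathcal{U} = \mathcal{O}(N^2)$, so the heart of the argument is to establish $\mathcal{W} = \mathcal{O}(N^2)$. I would formally split $\mathcal{W}$ into the cost of performing the $\mathcal{O}(N)$ successive refinements needed to drive a single root partial structure $([1,N]^\square, \emptyset, 0)$ down to a fully specified structure $S_u$, plus the bookkeeping cost of choosing the next partial structure via \cref{eq:newS}.

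For the refinement cost I would go case by case over $t \in \{\square, b, m\}$ from \cref{sec:backhigh}. For $t=\square$, the subcase analysis gives $\mathcal{O}(N)$ candidate refinements directly. The delicate cases are $t=b$ (interior loop and multiloop formation) and $t=m$ (multiloop with more than one additional base pair), where a naive bound would give $\mathcal{O}(N^2)$ refinements. Here I would invoke the presence of the auxiliary matrices $M^{\mathrm{b:int}}$, $M^{\mathrm{b:mul}}$, $M^{\mathrm{m:2}}$ produced by \cref{algo:1}: each such case decomposes the two-dimensional scan over $(d,e)$ into an $\mathcal{O}(N)$-branch recursion on a single new parameter $k$, which is exactly what these matrices pre-tabulate. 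Combined with \cref{remark:timeE}, which lets us precompute $\mathcal{H} = E_{L_{\mathcal{S}}} + \sum_{[m,n]^t\in\delta} E([m,n]^t)$ once and then evaluate $E(\mathcal{S}')$ in $\mathcal{O}(1)$, each refinement step costs $\mathcal{O}(N)$. Thus a single structure is scanned in $\mathcal{O}(N)\cdot\mathcal{O}(N) = \mathcal{O}(N^2)$ time.

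For the inter-iteration overhead, I would argue via \cref{remark:newS} that the algorithm only needs to recompute a single array's minimum per iteration, namely $\min_{\mathcal{S}'\in\mathcal{R}_{z'}} E(\mathcal{S}')$ for the array from which the previous $\mathcal{S}$ was drawn; since $|\mathcal{R}_{z'}| = \mathcal{O}(N^2)$ this recomputation is $\mathcal{O}(N^2)$, matching the refinement cost. Taking the outer minimum in \cref{eq:newS} over the already-maintained per-array minima takes $\mathcal{O}(\mathcal{U}) = \mathcal{O}(N^2)$ time. Combining, $\mathcal{W} = \mathcal{O}(N^2)$, so the backtracking runs in $\mathcal{O}(\mathcal{U}\cdot\mathcal{W}) = \mathcal{O}(N^4)$; multiplying by the outer loop over the $(c-1)!$ circular permutations from \cref{rm:PolySpi} gives the claimed $\mathcal{O}(N^4(c-1)!)$.

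For the space bound I would appeal to \cref{remark:spaceE}: each refined $\mathcal{S}'$ stored in $\mathcal{R}_u$ only contributes $\mathcal{O}(1)$ incremental state on top of an $\mathcal{O}(N)$ shared prefix of stack and base-pair set, so an array with $\mathcal{O}(N^2)$ refinements uses $\mathcal{O}(N^2)$ space, and maintaining $\mathcal{U} = \mathcal{O}(N^2)$ such arrays yields $\mathcal{O}(N^4)$ total. The permutation loop can be processed sequentially and reuses the same space, so the $(c-1)!$ factor does not enter the space bound. The main obstacle I anticipate is justifying rigorously that the per-step refinement count really is $\mathcal{O}(N)$ once one tracks the ``pending'' parameters $k$ introduced by $M^{\mathrm{b:int}}$, $M^{\mathrm{b:mul}}$, $M^{\mathrm{m:2}}$; one must verify that each new segment type $[i,j]^b_{\mathrm{int}:k}$, $[i,j]^b_{\mathrm{mul}:k}$, $[i,j]^m_{\mathrm{mul}:k}$ only ever branches in one index per refinement and that the resulting decision tree over a single structure still has $\mathcal{O}(N)$ depth, so that the $\mathcal{O}(N^2)$ per-structure bound is tight rather than inflated by the extra parameter.
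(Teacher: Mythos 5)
Your proposal is correct and follows essentially the same route as the paper's own analysis in \cref{sec:backtime}: the same decomposition into $\mathcal{O}(\mathcal{U}\cdot\mathcal{W}\cdot(c-1)!)$ with $\mathcal{U}=\mathcal{O}(N^2)$ from \cref{lem:polyub} and $\mathcal{W}=\mathcal{O}(N^2)$ obtained from $\mathcal{O}(N)$ refinement steps of $\mathcal{O}(N)$ candidates each (via the auxiliary matrices and \cref{remark:timeE}), plus the $\mathcal{O}(N^2)$ per-iteration minimum maintenance of \cref{remark:newS} and the space accounting of \cref{remark:spaceE}. No substantive differences to report.
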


\begin{remark}\label{remark:newT}
	We should note that, changing the strategy, used in \cref{eq:newS}, for choosing the new partially specified structure $\mathcal{S}$ can help in reducing the space complexity of the backtracking algorithm from $\mathcal{O}(N^4)$ to $\mathcal{O}(N^3)$ (the same space complexity as \snMFE \cref{algo:1}) with the trade off increasing the time complexity to be $\mathcal{O}(N^4 \log N(c-1)!)$ instead of $\mathcal{O}(N^4 (c-1)!)$. As we know that we need to scan only $\mathcal{U} = \mathcal{O}(N^2)$ secondary structures, so we do not need to store all elements of arrays $\mathcal{R}_z$ where $z \in \{1,\ldots,u\}$. 
	Only the minimum $\mathcal{U}$ candidates should be stored. By sorting $\mathcal{R}_u$, the array of partially specified structures that we obtain after constructing the secondary structure $S_u$ (the secondary structure number $u$, of the worst case $\mathcal{U}$). $\mathcal{R}_u$ can be sorted in $\mathcal{O}(N^2 \log N)$ time then merged in $\mathcal{O}(N^2)$ time with the already sorted array, that we obtain cumulatively through time from the previous iterations.        
\end{remark} 

As we noted before in \cref{lem:even}, that the (bad) quadratic upper bound $\mathcal{U}$ in \cref{lem:ub2} is very restricted and rare, and 
for systems of $c$ strands ($k$ strand types) where the repetition number of some strand type is odd, \cref{lem:even} shows that the upper bound $\mathcal{U}$ is linear. Hence, the time complexity of the backtracking algorithm is $\mathcal{O}(N^3(c-1)!)$ and it requires $\mathcal{O}(N^3)$ space.

\section{Time and space analysis of MFE algorithm}\label{sec:analysis}

\main*

\begin{proof}
	Dirks et al's \snMFE algorithm runs in time $\mathcal{O}(N^4 (c-1)!)$ and space $\mathcal{O}(N^2)$~\cite{dirks2007thermodynamic}. 
	In \cref{algo:1}, we give their  \snMFE pseudocode but augmented with three  matrices $M^\text{b:int}$, $M^\text{b:mul}$, and $M^\text{m:2}$,  
	with no asymptotic change to run time but an increase to $\mathcal{O}(N^3)$ space. 
	Also, by \cref{lem:BTtime} the time complexity of our backtracking algorithm, \cref{algo:2}, is $\mathcal{O}(N^4 (c-1)!)$, and the space complexity is $\mathcal{O}(N^4)$.
	
	Hence after running both algorithms we get an 
	$\mathcal{O}(N^4 (c-1)!)$ algorithm for the 
	MFE unpseudoknotted secondary structure prediction problem, including rotational symmetry, with  $\mathcal{O}(N^4)$ as space complexity.
	
	By \cref{remark:newT} we can get another alternative algorithm of  $\mathcal{O}(N^4 \log N(c-1)!)$ time and $\mathcal{O}(N^3)$ space. 
\end{proof}

	\subsection*{Acknowledgements} 
	We thank Constantine Evans for his helpful comments on the statistical mechanics origin of the MFE rotational symmetry penalty, and Dave Doty for comments on the manuscript. Ahmed Shalaby would like to thank Dvořák for composing his masterpiece, Symphony No.~9 ``From the New World''. 
	
	\bibliographystyle{plainurl}   
	\bibliography{bib}
	
	\newpage
	\appendix
\section{Appendix: Useful lemmas}\label{sec:lemmasApp}
\begin{lemma}[factors of $\pi$]\label{lem:factors}
	For any nonempty circular permutation $\pi$ 
	and any prefix $y$ of $\pi$ that is not its fundamental component $x$, 
	such that $|y|>|x|$, 
	if $\pi = y^n$ then $|x|$ divides $|y|$.    
\end{lemma}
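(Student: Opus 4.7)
\bigskip

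\noindent\textbf{Proof proposal for Lemma (factors of $\pi$).}

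The plan is to reduce the claim to a standard string-periodicity argument. Set $a=|x|$, $b=|y|$, and $N=|\pi|$; from $\pi=x^{v(\pi)}$ and $\pi=y^{n}$ we get $N=v(\pi)\,a=nb$, so $a\mid N$ and $b\mid N$. Write $b=qa+r$ with $0\le r<a$; the goal is to show $r=0$. The case $n=1$ is immediate since then $b=N$ and $a\mid N$, so I will focus on $n\ge 2$ and derive a contradiction from assuming $r>0$.

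First I would identify $\pi$ with the infinite periodic word obtained from $x$, so that $\pi[j]=x\bigl[((j-1)\bmod a)+1\bigr]$ for every $j\in[1,N]$. Since $n\ge 2$, both the window $\pi[1..b]=y$ and the window $\pi[b+1..2b]=y$ are equal. Comparing these two occurrences of $y$ index-by-index gives, for every $i\in[1,b]$,
\[
x\bigl[((r+i-1)\bmod a)+1\bigr] \;=\; x\bigl[((i-1)\bmod a)+1\bigr].
\]
Because $b>a$, the index $i$ sweeps a full residue system modulo $a$, so this identity says precisely that the cyclic rotation of $x$ by $r$ positions equals $x$ itself.

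Now the key step: the set of cyclic rotations that fix $x$ forms a subgroup of $\mathbb{Z}/a\mathbb{Z}$, and since it contains $r$, it contains $d:=\gcd(r,a)$. This forces $x=z^{a/d}$, where $z$ is the length-$d$ prefix of $x$. Plugging back, $\pi=x^{v(\pi)}=z^{a\,v(\pi)/d}$, exhibiting $z$ as a repeating prefix of $\pi$ with $|z|=d\le r<a=|x|$. This contradicts the minimality of $x$ as the fundamental component of $\pi$. Hence $r=0$ and $a\mid b$, as required.

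The only delicate point I anticipate is bookkeeping in the cyclic-shift step: one has to make sure that the equality $\pi[1..b]=\pi[b+1..2b]$ genuinely ranges over a complete residue system mod $a$ (it does, because $b>a$), and that the subsequent periodicity conclusion on $x$ is stated cyclically rather than only as an ordinary string period. Once that indexing is pinned down, the contradiction with the minimality of $|x|$ is immediate, so I do not expect further obstacles. If one prefers a more off-the-shelf argument, the same conclusion can be obtained by invoking the Fine-Wilf theorem on $\pi$ with periods $a$ and $b$: since $N\ge 2b\ge a+b\ge a+b-\gcd(a,b)$, the string $\pi$ has period $\gcd(a,b)$, but $a$ is the shortest period of $\pi$, forcing $\gcd(a,b)=a$ and hence $a\mid b$.
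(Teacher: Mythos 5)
Your proposal is correct and follows essentially the same route as the paper: both arguments use the division algorithm to extract the remainder $r=|y|\bmod|x|$ as a period of the (periodic extension of the) word and contradict the minimality of the fundamental component $x$. In fact your version is slightly more careful than the paper's, since your subgroup/gcd step (or the Fine--Wilf variant) guarantees that the shorter period $d=\gcd(r,|x|)$ actually divides $|x|$ and hence yields a genuine repeating \emph{prefix}, a point the paper's proof glosses over when it passes directly from ``period $b<|x|$'' to ``smaller repeating prefix.''
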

\begin{proof}
	Suppose for contradiction that such $y$ and $n$ exist, so $x^{v(\pi)} = y^n$ which means that $v(\pi)|x| = n|y|$. As $|x|$ does not divide $|y|$, from division algorithm we can write $|y| = a |x| + b$ for some remainder $0<b<|x|$.
	
	Suppose the infinite string $W$ that is just an infinite repeat of $\pi$ and hence an infinite repeat of $x$ and $y$. Let's suppose a simple function $f(i)$ that return the character at index $i$ of $W$, as $\pi = x^{v(\pi)} = y^n$ that means that $f(i) = f(i+\alpha|x|) = f(i+\beta|y|)$ for any $\alpha$ and $\beta$. Then $f(i) = f(i+ a\beta |x| + \beta b)$. For $\beta = 1$ that implies that $f(i) = f(i+b)$, which implies the  existence of a smaller repeating prefix of $\pi$ since $b<|x|$ which contradicts the fact that $x$ is the fundamental component of $\pi$.   
\end{proof}

The following lemma restricts us to deal with only specific and constant number of different folding rotational symmetries, and hence constant different symmetry corrections in total. 
\begin{lemma}\label{lem:div}
	If $S$ is $R$-fold rotational symmetric secondary structure, with a specific circular permutation $\pi$, then $R$ must be a divisor of $v(\pi)$.
\end{lemma}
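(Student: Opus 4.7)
The plan is to derive the claim as an immediate consequence of the group-theoretic setup established in \cref{sec:sym} together with Lagrange's theorem. By hypothesis and \cref{def:sym}, $S$ being $R$-fold rotationally symmetric means there exists a subgroup $H \leq G^\pi$ with $|H| = R$ whose elements preserve $\PolySpi$'s base-pair set under the induced rotation on fundamental-component indices. The paper has already established that $G^\pi$ is cyclic of order $v(\pi)$, generated by $\rho = (1\ 2\ \cdots\ v(\pi))$. Since the order of any subgroup of a finite group divides the order of the group, we immediately conclude that $R = |H|$ divides $|G^\pi| = v(\pi)$.

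For geometric intuition, the same conclusion can be read off directly from the polymer graph embedding: the strand-type labels around the boundary of $\PolySpi$ spell $\pi = x^{v(\pi)}$, which has rotational period exactly $360^\circ/v(\pi)$. Any rotation of the disk that maps $\PolySpi$ to itself must send each strand-type label to a matching strand-type label, so its angle must be an integer multiple of $360^\circ/v(\pi)$. An $R$-fold rotational symmetry group is generated by a rotation of angle $360^\circ/R$, and requiring this to be an integer multiple of $360^\circ/v(\pi)$ is equivalent to $R \mid v(\pi)$; this gives an independent verification that agrees with the Lagrange argument.

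The proof is genuinely short, so the only step requiring any care is checking that \cref{def:sym} really does pin $H$ down as a subgroup of the cyclic group $G^\pi$ (rather than as an arbitrary set of rotation operations acting on the disk). Once this is observed, Lagrange's theorem finishes the argument in one line, and I do not anticipate any substantive obstacle: the lemma is better viewed as a bookkeeping consequence of the definitions than as a result whose proof requires real work. Its main utility is downstream, where it lets later lemmas (e.g.~\cref{lem:even} and the cut-counting in \cref{lem:ub}) restrict attention to a constant number of candidate symmetry degrees once $v(\pi)$ is fixed.
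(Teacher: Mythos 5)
Your proof is correct and takes essentially the same route as the paper: both arguments observe that \cref{def:sym} exhibits the symmetry as a subgroup $H \leq G^\pi$ with $|H| = R$ and then apply Lagrange's theorem to the cyclic group $G^\pi$ of order $v(\pi)$. The paper additionally remarks (via the fundamental theorem of cyclic groups) that $H$ is the unique subgroup of its order, but that is not needed for the divisibility claim itself.
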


\begin{proof}
	As $R = |H|$ a subgroup of $G^\pi$, and from Lagrange theorem for finite groups~\cite{nicholson2012introduction}, which says that for any finite group $G$, the order of every subgroup of $G$ divides the order of $G$. So, R divides $|G^\pi| = v(\pi)$. Also, from the fundamental theorem of cyclic groups, as we know that $H$ is a subgroup of the cyclic group $G^\pi$, then $H = \langle \rho^d \rangle$ for some $\rho^d \in G^\pi$, $H$ is generated by $\rho^d$. Hence, $H$ is the \emph{unique} subgroup of order $|H|= v(\pi)/d$. 
\end{proof}

\begin{lemma} \label{lem:nobase}
	For any connected unpseudoknotted secondary structure $S$, if there exists at least one base pair $(i,j)$ such that $\llbracket i,j \rrbracket > \frac{N}{R}$, then $S$ can not be  $R$-fold rotationally symmetric. 
\end{lemma}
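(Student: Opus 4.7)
The plan is to argue by contradiction. Suppose that $S$ is $R$-fold rotationally symmetric with strand ordering $\pi$ and that there is a base pair $(i,j) \in \PolySpi$ with $L := l[i,j] > N/R$. First I would translate the group-theoretic symmetry of \cref{def:sym} into a concrete geometric action: the order-$R$ subgroup $H \leq G^\pi$ is generated by a rotation that shifts every base position by exactly $N/R$ around the circumference of $\PolySpi$ (so in particular $R$ divides $N$). Applying this rotation to $(i,j)$ produces a base pair $(i',j') := (i + N/R,\ j + N/R) \bmod N$ that must also lie in $\PolySpi$.

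The core step is to show that $(i,j)$ and its rotated image $(i',j')$ cannot coexist in the polymer graph. Because $L > N/R$, the shift $N/R$ is strictly smaller than the shorter-arc length, so $i' = i + N/R$ lies in the closed shorter arc $\llbracket i, j \rrbracket$. I would then split into two subcases. If $L > N/R + 1$, then $i'$ is strictly between $i$ and $j$ in the cyclic order along $\llbracket i,j \rrbracket$, while $j' = j + N/R$ is strictly outside $\llbracket i, j \rrbracket$ (on the complementary longer arc); this interleaving of the endpoints of $(i',j')$ with those of $(i,j)$ is precisely the combinatorial criterion for two chords to cross in $\PolySpi$, producing a pseudoknot and contradicting the unpseudoknottedness of $S$ (recall that for a rotationally symmetric structure the symmetric ordering $\pi$ must also be the planar one). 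If instead $L = N/R + 1$, then $i' = j$ and so $(i',j') = (j, j+N/R)$ shares the base $j$ with $(i,j)$, which violates the defining condition that each base appears in at most one pair of a secondary structure.

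The main obstacle I anticipate is the degenerate case in which the rotated chord coincides with the original one. A short calculation shows this can happen only when $R = 2$ and $L = N/2 + 1$, i.e., $(i,j)$ is a ``diameter'' of $\PolySpi$; here the pigeonhole-style argument above does not yield a distinct second chord. In this remaining case the $180^\circ$ rotation swaps the endpoints $i$ and $j$, so the symmetry forces these two bases to occupy indistinguishable positions in $\pi = yy$ and hence to carry the same nucleotide type, which is incompatible with $(i,j)$ being a Watson--Crick base pair whose endpoints must be complementary. Treating this edge case cleanly, separately from the main interleaving argument, will be the main routine-but-careful part of the proof; once it is handled, every regime of $R$ and $L$ produces the required contradiction.
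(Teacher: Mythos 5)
Your proposal is correct and follows essentially the same route as the paper: both use the rotational images of the long chord to force a crossing (pseudoknot) in the generic case, and both dispose of the remaining $R=2$ diameter case by observing that symmetry would force base $i$ to be complementary to a base of its own type. The only difference is presentational — the paper sums the lengths of all $R$ rotated arcs in one pigeonhole step, while you track a single rotation by $N/R$ and explicitly separate the shared-endpoint boundary case $l[i,j]=N/R+1$, which the paper's wording glosses over but which leads to the same contradiction.
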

\begin{proof}
	For all $R>2$, suppose for the sake of contradiction that $S$ is $R$-fold rotational symmetric secondary structure and such base pair $(i,j)$ exists, from symmetry this implies the existence of another $(R-1)$ different base pairs each with same segment length as $\llbracket i,j \rrbracket$, so the total length of the system must be higher than number of bases $N$, so at least two base pairs must intersect forming a pseudoknot, which contradicts the fact that $S$ is pseudoknot-free. 
	
	For $R=2$, suppose such base pair $(i,j)$ exists, then this can only happen if $(i,j)$ is a central base pair, a diameter in $\PolySpi$, such that $\llbracket i,j \rrbracket = \frac{N}{2}+1 >  \frac{N}{2}$. This is impossible as the symmetry implies that $i$ and $j$ are of the same type, that there exists a base which is complement to itself.  
\end{proof}

 \newpage

\section{Appendix: \SymnMFE  (\snMFE) algorithm}\label{sec:AlgoMFE}

\cref{algo:1}, shown in \cref{fig:mfe}, computes \snMFE  for a constant number, $c=\mathcal{O}(1)$, of interacting nucleic acid strands. 
We should note that: \cref{algo:1} is a straightforward conversion of the partition function algorithm from Dirks et al. \cite{dirks2007thermodynamic}. 
\cref{algo:1} ignores rotational symmetry, 
if the predicted \snMFE structure from this algorithm happens to be asymmetric, 
then the output of \cref{algo:1} is the true MFE as there is no symmetry correction penalty for asymmetric secondary structures.
However, if the \snMFE structure is an $R$-fold symmetric secondary structure, then its free energy must be corrected by by $+k_\mathrm{B} T \log R$, a positive value, then it is not guaranteed that the \snMFE will be the true MFE without scanning all secondary structures in the window of $k_\mathrm{B} T \log R$ above \snMFE, and applying any needed symmetry corrections to the free energy of each secondary structure that lies in that window. Wuchty et al. \cite{wuchty1999complete} showed that this window of secondary structures could scale exponentially with $N$, which shows why this strategy fails. But in this work, we proved that only a polynomial number of these structures are enough for predicting the true MFE. 

We introduced new three-dimensional matrices $M^\text{b:int}, M^\text{b:mul}$,  $M^\text{m:2}$ to help in reducing the time complexity of the backtracking algorithm. For any segment $[i,j]^b$, if $i+2 \leq k \leq j-1$, $M_{i,j,k}^\text{b:int}$ will contain the minimum free energy attainable from the segment $[i,j]^b$ such that all bases $d$, such that $k<d<j$ are unpaired, and there exist exactly one base pair $(m,n)$, such that $i+1 \leq m < n \leq k$, $(i,j)$ and $(m,n)$ are forming together an internal loop. The same for $M_{i,j,k}^\text{b:mul}$, except that there exist more than one base pair $(m,n)$, such that $i+1 \leq m < n \leq k$, forming together a multiloop with $(i,j)$.

For any segment $[i,j]^m$, if $i+1 \leq k \leq j$, $M_{i,j,k}^\text{m:2}$ contains the minimum free energy attainable from the segment $[i,j]^m$ such that all bases $d$, such that $k<d \leq j$ are unpaired, and there exist more than one base pair $(m,n)$, such that $i \leq m < n \leq k$, forming together a multiloop with $(i,j)$.

\begin{figure}[H]
	\centering\includegraphics[width=1\textwidth]{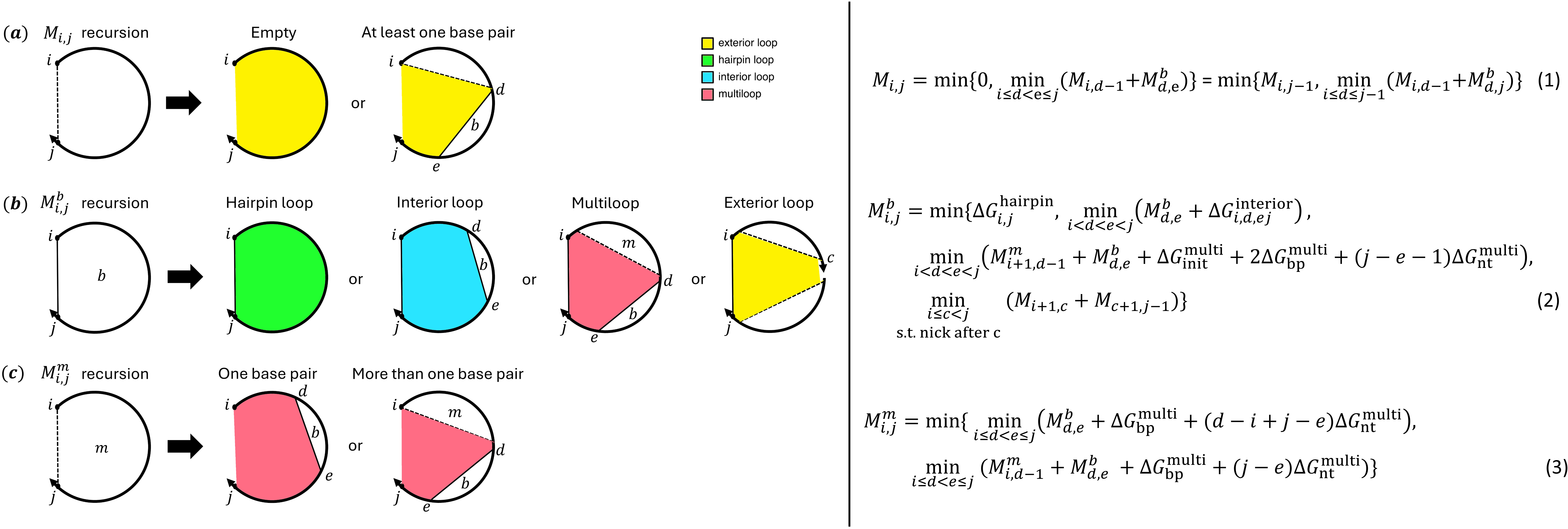}
	\caption{\snMFE dynamic program recursion diagrams (left) and recursion equations (right). A solid straight line indicates a base pair and
		a dashed line demarcates a region without implying that the connected bases are paired. Shaded regions correspond to loop free energies that are	explicitly incorporated at the current level of recursion. See \cite{dirks2003partition, fornace2020unified} for  full details.  
	}\label{fig:mfe}
\end{figure}

\begin{algorithm}[H] 
	\caption{\small\SymnMFE (\snMFE) algorithm pseudocode  that takes as input: $c=\mathcal{O}(1)$ strands with total number of bases (length) $N$ and strand ordering $\pi$. 
		Runs in  $\mathcal{O}(N^4)$ time and $\mathcal{O}(N^3)$ space with
		recursive calls illustrated in \cref{fig:mfe}. Nicks between strands are denoted by half indices (e.g.~$x+ \frac{1}{2}$). 
		The function $\eta[i+ \frac{1}{2}, j+\frac{1}{2}]$ returns the number of nicks in the interval $[i+ \frac{1}{2}, j+\frac{1}{2}]$. 
		The shorthand $\eta[i+ \frac{1}{2}]$ is equivalent to $\eta[i+ \frac{1}{2}, i+\frac{1}{2}]$ and by convention, $\eta[i+ \frac{1}{2}, i-\frac{1}{2}] =0$.
	} \label{algo:1}
	\begin{algorithmic}[1]
		\footnotesize
		\State Initialize $M, M^b, M^m, M^\text{b:int}, M^\text{b:mul}$,  $M^\text{m:2}$  by setting all values to $+\infty$, except $M_{i,i-1} = 0$ for all $i=1,\ldots,N$
		
		\For{$l \gets 1 \ldots N$}
		\For{$i \gets 1 \ldots N-l+1$}
		\State $j = i+l-1$
		
		\Comment{$M^b$ recursion equations} 
		\If{$\eta[i+\frac{1}{2}, j-\frac{1}{2}] ==0$}
		$M_{i,j}^b =\Delta G_{i,j}^\text{hairpin}$
		\Comment{hairpin loop requires no nicks}

		\EndIf 
		
		\State $\text{min}_\text{int}^b =  \text{min}_\text{mul}^b = \text{min}_\text{mul}^m = +\infty$

		\For{$e \gets i+2 \ldots j-1$}
		\Comment{loop over all possible $3'$-most pairs $(d,e)$} 
		
		\For{$d \gets i+1 \ldots e-1$} 
		
		\If{$\eta[i+\frac{1}{2}, d-\frac{1}{2}] ==0$ and $\eta[e+\frac{1}{2}, j-\frac{1}{2}] ==0$}
		$M_{i,j}^b = \min \{ M_{i,j}^b, M_{d,e}^b + \Delta G_{i,d,e,j}^\text{interior} \}$
		
		\If{$(M_{d,e}^b + \Delta G_{i,d,e,j}^\text{interior}) < \text{min}_\text{int}^b $}
		$\text{min}_\text{int}^b = M_{d,e}^b + \Delta G_{i,d,e,j}^\text{interior} $
		
		\EndIf

		\EndIf 
		
		\If{$\eta[e+\frac{1}{2}, j-\frac{1}{2}] ==0$ and $\eta[i+\frac{1}{2}] ==0$ and $\eta[d-\frac{1}{2}] ==0$}
		\Comment{multiloop: no nicks} 
		
		\State  $M_{i,j}^b = \min \{ M_{i,j}^b, M^b_{d,e} + M^m_{i+1,d-1} + \Delta G_\text{init}^\text{multi} + 2\Delta G_\text{bp}^\text{multi} + (j-e-1)\Delta G_\text{nt}^\text{multi}\}$
		
		\If{$(  M^m_{i+1,d-1} +  M^b_{d,e} + \Delta G_\text{init}^\text{multi} + 2\Delta G_\text{bp}^\text{multi} + (j-e-1)\Delta G_\text{nt}^\text{multi}) < \text{min}_\text{mul}^b$}
		
		\State $\text{min}_\text{mul}^b =   M^m_{i+1,d-1} + M^b_{d,e} + \Delta G_\text{init}^\text{multi} + 2\Delta G_\text{bp}^\text{multi} + (j-e-1)\Delta G_\text{nt}^\text{multi} $
		
		\EndIf
		
		\EndIf
		\EndFor
		
		\State $M_{i,j,e}^\text{b:int} = \text{min}_\text{int}^b$; \ \  
		$M_{i,j,e}^\text{b:mul} = \text{min}_\text{mul}^b$ \Comment{for the new auxiliary matrices}

		\EndFor

		\For{$x \in \{i, \ldots,j+1\}$ s.t. $\eta[x+\frac{1}{2}] = 1$} 
		\Comment{loop over all nicks $\in [i+\frac{1}{2}, j-\frac{1}{2}]$} 
		
		\If{($\eta[i+\frac{1}{2}] == 0$ and $\eta[j-\frac{1}{2}] == 0$) or ($i==j-1$) or  ($x==i$ and $\eta[j-\frac{1}{2}] == 0$) or
			\par \hskip\algorithmicindent  
			\hspace{6 mm} ($x==j-1$ and $\eta[i+\frac{1}{2}] == 0$)}
		
		\State  $M_{i,j}^b = \min \{ M_{i,j}^b, M_{i+1,x} + M_{x+1,j-1} \}$\Comment{exterior loops} 
		
		\EndIf
		\EndFor

		\\	
		\Comment{$M, M^m$ recursion equations} 
		
		\If{$\eta[i+\frac{1}{2}, j-\frac{1}{2}] == 0$} 
		$M_{i,j} = 0$\Comment{empty substructure} 
		
		\EndIf
		
		\For{$e \gets i+1 \ldots j$} \Comment{loop over all possible $3'$-most pairs $(d,e)$}
		
		\For{$d \gets i \ldots e-1$}
		
		\If{$\eta[e+\frac{1}{2}, j-\frac{1}{2}] == 0$} 
		
		\If{$\eta[d-\frac{1}{2}] == 0$ or $d==i$}
		$M_{i,j} = \min \{M_{i,j}, M_{i,d-1} + M_{d,e} \}$
		\EndIf
		
		\If{$\eta[i+\frac{1}{2}, d-\frac{1}{2}] == 0$}
		
		\State $M_{i,j}^m = \min \{M_{i,j}^m, M^b_{d,e} + \Delta G_\text{bp}^\text{multi} + (d-i + j-e)\Delta G_\text{nt}^\text{multi}\}$
		\Comment{single base pair}
		
		\EndIf
		\If{$\eta[d-\frac{1}{2}] == 0$}
		\State $M_{i,j}^m = \min \{M_{i,j}^m, M^b_{d,e} + M^m_{i,d-1} + \Delta G_\text{bp}^\text{multi} + (j-e)\Delta G_\text{nt}^\text{multi}\}$
		\Comment{more than one base pair}
		
		\If{$( M^m_{i,d-1} +  M^b_{d,e} +  \Delta G_\text{bp}^\text{multi} + (j-e)\Delta G_\text{nt}^\text{multi}) < \text{min}_\text{mul}^m $}
		
		\State $\text{min}_\text{mul}^m  =   M^m_{i,d-1} + M^b_{d,e} + \Delta G_\text{bp}^\text{multi} + (j-e)\Delta G_\text{nt}^\text{multi}$
		
		\EndIf
		
		\EndIf
		
		\EndIf
		
		\EndFor
		\State $M_{i,j,e}^\text{m:2} = \text{min}_\text{mul}^m $
		\EndFor
		
		\EndFor
		\EndFor \Comment{next line returns the \snMFE for ordering $\pi$, and several matrices for future backtracking} 
		\State \Return $M_{1,N} + (c-1) \Delta G^\text{assoc}$; 
		and matrices: $M, M^b, M^m, M^\text{b:int}, M^\text{b:mul}$,  $M^\text{m:2}$

	\end{algorithmic}
\end{algorithm}


\section{Appendix: Backtracking algorithm to find the true MFE }\label{app:backalgo}

In this backtracking algorithm, we use  $[i,j]_{q:k}^t $ to denote the generic form of segment that has been popped from the stack $\delta$, where $t \in \{\square,b,m\}$, and $q \in \{\text{mul}, \text{int},  \text{null}\}$, where null means (nothing), and $k \in [i,j]$. For the full details, see \cref{sec:backhigh}.

\begin{algorithm}[t] 
	\caption{\small Backtracking pseudocode that takes as input: $c=\mathcal{O}(1)$ strands with total number of bases (length) $N$ and strand ordering $\pi$. 
		Runs in  $\mathcal{O}(N^4)$ time and $\mathcal{O}(N^4)$ space, and assumes there are $k \leq c$ strand types given as a \emph{multiset}, 
		each with an associated repetition number $n_1, ..., n_k \in \mathbb{N}$, such that $n_1+ ...+n_k = c$, with total length $N$. 
		$[i,j]^t \Leftarrow \delta$ denotes  popping an element from stack $\delta$, 
		which is a segment, and assigning it to the generic segment $[i,j]^t$.  
		And $\mathcal{S} \Rightarrow \mathcal{R}$ denotes pushing structure $\mathcal{S}$ onto stack $\mathcal{R}$, and  $E(\mathcal{S})$ is defined in \cref{eq:ES}, and all refinement cases are analyzed in \cref{sec:backhigh}.
	} \label{algo:2}
	\begin{algorithmic}[1]
		\footnotesize	
		\If{($n_1, n_2, \ldots, n_k$) are all even}\Comment{use \cref{lem:even} to set symmetric structure upperbound $\mathcal{U}$}
		\State $\mathcal{U} =  \frac{N-c}{v(\pi)} \left[ \sigma(v(\pi))-v(\pi) \right] + \frac{N^2}{16}$  \Comment{in $\mathcal{O}(1)$ time}
		
		\Else
		
		\State $\mathcal{U} =  \frac{N-c}{v(\pi)} \left[ \sigma(v(\pi))-v(\pi) \right]$
		
		\EndIf
		
		\State $\mathcal{E} =$ \snMFE \Comment{\snMFE is returned by \cref{algo:1}}
		
		\State $\mathcal{B} = \mathcal{E}+ k_\mathrm{B} T \log v(\pi)$ \Comment{where $v(\pi)$ is the highest symmetry degree for the $c$-strands ordering $\pi$}

		\State $\mathcal{S} = ([1,N]^\square, \phi,0)$ \Comment{the initial system (the parent of any possible structure)}
		
		\State $(\delta, \mathcal{P}, E_{L_{\mathcal{S}}}) = \mathcal{S}$
		
		\State $u = 1$ \Comment{$u$ is a symmetric secondary structures counter}
		
		\While{($u \leq \mathcal{U}$)}
		\State  $y= \mathrm{False}$; \ \ $w= \mathrm{False}$  \Comment{$y$ and $w$ are indicator variables}
		
		\State $[i,j]_{q:k}^t \Leftarrow \delta$
		
		\State $\mathcal{H} = E_{L_{\mathcal{S}}} + \sum \limits_{[m,n]^t \in \delta} E([m,n]^t)$\Comment{in $\mathcal{O}(N)$, \cref{remark:timeE}}\label{line:H}
		
		\State $\mathcal{S} = (\delta, \mathcal{P}, E_{L_{\mathcal{S}}})$ \Comment{if all cases were not satisfied (just pop  $[i,j]_{q:k}^t$)}

		\If{$(t == \square)$} \Comment{backtrack in matrix $M$}
		
		\State $\mathcal{S}' = ([i,j-1]^\square.\delta, \mathcal{P}, E_{L_{\mathcal{S}}})$\Comment{base $j$ is not paired}
		
		\State $E(\mathcal{S}')  = M_{i,j-1}  + \mathcal{H}$

		\If{($E(\mathcal{S}') \leq \mathcal{B}$)}	
		
		\If{($y== \mathrm{False}$ and $E(\mathcal{S}') == \mathcal{E}$)}
		
		\State $\mathcal{S} = \mathcal{S}'$; \ \  $y = \mathrm{True}$
		
		\Else
		
		\State $\mathcal{S}' \Rightarrow \mathcal{R}_u$		
		\EndIf

		\EndIf
		
		\For{$d \in [i,j-1]$}\Comment{base $j$ is paired with some base $d$}

		\State $\mathcal{S}' = ([i,d-1]^\square.[d,j]^b.\delta, \mathcal{P}, E_{L_{\mathcal{S}}})$
		
		\State $E(\mathcal{S}')  = M_{i,d-1} + M^b_{d,j}   + \mathcal{H} $
		
		\If{($ M_{i,d-1} + M^b_{d,j}   + \mathcal{H} \leq \mathcal{B}$)}
		
		\If{($y== F$ and $E(\mathcal{S}') == \mathcal{E}$)}
		
		\State $\mathcal{S} = \mathcal{S}'$; \ \  $y = T$
		
		\Else
		
		\State $\mathcal{S}' \Rightarrow \mathcal{R}_u$		
		
		\EndIf
		\EndIf
		\EndFor
		
		\ElsIf{$(t == b)$}\Comment{backtrack in matrix $M^b$}
		
		\If{($q == \text{null}$)}\Comment{hairpin loop formation}
		
		\State$\mathcal{S}' = (\delta, \mathcal{P} \cup \{(i,j)\}, E_{L_{\mathcal{S}}} + \Delta G_{i,j}^\text{hairpin} )$ 
		
		\State $E(\mathcal{S}')  = \Delta G_{i,j}^\text{hairpin} + \mathcal{H} $

		\If{($E(\mathcal{S}') \leq \mathcal{B}$)}	
		
		\If{($y== F$ and $E(\mathcal{S}') == \mathcal{E}$)}
		
		\State $\mathcal{S} = \mathcal{S}'$; \ \  $y = T$
		
		\Else
		
		\State $\mathcal{S}' \Rightarrow \mathcal{R}_u$		
		\EndIf

		\EndIf
		\EndIf

		\algstore{backtracking1}		
		
	\end{algorithmic}
\end{algorithm}

\begin{algorithm}
	\caption*{Part 2 of backtracking algorithm}
	\begin{algorithmic}[1]
		\algrestore{backtracking1}
		\footnotesize
		
		\If{($q == \text{null}$)} \Comment{interpreting the segment to be in a valid form for internal loop case \cref{sec:backhigh}.}
		
		\State $q = \text{int}$; \ \ $k = j$, \ \ $w= \mathrm{True}$
		\EndIf	
		
		\If{($q == \text{int}$)}
		
		\State$\mathcal{S}' = ([i,j]^b_{\text{int}:k-1}.\delta, \mathcal{P}, E_{L_{\mathcal{S}}})$ \Comment{internal loop formation: base $k-1$ is unpaired}
		
		\State $E(\mathcal{S}')  = M_{i,j,k-2}^\text{b:int} + \mathcal{H}$
		
		\If{($E(\mathcal{S}') \leq \mathcal{B}$)}	
		
		\If{($y== F$ and $E(\mathcal{S}') == \mathcal{E}$)}
		
		\State $\mathcal{S} = \mathcal{S}'$; \ \  $y = T$
		
		\Else \ \  $\mathcal{S}' \Rightarrow \mathcal{R}_u$		
		
		\EndIf
		\EndIf
		\EndIf

		\If{($w == \mathrm{True}$)}
		\State $q = \text{null}$; \ \ $k = \text{null}$, \ \ $w= \mathrm{False}$ 
		\EndIf

		\If{($q == \text{null}$)}\Comment{internal loop formation: base $k-1$ is paired}
		
		\For{$d \in [i+1,k-2]$}

		\State  $\mathcal{S}' = ([d,k-1]^b.\delta, \mathcal{P} \cup \{(i,j)\}, E_{L_{\mathcal{S}}} + \Delta G_{i,d,k-1,j}^\text{interior})$
		
		\State $E(\mathcal{S}')  = M^b_{d,k-1} + \Delta G_{i,d,k-1,j}^\text{interior} + \mathcal{H}$

		\If{($E(\mathcal{S}') \leq \mathcal{B}$)}	
		
		\If{($y== F$ and $E(\mathcal{S}') == \mathcal{E}$)}
		
		\State $\mathcal{S} = \mathcal{S}'$; \ \  $y = T$
		
		\Else  \ \  $\mathcal{S}' \Rightarrow \mathcal{R}_u$

		\EndIf

		\EndIf
		\EndFor
		\EndIf
		
		\If{($q == \text{null}$)}\Comment{interpreting the segment to be in a valid form for multiloop case \cref{sec:backhigh}}
		
		\State $q = \text{mul}$; \ \ $k = j$, \ \ $w= \mathrm{True}$
		\EndIf	
		
		\If{($q == \text{mul}$)}
		
		\State $\mathcal{S}' = ([i,j]^b_{\text{mul}:k-1}.\delta, \mathcal{P}, E_{L_{\mathcal{S}}})$ \Comment{multiloop formation: base $k-1$ is unpaired}
		
		\State $E(\mathcal{S}')  = M_{i,j,k-2}^\text{b:mul} + \mathcal{H}$

		\If{($E(\mathcal{S}') \leq \mathcal{B}$)}	
		
		\If{($y== F$ and $E(\mathcal{S}') == \mathcal{E}$)}
		
		\State $\mathcal{S} = \mathcal{S}'$; \ \  $y = T$
		
		\Else
		
		\State $\mathcal{S}' \Rightarrow \mathcal{R}_u$		
		\EndIf
		
		\EndIf
		
		\EndIf
		\If{($w == \mathrm{True}$)}
		\State $q = \text{null}$; \ \ $k = \text{null}$, \ \ $w= \mathrm{False}$ 
		\EndIf	
		
		\If{($q == \text{null}$)}\Comment{multiloop formation: base $k-1$ is paired}
		
		\For{$d \in [i+1,k-2]$}

		\State  $\mathcal{S}' = ([i+1,d-1]^m.[d,k-1]^b.\delta, \mathcal{P} \cup \{(i,j)\}, \Delta G_\text{init}^\text{multi} + 2\Delta G_\text{bp}^\text{multi} + (j-k) \Delta G_\text{nt}^\text{multi} + E_{L_{\mathcal{S}}})$
		
		\State $E(\mathcal{S}')  = M^m_{i+1,d-1} + M^b_{d,k-1} + \Delta G_\text{init}^\text{multi} + 2\Delta G_\text{bp}^\text{multi} + (j-k)\Delta G_\text{nt}^\text{multi} + \mathcal{H}$

		\If{($ E(\mathcal{S}') \leq \mathcal{B}$)}	
		
		\If{($y== F$ and $E(\mathcal{S}') == \mathcal{E}$)}
		
		\State $\mathcal{S} = \mathcal{S}'$; \ \  $y = T$
		
		\Else
		
		\State $\mathcal{S}' \Rightarrow \mathcal{R}_u$		
		\EndIf

		\EndIf
		\EndFor
		\EndIf

		\algstore{backtracking2}
		
	\end{algorithmic}
\end{algorithm}

\begin{algorithm}
	\caption*{Part 3 of backtracking algorithm}
	\begin{algorithmic}[1]
		\algrestore{backtracking2}
		\footnotesize
		
		\If{($q == \text{null}$)}\Comment{exterior loop formation}
		
		\For{$z \in [i,j]$ s.t. $\eta[z+\frac{1}{2}] ==1$}

		\State  $\mathcal{S}' = ([i+1,z]^\square.[z+1,j-1]^\square.\delta, \mathcal{P} \cup \{(i,j)\}, E_{L_{\mathcal{S}}})$
		
		\State $E(\mathcal{S}')  = M_{i+1,z} +  M_{z+1,j-1} + \mathcal{H}$

		\If{($E(\mathcal{S}') \leq \mathcal{B}$)}	
		
		\If{($y== F$ and $E(\mathcal{S}') == \mathcal{E}$)}
		
		\State $\mathcal{S} = \mathcal{S}'$; \ \  $y = T$
		
		\Else
		
		\State $\mathcal{S}' \Rightarrow \mathcal{R}_u$		
		\EndIf
		\EndIf
		\EndFor
		
		\EndIf

		\ElsIf{$(t == m)$}\Comment{backtrack in matrix $M^m$}
		
		\If{($q == \text{null}$)}\Comment{multiloop case 1: base $j$ is unpaired}
		
		\State $\mathcal{S}' = ([i,j-1]^m.\delta, \mathcal{P}, E_{L_{\mathcal{S}}} + \Delta G_\text{nt}^\text{multi})$ 
		
		\State $E(\mathcal{S}')  = M_{i,j-1}^m + \Delta G_\text{nt}^\text{multi} + \mathcal{H}$

		\If{($E(\mathcal{S}')  \leq \mathcal{B}$)}	
		
		\If{($y== F$ and $E(\mathcal{S}') == \mathcal{E}$)}
		
		\State $\mathcal{S} = \mathcal{S}'$; \ \  $y = T$
		
		\Else
		
		\State $\mathcal{S}' \Rightarrow \mathcal{R}_u$		
		\EndIf
		
		\EndIf
		
		\EndIf
		\If{($q == \text{null}$)}\Comment{base $j$ is paired}
		
		\For{$d \in [i,j-1]$}

		\State $\mathcal{S}' = ([d,j]^b.\delta, \mathcal{P}, E_{L_{\mathcal{S}}} + \Delta G_\text{bp}^\text{multi} + (d-i) \Delta G_\text{nt}^\text{multi})$ 
		
		\State $E(\mathcal{S}') = M^b_{d,j} + \Delta G_\text{bp}^\text{multi}+ (d-i) \Delta G_\text{nt}^\text{multi} + \mathcal{H}$

		\If{($E(\mathcal{S}')  \leq \mathcal{B}$)}	
		
		\If{($y== F$ and $E(\mathcal{S}') == \mathcal{E}$)}
		
		\State $\mathcal{S} = \mathcal{S}'$; \ \  $y = T$
		
		\Else
		
		\State $\mathcal{S}' \Rightarrow \mathcal{R}_u$		
		\EndIf

		\EndIf

		\EndFor
		\EndIf
		
		\If{($q == \text{null}$)}\Comment{interpreting the segment to be in a valid form for multiloop case \cref{sec:backhigh}}
		
		\State $q = \text{mul}$; \ \ $k = j-1$, \ \ $w= \mathrm{True}$
		\EndIf	
		
		\If{($q == \text{mul}$)}\Comment{multiloop case 2: base $k$ is unpaired}
		
		\State  $\mathcal{S}' = ([i,j]^m_{\text{mul}:k-1}.\delta, \mathcal{P}, E_{L_{\mathcal{S}}})$ 
		
		\State $E(\mathcal{S}') = M_{i,j,k-2}^\text{m:2} + \mathcal{H}$
		
		\If{($E(\mathcal{S}') \leq \mathcal{B}$)}	
		
		\If{($y== F$ and $E(\mathcal{S}') == \mathcal{E}$)}
		
		\State $\mathcal{S} = \mathcal{S}'$; \ \  $y = T$
		
		\Else
		
		\State $\mathcal{S}' \Rightarrow \mathcal{R}_u$		
		\EndIf
		
		\EndIf
		
		\EndIf
		
		\If{($w == \mathrm{True}$)}
		\State $q = \text{null}$; \ \ $k = \text{null}$, \ \ $w= \mathrm{False}$ 
		\EndIf

		\algstore{backtracking3}
		
	\end{algorithmic}
\end{algorithm}

\begin{algorithm}
	\caption*{Part 4 of backtracking algorithm}
	\begin{algorithmic}[1]
		\algrestore{backtracking3}
		\footnotesize
		
		\If{($q == \text{null}$)}\Comment{base $k-1$ is paired}
		
		\For{$d \in [i,k-2]$}
		
		\State $\mathcal{S}' = ([i,d-1]^m.[d,k-1]^b.\delta, \mathcal{P},  \Delta G_\text{bp}^\text{multi} + (j-k+1) \Delta G_\text{nt}^\text{multi} + E_{L_{\mathcal{S}}})$

		\State $E(\mathcal{S}') = M^m_{i,d-1} +  M^b_{d,k-1} + \Delta G_\text{bp}^\text{multi} + (j-k + 1)\Delta G_\text{nt}^\text{multi} + \mathcal{H} $
		
		\If{($ E(\mathcal{S}') \leq \mathcal{B}$)}	
		
		\If{($y== F$ and $E(\mathcal{S}') == \mathcal{E}$)}
		
		\State $\mathcal{S} = \mathcal{S}'$; \ \  $y = T$
		
		\Else
		
		\State $\mathcal{S}' \Rightarrow \mathcal{R}_u$		
		\EndIf

		\EndIf
		\EndFor
		\EndIf
		\EndIf
		
		\State $(\delta, \mathcal{P}, E_{L_{\mathcal{S}}}) = \mathcal{S}$
		
		\If{$(\delta == \phi)$}\Comment{ scanning $\mathcal{S}$ is done, $\mathcal{S}$ is a fully specified structure}
		
		\State Output secondary structure $\mathcal{S}$ using its base pairs set $\mathcal{P}$
		
		\If{($\mathcal{S}$ is asymmetric)}
		\State  MFE $= \Delta G(\mathcal{S})$  \Comment{$\Delta G(\mathcal{S})$ is defined in \cref{eq:DGss}, which also equals $\mathcal{E}$ at the moment} 
		\Break  \Comment{break out of top-level while loop}
		\Else
		\State Apply rot.~symmetry correction to free energy of $\mathcal{S}$ (i.e.~$\mathcal{E}' := \mathcal{E} + k_\mathrm{B} T \log R$);  
		if $\mathcal{E}'<\mathcal{B}$ then $\mathcal{B} := \mathcal{E}'$

		\State  $\mathcal{S} = \min\limits_{z \in \left\{1,\ldots,u\right\}}\left\{\min\limits_{\mathcal{S}' \in \mathcal{R}_z}\left\{ E(\mathcal{S}')\right\}\right\}$  \Comment{\cref{eq:newS}, to ensure the sequential scanning over energy levels} 
		
		\State	$u = u+1$   \Comment{increment symmetric secondary structures counter}

		\State $(\delta, \mathcal{P}, E_{L_{\mathcal{S}}}) = \mathcal{S}$
		
		\If{($E(\mathcal{S}) > \mathcal{B}$)}\Comment{compare with the updated $\mathcal{B}$}

		\State    MFE $=  \mathcal{B}$ 
		\Break  \Comment{break out of top-level while loop}
		
		\Else
		\State $\mathcal{E} = E(\mathcal{S})$
		
		\State    MFE $=  \mathcal{E}$ \Comment{in case the upper bound $\mathcal{U}$ is exceeded}
		
		\EndIf		
		
		\EndIf	
		\EndIf
		
		\EndWhile

		\State \Return MFE   \Comment{return the true MFE}

	\end{algorithmic}
\end{algorithm}

\end{document}